\newcommand{\genericcomment}[3]{\todo[color=#1,size=\scriptsize,fancyline,author=#2]{#3}\xspace}
\newcommand{\kb}[1]{\genericcomment{violet!35}{KB}{#1}\xspace}
\newcommand{\tw}[1]{\genericcomment{blue!35}{TW}{#1}\xspace}
\newcommand{\triangleqed}{{}~{}\hfill{}$\triangleleft$}
\newcommand{\keyword}[1]{#1}
\newcommand{\keywordpl}[2]{#1}
\newcommand{\highlight}[1]{#1}
\newcommand{\setfont}[1]{\mathsf{#1}}
\newcommand{\black}[1]{\textcolor{black}{#1}}
\newcommand{\gray}[1]{\textcolor{gray}{#1}}
\newcommand{\lightgray}[1]{\textcolor{lightgray}{#1}}
\newcommand*{\powerset}[1]{\mathcal{P}\left(#1\right)}
\newcommand*{\setcomp}[2]{\left\{ #1 ~\middle|~ #2 \right\}}
\DeclareMathOperator*{\argmin}{argmin}
\DeclareMathOperator*{\mopt}{\mathsf{op}}
\newcommand{\mylambda}[1]{\ensuremath{\lambda #1.\,}}
\newcommand*{\size}[1]{|#1|}
\newcommand{\morespace}[1]{~{}#1{}~}
\newcommand{\qmorespace}[1]{\quad{}#1{}\quad}
\newcommand{\qqmorespace}[1]{\qquad{}#1{}\qquad}
\newcommand{\tforall}[1]{\text{for all}~#1}
\newcommand{\qimplies}{\quad\text{implies}\quad}
\newcommand{\qqimplies}{\qquad\text{implies}\qquad}
\newcommand{\qiff}{\qmorespace{\text{iff}}}
\newcommand{\qqiff}{\qqmorespace{\text{iff}}}
\newcommand{\qand}{\qmorespace{\text{and}}}
\newcommand{\qqand}{\qqmorespace{\text{and}}}
\newcommand{\Nats}{\mathbb{N}}
\newcommand{\nati}{\ensuremath{i}}
\newcommand{\nata}{\ensuremath{n}}
\newcommand{\natb}{\ensuremath{m}}
\newcommand{\Rats}{\mathbb{Q}}
\newcommand{\Reals}{\mathbb{R}}
\newcommand{\PosRats}{\Rats_{\geq 0}}
\newcommand{\PosRatsInf}{\PosRats^{\infty}}
\newcommand{\PosReals}{\Reals_{\geq 0}}
\newcommand{\PosRealsInf}{\PosReals^{\infty}}
\newcommand{\reala}{\ensuremath{\alpha}}
\newcommand{\proba}{\ensuremath{p}}
\newcommand{\eeq}{\morespace{=}}
\newcommand{\iin}{\morespace{\in}}
\newcommand{\qeq}{\eeq}
\newcommand{\lleq}{~{}\leq{}~}
\newcommand{\ggeq}{~{}\geq{}~}
\newcommand{\domainset}{\ensuremath{D}}
\newcommand{\domaina}{\ensuremath{d}}
\newcommand{\domainb}{\ensuremath{e}}
\newcommand{\domainc}{\ensuremath{c}}
\newcommand{\domainleq}{\ensuremath{\sqsubseteq}}
\newcommand{\ddomainleq}{\ensuremath{~{}\sqsubseteq{}~}}
\newcommand{\domaintuple}{(\domainset,\, {\domainleq})}
\newcommand{\domainsubset}{\ensuremath{S}}
\newcommand{\domainop}{\ensuremath{\Phi}}
\newcommand{\supremum}{\ensuremath{\bigsqcup}}
\newcommand{\infimum}{\ensuremath{\bigsqcap}}
\newcommand{\domainleast}{\ensuremath{\bot}}
\newcommand{\domaingreatest}{\ensuremath{\top}}
\newcommand{\lfp}{\ensuremath{\mathsf{lfp}~}}
\newcommand{\mdp}{\mathcal{M}}
\newcommand{\msa}{\ensuremath{s}}
\newcommand{\msb}{\ensuremath{t}}
\newcommand{\ms}{\ensuremath{\mathsf{S}}}
\newcommand{\mact}{\ensuremath{\mathsf{Act}}}
\newcommand{\macta}{\ensuremath{\mathfrak{a}}}
\newcommand{\mprob}{\ensuremath{P}}
\newcommand{\msuccs}[1]{\ensuremath{\mathsf{Succs}_{#1}}}
\newcommand{\mpath}{\ensuremath{\pi}}
\newcommand{\mpaths}{\ensuremath{\mathsf{Paths}}}
\newcommand{\mpathsb}[1]{\ensuremath{\mathsf{Paths}^{\leq #1}}}
\newcommand{\mpathseq}[1]{\ensuremath{\mathsf{Paths}^{= #1}}}
\newcommand{\mpathprob}[1]{\ensuremath{\mathsf{Prob}_{#1}}}
\newcommand{\mcpathprob}[1]{\ensuremath{\mathsf{Prob}}}
\newcommand{\msched}{\mathfrak{S}}
\newcommand{\mscheds}{\ensuremath{\mathsf{Scheds}}}
\newcommand{\mmscheds}{\ensuremath{\mathsf{MLScheds}}}
\newcommand{\mtarget}{\ensuremath{\mathsf{T}}}
\newcommand{\mrew}{\ensuremath{\mathsf{rew}}}
\newcommand{\mval}{\ensuremath{v}}
\newcommand{\mvals}{\ensuremath{\mathbb{V}}}
\newcommand{\mvalsset}{\ensuremath{V}}
\newcommand{\mleq}{\preceq}
\newcommand{\mmleq}{~{}\mleq{}~}
\newcommand{\mdptuple}{(\ms, \, \mact, \, \mprob)}
\newcommand{\mrprob}[3]{\mathsf{Pr}_{#1} \left( \mdp,#2 \models \text{\large$\lozenge$} #3 \right)}
\newcommand{\mminrprob}[2]{\mathsf{MinPr} \left( \mdp,#1 \models \text{\large$\lozenge$} #2 \right)}
\newcommand{\mmaxrprob}[2]{\mathsf{MaxPr} \left( \mdp,#1 \models \text{\large$\lozenge$} #2 \right)}
\newcommand{\mopmin}[2]{\tensor*[^{\textnormal{min}}_{#1}]{\Phi}{_{{#2}}}}
\newcommand{\mcop}[2]{\tensor*[^{\textnormal{}}_{#1}]{\Phi}{_{{#2}}}}
\newcommand{\mopmins}{\mopmin{\mdp}{\mrew}}
\newcommand{\mopminsiter}[1]{\tensor*[^{\textnormal{min}}_{{\mdp}}]{\Phi}{^{#1}_{{\mrew}}}}
\newcommand{\mopmax}[2]{\tensor*[^{\textnormal{max}}_{#1}]{\Phi}{_{{#2}}}}
\newcommand{\mopmaxs}{\mopmax{\mdp}{\mrew}}
\newcommand{\mopmaxsiter}[1]{\tensor*[^{\textnormal{max}}_{{\mdp}}]{\Phi}{^{#1}_{{\mrew}}}}
\newcommand{\Vars}{\ensuremath{\setfont{Vars}}}
\newcommand{\vara}{\ensuremath{x}}
\newcommand{\varb}{\ensuremath{y}}
\newcommand{\varc}{\ensuremath{z}}
\newcommand{\Vals}{\ensuremath{\setfont{Vals}}}
\newcommand{\States}{\ensuremath{\mathsf{States}}}
\newcommand{\statea}{\ensuremath{\sigma}}
\newcommand{\stateb}{\ensuremath{\tau}}
\newcommand{\Preds}{\ensuremath{\powerset{\States}}}
\newcommand{\statesubst}[2]{\ensuremath{\left[ #1 \mapsto #2 \right]}}
\newcommand{\Axa}{\ensuremath{A}} %Semantic arithmetic expression
\newcommand{\Bxa}{\ensuremath{B}}
\newcommand{\preda}{\ensuremath{P}} %Predicate A
\newcommand{\predb}{\ensuremath{Q}} %Predicate B
\newcommand{\pgcl}{\ensuremath{\setfont{pGCL}}}
\newcommand{\cc}{\ensuremath{C}}
\newcommand{\SKIP}{\ensuremath{\textnormal{\texttt{skip}}}}
\newcommand{\AssignSymbol}{\mathrel{\textnormal{\texttt{:=}}}}
\newcommand{\ASSIGN}[2]{\ensuremath{#1 \AssignSymbol #2}}
\newcommand{\KWTICK}{\ensuremath{\textnormal{\texttt{reward}}}}
\newcommand{\TICK}[1]{\ensuremath{\KWTICK\, \left(\, {#1} \,\right)}}
\newcommand{\tickrew}{r}
\newcommand{\COMPOSE}[2]{\ensuremath{{#1}{\,;}~ {#2}}}
\newcommand{\PCHOICE}[3]{\ensuremath{\left\{\, {#1} \,\right\}\mathrel{\left[\,#2\,\right]}\left\{\, {#3} \,\right\}}}
\newcommand{\ndchoicesymb}{\mathrel{\Box}}
\newcommand{\NDCHOICE}[2]{\ensuremath{\left\{\, {#1} \,\right\}\ndchoicesymb\left\{\, {#2} \,\right\}}}
\newcommand{\IFSYMBOL}{\ensuremath{\textnormal{\texttt{if}}}}
\newcommand{\ELSESYMBOL}{\ensuremath{\textnormal{\texttt{else}}}}
\newcommand{\ITE}[3]{\ensuremath{\IFSYMBOL\,\left(\, {#1} \,\right)\,\left\{\, {#2} \,\right\}\,\ELSESYMBOL\,\left\{\, {#3} \,\right\}}}
\newcommand{\WHILESYMBOL}{\ensuremath{\textnormal{\texttt{while}}}}
\newcommand{\WHILE}[1]{\ensuremath{\WHILESYMBOL \left(\, {#1} \,\right)\left\{\right.}}
\newcommand{\WHILEDO}[2]{\ensuremath{\WHILESYMBOL \left(\, {#1} \,\right)\left\{\, {#2} \,\right\}}}
\newcommand{\Confs}{\ensuremath{\mathsf{Conf}}}
\newcommand{\confa}{\mathfrak{c}}
\newcommand{\term}{\ensuremath{\Downarrow}}
\newcommand{\execrel}[4]{\ensuremath{#1 ~{}\xrightarrow{#2, #3}{}~ #4}}
\newcommand{\actl}{\ensuremath{L}}
\newcommand{\actr}{\ensuremath{R}}
\newcommand{\actn}{\ensuremath{N}}
\newcommand{\opmdp}{\mathcal{O}}
\newcommand{\opsink}{\bot}
\newcommand{\oprew}{\mrew_{\pgcl}}
\newcommand{\E}{\mathbb{E}} %expectations
\newcommand{\Eset}{\ensuremath{E}}
\newcommand{\eleq}{\sqsubseteq}
\newcommand{\eeleq}{~{}\eleq{}~}
\newcommand{\FF}{\ensuremath{X}}
\newcommand{\FG}{\ensuremath{Y}}
\newcommand{\expsubst}[2]{\ensuremath{\left[{#1} / {#2} \right]}}
\newcommand{\wpsymbol}{\mathsf{wp}}
\newcommand{\wptrans}[1]{\wpsymbol\llbracket#1\rrbracket}
\renewcommand{\wp}[2]{\wptrans{#1}\left(#2\right)}
\newcommand{\dwpsymbol}{\mathsf{dwp}}
\newcommand{\dwptrans}[1]{\dwpsymbol\llbracket#1\rrbracket}
\newcommand{\dwp}[2]{\dwptrans{#1}\left(#2\right)}
\newcommand{\awpsymbol}{\mathsf{awp}}
\newcommand{\awptrans}[1]{\awpsymbol\llbracket#1\rrbracket}
\newcommand{\awp}[2]{\awptrans{#1}\left(#2\right)}
\newcommand{\dopsymbol}{\mathsf{dop}}
\newcommand{\doptrans}[1]{\dopsymbol\llbracket#1\rrbracket}
\newcommand{\dop}[2]{\doptrans{#1}\left(#2\right)}
\newcommand{\aopsymbol}{\mathsf{aop}}
\newcommand{\aoptrans}[1]{\aopsymbol\llbracket#1\rrbracket}
\newcommand{\aop}[2]{\aoptrans{#1}\left(#2\right)}
\newcommand{\someopsymbol}{\ensuremath{\mathcal T}}
\newcommand{\someoptrans}[1]{\someopsymbol\llbracket#1\rrbracket}
\newcommand{\someop}[2]{\someoptrans{#1}\left(#2\right)}
\newcommand{\torew}[1]{\ensuremath{\mrew_{#1}}}
\newcommand{\iverson}[1]{\ensuremath{\left[ #1 \right]}}
\newcommand{\dcharfun}[2]{\tensor*[^{\dwpsymbol}_{#1}]{\Phi}{_{{#2}}}}
\newcommand{\acharfun}[2]{\tensor*[^{\awpsymbol}_{#1}]{\Phi}{_{{#2}}}}
\newcommand{\dcharfuniter}[3]{\tensor*[^{\dwpsymbol}_{#1}]{\Phi}{_{{#2}}^{#3}}}
\newcommand{\acharfuniter}[3]{\tensor*[^{\awpsymbol}_{#1}]{\Phi}{_{{#2}}^{#3}}}
\begin{document}
\title{J-P: MDP. FP. PP.\thanks{\setlength{\leftskip}{0em}%
		This research was partially supported by the ERC AdG project 787914 FRAPPANT, the European Union’s Horizon 2020 research and innovation programme under the Marie Skłodowska-Curie grant agreement No 101008233 (MISSION), the DFG Research Training Group 2236 UnRAVeL, and the DFF project AuRoRa.}}
\subtitle{
Characterizing Total Expected Rewards in Markov Decision Processes as Least Fixed Points\\
with an Application to\\
Operational Semantics of Probabilistic Programs (Technical Report)
}
%
%\titlerunning{Abbreviated paper title}
% If the paper title is too long for the running head, you can set
% an abbreviated paper title here
%
\author{%
	Kevin Batz\inst{1}\orcidID{0000-0001-8705-2564} \and
	Benjamin Lucien Kaminski\inst{2,3}\orcidID{0000-0001-8705-2564} \and
	Christoph Matheja\inst{4}\orcidID{0000-0001-9151-0441} \and
	Tobias Winkler\inst{1}\orcidID{0000-0003-1084-6408}%
}
\authorrunning{K. Batz et al.}
% First names are abbreviated in the running head.
% If there are more than two authors, 'et al.' is used.
%
\institute{%
	RWTH Aachen University, Germany\\
	\email{\{kevin.batz,tobias.winkler\}@cs.rwth-aachen.de} \and
	Saarland University, Saarland Informatics Campus, Germany\\
	\email{kaminski@cs.uni-saarland.de} \and 
	University College London, United Kingdom \and
	Technical University of Denmark, Kgs. Lyngby, Denmark\\
	\email{chmat@dtu.dk}}

%\institute{RWTH Aachen University, Germany\\
%\email{\{kevin.batz,tobias.winkler\}@cs.rwth-aachen.de}}
%\institute{Saarland University, Saarland Informatics Campus, Germany \and University College London, United Kingdom\\
%\email{kaminski@cs.uni-saarland.de}}
%\institute{Saarland University, aaaaSaarland Informatics Campus, Germany \and University College London, United Kingdom\\
%\email{kaminski@cs.uni-saarland.de}}
%%\institute{Unknown University, Some Country\\
%%\email{somethingmatheja@some-uni.tld}}
%
\maketitle              % typeset the header of the contribution
\begin{abstract}
Markov decision processes (MDPs) with rewards are a widespread and well-studied model for systems that make both probabilistic and nondeterministic choices.
A fundamental result about MDPs is that their minimal and maximal expected rewards satisfy Bellmann's optimality equations.
For various classes of MDPs -- notably finite-state MDPs, positive bounded models, and negative models -- expected rewards are known to be the least solution of those equations.
However, these classes of MDPs are too restrictive for probabilistic program verification. 
In particular, they assume that all rewards are finite.
This is already not the case for the expected runtime of a simple probabilisitic program modeling a 1-dimensional random walk.

In this paper, we develop a generalized least fixed point characterization of expected rewards in MDPs without those restrictions.
Furthermore, we demonstrate how said characterization can be leveraged to prove weakest-preexpectation-style calculi sound with respect to an operational MDP model.
\end{abstract}

\section{Introduction}
%!TEX root = ./main.tex

% What are probabilistic programs?
(Discrete) probabilistic programs are ordinary programs which can base control flow on the outcome of coin flips.
They are ubiquitous in computer science. They appear, e.g., as implementations of randomized algorithms and communication protocols, as models of physical or biological processes, and as descriptions of statistical models used in artificial intelligence; see~\cite{barthe2020foundations} for an overview.

% What questions do we look at for probabilistic programs?
Typical questions in the design and verification of probabilistic programs are concerned with quantifying their expected -- or average -- behavior. 
Examples include the expected number of retransmissions of a protocol, the probability that a particle reaches its destination, and the expected resource consumption of an algorithm.
In the presence of truly \emph{nondeterministic} choices that the program does \emph{not} resolve by flipping coins, a typical goal is to \emph{maximize} the expected reward earned by a system, or to \emph{minimize} its expected cost over all possible ways to resolve those nondeterministic choices.

\subsubsection*{J-P.}
At least over the past decade, Joost-Pieter has pushed the boundaries of probabilistic program verification by studying (1) the semantics of probabilistic programs with intricate features, (2) rules for reasoning about them, and (3) algorithms for automating their verification.
A reoccurring question -- indeed, almost a leitmotif -- in Joost-Pieter's probabilistic program research is:%
\begin{center}
	\enquote{\emph{What about loops?}}
\end{center}%
While quickly stated, it is a profound question that quickly unravels into concrete research questions which continue to drive Joost-Pieter, such as:%
\begin{itemize}
	\item \emph{How to verify probabilistic programs with unbounded loops or recursion?}\smallskip
	%\item \enquote{\emph{How can we reason about the conditional expected behaviour of probabilistic programs that observe events during their executions?}}
	\item \emph{How can model checking benefit from deductive verification and vice versa?}\smallskip
	\item \emph{How to make probabilistic program verification a \enquote{push-button} technology?}
\end{itemize}
To address these questions, Joost-Pieter (J.-P.) worked extensively on connecting probabilistic programs (PPs) and Markov decision processes (MDPs) via fixed point (FP) theory.
%We briefly go over each of those aspects.

\subsubsection{MDP.}
Markov decision processes~\cite{BK08,puterman} with rewards (or costs) are a standard mathematical model for systems that can make both discrete probabilistic and truly nondeterministic choices.
They are for probabilistic systems what labelled transition systems are for ordinary ones.
From a probability theory perspective, MDPs are well-understood, although the standard treatment involves -- perhaps surprisingly -- advanced machinery like cylinder sets.
Nevertheless, MDPs allow for an intuitive %\mbox{-- kind of} automata-theoretic -- 
understanding of expected rewards in terms possible execution paths through an MDP.
For finite MDPs, probabilistic model checkers, such as \textsc{Storm}~\cite{DBLP:conf/cav/DehnertJK017},
can compute minimal and maximal rewards fully automatically.

MDPs are a natural model for assigning \emph{operational semantics} to probabilistic programs, just like transition systems are a natural model for assigning operational semantics to ordinary programs.
For example, the \textsc{Prism} language~\cite{DBLP:conf/cpe/KwiatkowskaNP02} -- a standard format for probabilistic model checking problems -- can be viewed as a probabilistic programming language similar to Dijkstra's guarded commands~\cite{DBLP:journals/cacm/Dijkstra75}. Since it requires that all variables have bounded domains, every \textsc{Prism} program corresponds to a finite-state MDP.

\subsubsection{PP.}
Verifying programs written in more general probabilistic programming languages than \textsc{Prism} -- think: with unbounded variable domains -- is arguably harder than ordinary program verification~\cite{hardness2,hardness1}.
In particular, those programs correspond to infinite-state MDPs~\cite{gretz_op}, just like arbitrary programs correspond to infinite transition systems.
Hence, model checking by explicit state space exploration becomes infeasible.

To reason about non-probabilistic programs, one may use \emph{deductive verification} approaches, such as Hoare logic or weakest preconditions.
For probabilistic programs, Kozen was the first to propose a dynamic logic~\cite{kozen83,kozen85} for reasoning about expected values of probabilistic programs with unbounded loops.
McIver and Morgan~\cite{DBLP:series/mcs/McIverM05} extended Kozen's approach by incorporating nondeterministic choices.
They also coined the term \emph{weakest preexpectation} calculus in reminiscence of Dijkstra's weakest preconditions.
Intuitively, the weakest preexpectation $\wp{\cc}{\FF}$ of a program $\cc$ and a random variable $\FF$ maps every initial program state $\sigma$ to the (minimal or maximal) expected value of $\FF$ after termination of~$\cc$ on~$\sigma$.  
The main advantage of such \emph{$\wpsymbol$-style} calculi is that they can be defined as \emph{symbolic} -- and even entirely \emph{syntactic}~\cite{DBLP:journals/pacmpl/BatzKKM21} -- expressions \emph{by induction on the program structure}.
They are in that sense \emph{compositional}.
For example, the weakest preexpectation of the program $\PCHOICE{\cc_1}{p}{\cc_2}$, which executes~$\cc_1$ with probability $p$ and $\cc_2$ with probability $(1-p)$, is given by 
\begin{align*}
\wp{\PCHOICE{\cc_1}{p}{\cc_2}}{\FF} \qeq p \cdot \wp{\cc_1}{\FF} + (1-p) \cdot \wp{\cc_2}{\FF}~.
\end{align*}
That is, if we already know the expected value of random variable $\FF$ after termination of the subprograms $\cc_1$ and $\cc_2$, respectively, then the expected value of $\FF$ after termination of $\PCHOICE{\cc_1}{p}{\cc_2}$ is a weighted sum of those expected values.
%Similarly, for the (demonic) nondeterministic choice $\NDCHOICE{\cc_1}{\cc_2}$, we take the minimum over those expected values:
%\begin{align}
%  \wp{\NDCHOICE{\cc_1}{\cc_2}}{\FF} \qeq 
%      \min\,\{ \wp{\cc_1}{\FF},~\wp{\cc_2}{\FF} \}
%\end{align}
%The above two examples illustrate that $\wpsymbol$-style calculi perform \emph{backward reasoning}, i.e. they push backward through the program $\FF$ and manipulate it such that it captures the effect of the pushed-over part.
%This becomes apparent for the sequential composition $\COMPOSE{\cc_1}{\cc_2}$, which executes $\cc_1$ followed by $\cc_2$:\cm{feel free to beautify this if we keep it}
%\begin{align}
%\wp{\COMPOSE{\cc_1}{\cc_1}}{\FF} \qeq 
%      \wp{\cc_1}{\wp{\cc_2}{\FF}}	
%\end{align}
%Similar symbolic expressions can be derived for other programming features, including unbounded loops.
Hence, $\wpsymbol$-style calculi enable formal reasoning about probabilistic programs. % with unbounded data domains and loops.

Joost-Pieter's works on PP verification feature advanced $\wpsymbol$-style calculi for reasoning about, amongst others, conditioning~\cite{DBLP:journals/entcs/0001KKOGM15}, expected runtimes \cite{DBLP:journals/jacm/KaminskiKMO18,DBLP:conf/esop/KaminskiKMO16}, probabilistic pointer programs~\cite{qsl_popl}, and amortized expected runtimes~\cite{aert}.
Joost-Pieter also pushed for automating verification,  e.g. through techniques for learning loop invariants~\cite{DBLP:conf/qest/GretzKM13,DBLP:conf/tacas/BatzCJKKM23}, specialized proof rules~\cite{DBLP:conf/esop/BatzKKM18,DBLP:journals/pacmpl/McIverMKK18,DBLP:journals/pacmpl/HarkKGK20}, novel verification algorithms~\cite{DBLP:conf/cav/BatzCKKMS20,DBLP:conf/cav/BatzJKKMS20}, and by developing an automated verification infrastructure~\cite{DBLP:journals/pacmpl/SchroerBKKM23}.

Throughout those works, probabilistic programs turned out to be full of subtleties.
For example, the semantics of allowing observations inside of loops is non-trivial~\cite{DBLP:journals/entcs/0001KKOGM15}. The same holds for allowing variables with mixed-sign data domains~\cite{DBLP:conf/lics/KaminskiK17}.
Joost-Pieter thus always strongly advised that the developed%
\begin{center}
	\emph{calculi must be proven sound with respect to an \underline{intuitive} operational model}.
\end{center}%
More technically put, a $\wpsymbol$-style calculus should be accompanied by a soundness theorem stating that, for every initial program state, the value obtained from the calculus corresponds to the (minimal or maximal) expected reward of an MDP representing the program's operational semantics.\footnote{%
	There are of course other reasonable % -- partly arguably operational -- 
	semantical models of probabilistic programs that one could consider as a ``ground truth'', such as 
	\emph{probabilistic Turing machines}~\cite{santos1969probabilistic}, 
	\emph{stochastic Petri nets}~\cite{westphal1970supervisory},
	\emph{measure transformers}~\cite{DBLP:conf/focs/Kozen79}, 
	\emph{probabilistic control flow graphs}~\cite{goswami1993design}, 
	\emph{stochastic lambda calculi}~\cite{DBLP:conf/popl/RamseyP02},
	or even formal approaches where one assumes that \emph{the de facto semantics is the sampler}~\cite{DBLP:conf/lics/Dahlqvist0S23}, to name only a few.
	But -- borrowing from the Rifleman's Creed~\cite{rupertus1942}:%
	\begin{center}
		\textsl{MDPs are Joost-Pieter's operational model.\\
	There are many like it, but this one is Joost-Pieter's.}
	\end{center}%
}

When attempting to connect $\wpsymbol$-style calculi to expected rewards of suitable MDPs, one encounters the following (solved) issue: An operational semantics models how a program is executed, i.e. it moves forward through a program, whereas $\wpsymbol$-style calculi push suitable objects (usually random variables) \emph{backward} through a program.
In particular, for nondeterministic choices, a semantics based on MDPs resolves all nondeterminism upfront (through a scheduler), whereas $\wpsymbol$-style calculi resolve nondeterminism on demand.

A similar issue arises when reasoning about unbounded loops. Both approaches deal with loops by \emph{unfolding} them, i.e. they treat the following two programs as equivalent:
\begin{align*}
  \WHILEDO{\textit{guard}}{\cc} 
  \quad\equiv\quad 
  \ITE{\textit{guard}}{\COMPOSE{\cc}{\WHILEDO{\textit{guard}}{\cc}}}{\SKIP}
\end{align*}
As is standard for operational semantics~\cite{winskel}, the MDP semantics unfolds loops on demand.
By contrast, $\wpsymbol$ calculi capture all unfoldings at once through taking the least \emph{fixed point} of the above equivalence, as is standard for denotational semantics~\cite{winskel}.

\subsubsection{FP.}

To deal with the above issues, soundness proofs rely on the fact that expected rewards are the least solution of Bellman's optimality equations~\cite{Bellman1957MDPs}.
In other words, both the minimal and the maximal expected reward of an MDPs can be characterized as the least fixed point of a Bellmann operator -- a recursive definition of expected rewards that is closer to $\wpsymbol$-style calculi than the more intuitive path-based definition.

Least fixed point characterizations of expected rewards have been extensively studied. 
For example, they are considered in \cite{BK08} for finite-state MDPs. Puterman gives least fixed point characterizations of (undiscounted) expected rewards for infinite-state MDPs with non-negative rewards (cf.~\cite[Theorems 7.2.3a and 7.3.3a]{puterman}).
However, there are slight mismatches between his characterizations and the precise theorems required for proving $\wpsymbol$-style calculi sound:

The fixed point characterization of \emph{maximal} expected rewards is close to Puterman's \emph{positive bounded models}~\cite[Section 7.2]{puterman}.
But, those models assume that all rewards are finite, see \cite[Assumption 7.2.1]{puterman}. This assumption is unrealistic when using $\wpsymbol$-style calculi, where $\infty$ appears naturally in the value domain of certain random variables of interest.
Puterman also requires that \emph{expected values} are finite, which is problematic in two ways: 
First, discharging this assumption requires a technique for analyzing the expected value of a probabilistic program -- exactly what we want to use weakest preexpectations for. To prove a verification technique sound, we would thus first need another sound verification technique for probabilistic programs to discharge all assumptions.
Second, infinite expected values occur naturally when considering probabilistic programs. 
For example, a probabilistic program may terminate with probability one but still have an infinite expected runtime~\cite{DBLP:conf/esop/KaminskiKMO16}.

For \emph{minimal} expected rewards, the fixed point characterization is close to the one given by Puterman for \emph{negative models}~\cite[Section 7.3]{puterman}. 
However, Puterman again rules out infinite (expected) rewards.%, which is unrealistic in probabilistic program verification.

\subsubsection*{Goal of this Paper.}

To avoid ad-hoc arguments in future soundness proofs, we give least fixed point characterizations of expected rewards of MDPs with assumptions suitable for probabilistic program verification, in particular allowing for infinite rewards.

\subsubsection*{Contributions and Outline.}

Our contributions can be summarized as follows:

\begin{itemize}
	\item After briefly recapitulating MDPs with assumptions suitable for probabilistic program verification in \Cref{sec:prelim:mdp}, we formalize minimal and maximal total expected rewards as sums-over-all-paths in \Cref{sec:expected_rewards_def}. \Cref{sec:reach_prob} shows how those expected rewards relate to reachability probabilities for MDPs.
	\item Our main results are the least fixed point characterizations of minimal and maximal rewards (\Cref{thm:mdp:bellman_correct}) in \Cref{sec:rewards_lfp}; the necessary background on fixed point theory is presented in \Cref{sec:prelim:domain_theory}. 
	%\item We give our least fixed point characterizations of minimal and maximal rewards in \Cref{sec:rewards_lfp}; the necessary background on fixed point theory is presented in \Cref{sec:prelim:domain_theory}.
	\item We show that optimal schedulers for minimal expected rewards always exists in \Cref{sec:exrew_schedulers}. 
	For maximal expected rewards, this is not the case.
	\item We demonstrate the usefulness of our least fixed point characterizations by proving the soundness of two weakest preexpectation calculi for probabilistic programs with user-defined reward structures in \Cref{sec:pp_tick}. 
\end{itemize}

\subsubsection*{Origins of this Paper.} Parts of this article are an extension of \cite[Chapter 2.2]{batz_diss}. More specifically, \Cref{lem:mdp:bellman-step} is a generalization of \cite[Lemma 2.8]{batz_diss} and \Cref{thm:mdp:bellman_correct} is a generalization of \cite[Theorem 2.9]{batz_diss} from expected reachability-rewards to total expected rewards. 

\label{sec:intro}

\section{Markov Decision Processes}
\label{sec:prelim:mdp}
%!TEX root = ./main.tex

An MDP can be thought of as a probabilistic extension of a labelled transition system. 
We consider here \emph{finitely branching} processes with \emph{countable} state spaces. 
It goes without saying that we follow the presentation of \cite[Chapter 10]{BK08}.
If the reader is familiar with \cite{BK08}, almost everything in this section should be completely standard, except that we do not designate an initial state.
The latter is motivated by the fact that the probabilistic programs we consider later in \Cref{sec:pp_tick} do not have a fixed initial state either.%
%What is perhaps not always standard is that we do not designate an initial state for a Markov decision process.%
%
%
\begin{definition}[MDPs]
	\label{def:prelim:mdps}
	A \emph{\highlight{Markov decision process (MDP)}}%
	\belowdisplayskip=0pt%
	\begin{align*}
		\highlight{\mdp} \qeq \mdptuple~,
	\end{align*}%
	\normalsize%
	is a structure where:%
	\begin{enumerate}
		\item 
			$\highlight{{\ms}}$ is a countable set of \emph{\highlight{states}}.
			If $\ms$ is finite, then we say that $\mdp$ is \emph{finite-state}. Otherwise, we say that $\mdp$ is \emph{infinite-state}.
		\item 
			$\highlight{{\mact}}$ is a finite set of \emph{\highlight{actions}}.
%		%
			\item 
				$\highlight{\mprob} \colon \ms \times \mact \times \ms \to [0,1]$ is a \emph{\highlight{transition probability function}} such that:%
				\begin{enumerate}
					\item 
						For all states $\msa \in \ms$ and actions $\macta \in \mact$, the function $\mprob(\msa,\, \macta,\, {-})$ is either a (full) probability distribution over finitely many successor states, or a null distribution.
						Formally:
				\abovedisplayskip=0pt%
				\begin{align*}
					\gray{
						\underbrace{
							\black{
								\setcomp{\msa' \in \ms}{\mprob(\msa,\macta,\msa') > 0} \textnormal{ is finite}
							}
						}_{
							\textnormal{\tiny probability distribution has finite support}
						}
					}
					\qqand
					\gray{
						\overbrace{
							\black{
								\sum_{\msa'\in\ms} \mprob(\msa,\macta,\msa') \iin \{0,1\}
							}
						}^{
							\mathclap{\substack{\textnormal{\tiny $\macta$ maps to full probability distribution} \\ \substack{\textnormal{\tiny or null distribution}}}}
						}
					}
				\end{align*}%
				\normalsize%
				We say that action $\macta$ is \emph{enabled} in state $\msa$ if $\sum_{\msa'\in\ms} \mprob(\msa,\macta,\msa') = 1$ and denote the set of \highlight{\emph{actions enabled in $\msa$}} by \highlight{$\mact(\msa)$}.
			\item 
				Every state $\msa \in \ms$ has at least one enabled action, i.e.\ $|\mact(\msa)| \geq 1$.
		\end{enumerate}
	\end{enumerate}%
	If $\mprob(\msa,\macta,\msa')>0$, then $\msa'$ is called an \emph{$\macta$-successor of $\msa$} and we denote the finite \highlight{\emph{set of $\macta$-successors of $\msa$}} by \highlight{$\msuccs{\macta}(\msa)$}. 
\end{definition}%
\begin{example}[Markov Decision Processes]
	\Cref{fig:running_example_mdp} depicts an infinite-state MDP with states $\ms = \{s_i,\, s^L_i,\, s^R_i \mid i \in \Nats \} \cup \{\opsink\}$ and actions $\mact = \{\actn,\,\actl,\,\actr\}$.
	The transition probability function $\mprob$ takes, e.g., values $\mprob(s_0,\, \actl,\, s_0^L) = 1$ and $\mprob(s_0,\, \actr,\, s_0^R) = \nicefrac 1 2$, etc.
	\triangleqed%
\end{example}%
\begin{figure}[t]%
	\centering%
	\begin{tikzpicture}[on grid, node distance=10mm and 16mm,actstyle/.style={draw,circle,fill=black,inner sep=1pt},every state/.style={inner sep=0pt, minimum size=6mm}]
    \node[state] (0) {$s_0$};
    \node[state,right=of 0,label=90:\fbox{$r\vphantom{2}$}] (1) {$s_0^L$};
    \node[state,right=of 1] (2) {$s_1$};
    \node[state,right=of 2,label=90:\fbox{$r\vphantom{2}$}] (3) {$s_1^L$};
    \node[state,right=of 3] (4) {$s_2$};
    \node[state,right=of 4,label=90:\fbox{$r\vphantom{2}$}] (5) {$s_2^L$};
    \node[state,right=of 5] (6) {$s_3$};
    \node[right=of 6] (dots) {$\cdots$};
    
    \node[actstyle,below=of 0] (c0) {};
    \node[actstyle,below=of 2] (c1) {};
    \node[actstyle,below=of 4] (c2) {};
    \node[actstyle,below=of 6] (c3) {};
    
    \node[state,below=of c0,label=180:\fbox{$1$}] (7) {$s_0^R$};
    \node[state,below=of c1,label=180:\fbox{$2$}] (8) {$s_1^R$};
    \node[state,below=of c2,label=0:\fbox{$3$}] (9) {$s_2^R$};
    \node[state,below=of c3,label=0:\fbox{$4$}] (10) {$s_3^R$};
    
    \node[state,below=30mm of 3] (sink) {$\opsink$};
    
    \draw[->] (0) edgenode[actstyle,pos=0.5] {} node[pos=0.25,above]{$\actl$} node[pos=0.75,above]{} (1);
    \draw[->] (1) edgenode[actstyle,pos=0.5] {} node[pos=0.25,above]{$\actn$} node[pos=0.75,above]{} (2);
    \draw[->] (2) edgenode[actstyle,pos=0.5] {} node[pos=0.25,above]{$\actl$} node[pos=0.75,above]{} (3);
    \draw[->] (3) edgenode[actstyle,pos=0.5] {} node[pos=0.25,above]{$\actn$} node[pos=0.75,above]{} (4);
    \draw[->] (4) edgenode[actstyle,pos=0.5] {} node[pos=0.25,above]{$\actl$} node[pos=0.75,above]{} (5);
    \draw[->] (5) edgenode[actstyle,pos=0.5] {} node[pos=0.25,above]{$\actn$} node[pos=0.75,above]{} (6);
    \draw[->] (6) edgenode[actstyle,pos=0.5] {} node[pos=0.25,above]{$\actl$} node[pos=0.75,above]{} (dots);
    
    \draw[->] (0) edge node[left]{$\actr$} (c0);
    \draw[->] (2) edge node[left]{$\actr$} (c1);
    \draw[->] (4) edge node[left]{$\actr$} (c2);
    \draw[->] (6) edge node[left]{$\actr$} (c3);
    
    \draw[->] (c0) edge node[right]{$\nicefrac 1 2$} (7);
    \draw[->] (c1) edge node[right]{$\nicefrac 1 2$} (8);
    \draw[->] (c2) edge node[right]{$\nicefrac 1 2$} (9);
    \draw[->] (c3) edge node[right]{$\nicefrac 1 2$} (10);
    
    \draw[->] (c0) edge[bend right=17] node[below]{$\nicefrac 1 2$} (2);
    \draw[->] (c1) edge[bend right=17] node[below]{$\nicefrac 1 2$} (4);
    \draw[->] (c2) edge[bend right=17] node[below]{$\nicefrac 1 2$} (6);
    \draw[->] (c3) edge[bend right=34] node[below]{$\nicefrac 1 2$} (dots);
    
    \draw[->,bend right=10] (7) edgenode[actstyle,pos=0.5] {} node[pos=0.25,below]{$\actn$} node[pos=0.75,below]{} (sink);
    \draw[->] (8) edgenode[actstyle,pos=0.5] {} node[pos=0.25,below]{$\actn$} node[pos=0.75,below]{} (sink);
    \draw[->] (9) edgenode[actstyle,pos=0.5] {} node[pos=0.25,below]{$\actn$} node[pos=0.75,below]{} (sink);
    \draw[->,bend left=10] (10) edgenode[actstyle,pos=0.5] {} node[pos=0.25,below]{$\actn$} node[pos=0.75,below]{} (sink);
    
    \path[->]
    (sink) edge[loop above,looseness=15] node[pos=0.5,actstyle,yshift=-2pt]  {} node[pos=0.25,left] {$\actn$} node[pos=0.75, right] {} (sink)
    ;

\end{tikzpicture}%
	\caption{%
		Example MDP with infinite state space.
		Omitted probabilities are 1.
		Numbers in boxes next to states -- like \fbox{$r\vphantom{2}$} or \fbox{$2$} -- are rewards (see \Cref{sec:expected_rewards_def}).%
	}%
	\label{fig:running_example_mdp}%
\end{figure}
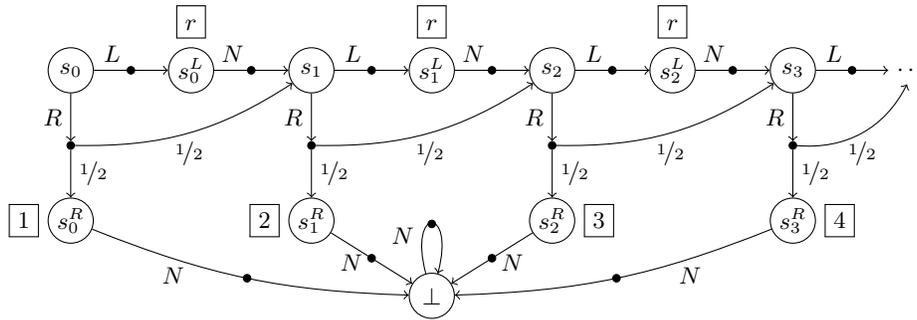%
An MDP moves between states by nondeterministically selecting an action that is enabled in the current state. 
The next state is then chosen probabilistically according to the transition probability function for that action.
To \enquote{execute} an MDP, we move from state to state, thus creating \emph{paths} of visited states.
Towards a formal definition, we denote by $\ms^+$ the set of \emph{non-empty} sequences over $\ms$.
\begin{definition}[Paths]%
	Let $\mdp = \mdptuple$ be an MDP.% and let $\mtarget \subseteq \ms$.\blk{$\mtarget$ not needed in this def, no?}
	\begin{enumerate}
		\item A \emph{\highlight{path}} in $\mdp$ is a f\underline{inite}, non-empty sequence of states 
		\[
		\highlight{\mpath} \eeq \msa_0\ldots\msa_\nata \iin \ms^+
		\]
		such that there is for each state $\msa_\nati$, with $\nati\in\{0,\, \ldots,\,\nata-1\}$, an enabled action $\macta \in \mact(\msa_\nati)$ such that $\msa_{\nati+1}$ is an $\macta$-successor of $\msa_\nati$, i.e., $\msa_{\nati+1} \in \msuccs{\macta}(\msa_\nati)$. %We denote the \emph{length} of $\mpath$ by $\size{\mpath}=\nata$. 
        \item We define the set of \emph{\highlight{all paths of length exactly $\nata\in\Nats$ starting in $\msa \in \ms$}} as 
        \[
        \highlight{\mpathseq{\nata} (\msa)} \qeq \setcomp{\mpath}{\textnormal{$\mpath = \msa_0\ldots\msa_\nata$ is a path in $\mdp$ and }\msa_0 = \msa}~.
        \]
		\item We define the set of \emph{\highlight{paths of length at most $\nata\in\Nats$ starting in $\msa \in \ms$}} as 
		\[
		\highlight{\mpathsb{\nata} (\msa)} \qeq \setcomp{\mpath}{\textnormal{$\mpath = \msa_0\ldots\msa_m$ is a path in $\mdp$ and }\msa_0 = \msa~\textnormal{and}~m\leq n}~.
		\]
		\item We define the set of \emph{\highlight{all (finite) paths starting in state $\msa \in \ms$}} as 
		\[
		\highlight{\mpaths(\msa)} \qeq \bigcup_{\nata \in \Nats} \mpathsb{\nata} (\msa)~.
		\]
	\end{enumerate}
\end{definition}%
Note that all of the above sets of paths are countable. 
Our goal is to formalize expected rewards in terms of probabilities and rewards of paths.
However, if multiple actions are enabled for some state $s$, the probability of moving from $s$ to some other state $s'$ -- and the probability of paths including $s s'$  -- depends on the nondeterminstically chosen action.
To assign \emph{unique} probability measures to paths, we must first resolve all nondeterminism. To perform this resolution, we employ \emph{schedulers}, which determine the next action based on the previously visited states.
%Given an MDP, we would like to speak of reachability probabilities and expected rewards (after we have equipped MDPs with a reward structure), etc.
%However, the nondeterminism caused by the choice of which enabled action to take when progressing through the MDP prevents us from associating a determined probability measure to the set of paths in the MDP.
%To obtain such a well-determined probability measure, it is common to consider \emph{schedulers} which resolve the nondeterminism by determining an action that is to be taken at each state.%
%
%
\begin{definition}[Schedulers]
	%Let $\mdp = \mdptuple$ be an MDP. 
	A \emph{\highlight{scheduler}} for an MDP $\mdp = \mdptuple$ is a function%
	\footnote{We consider only \emph{deterministic} schedulers. Some applications also consider \emph{randomized} schedulers.}
	\[
		\highlight{\msched} \colon \ms^+ \to \mact
	\]% 
	which for each history of previously visited states $\msa_0\ldots\msa_\nata \in \ms^+$ always chooses an action enabled in $\msa_\nata$, i.e. $\tforall{\msa_0\ldots\msa_\nata \in \ms^+}$, we have $\msched(\msa_0\ldots\msa_\nata ) \in \mact(\msa_\nata)$.
	We denote the \emph{\highlight{set of all schedulers for $\mdp$}} by $\highlight{\mscheds}$ because $\mdp$ will always be clear from context. 
	
	A scheduler $\highlight{\msched}$ is called \emph{\highlight{memoryless}}, if its choices depend solely on the last state of the history, i.e. for all $\msa_0\ldots\msa_\nata \in \ms^+$ and $\msb_0\ldots\msb_\natb \in \ms^+$, we have
	\begin{align*}
		\qquad \msa_\nata = \msb_\natb 
		\qimplies 
		\msched(\msa_0\ldots\msa_\nata) = \msched(\msb_0\ldots\msb_\natb ) ~.
	\end{align*}
	Otherwise, $\msched$ is called \emph{\highlight{history-dependent}}.
    A memoryless scheduler can be identified with a function of type $\ms \to \mact$. We denote the \emph{\highlight{set of all memoryless schedulers for $\mdp$}} by $\highlight{\mmscheds}$ (again $\mdp$ will be clear from context).%
\end{definition}%
A scheduler $\msched$ resolves the nondeterminism in an MDP $\mdp$ as follows: 
Starting in some initial state $\msa_0$, $\msched$ first chooses \emph{deterministically} the action $\msched(\msa_0)$. 
If the thereafter \emph{probabilistically} drawn $\msched(\msa_0)$-successor of $\msa_0$ is $\msa_1$, then $\msched$ next  chooses action $\msched(\msa_0\msa_1)$, and so on. 
If $\msched$ is memoryless, the chosen action only depends on the current state $\msa_1$.

Since a scheduler $\msched$ resolves the nondeterminism in $\mdp$, it makes now sense to speak about \emph{\underline{the} probability of a path in $\mdp$ under $\msched$}.%
\begin{definition}[Path Probabilities]%
	\label{def:prelims:inducedprob}%
	Let $\mdp = \mdptuple$ be an MDP, let \mbox{$\msched \in \mscheds$} be a scheduler, and let $\msa_0\ldots \msa_\nata \in \ms^+$. The \emph{\highlight{probability of $\msa_0\ldots \msa_\nata $ in $\mdp$ under $\msched$}} is%defined as
	\abovedisplayskip=0pt%
	\begin{align*}
		\highlight{\mpathprob{\msched}(\msa_0\ldots \msa_\nata)} \qeq \prod_{\nati = 0}^{\nata-1} \mprob(\msa_\nati, \msched(\msa_0\ldots\msa_\nati), \msa_{\nati +1})~,
	\end{align*}%
	\normalsize%
	where the empty product (i.e., in case $n=0$) equals $1$.%
\end{definition}%
%
%
%\begin{remark}\label{rem:path-probs}\cm{please double check}
%  Given an MDP and a scheduler $\msched$, the probability of a path $\msa_0 \ldots \msa_\nata$ under a scheduler $\msched$ should be understood as a sensible probability measure with respect to all paths starting in the same state $\msa_0$ and of the same length $\nata$, i.e. all paths in $\mpathseq{\nata} (\msa_0)$.
%  
%  In particular, comparing the probabilities of two paths with \emph{different} initial states makes no sense as, for example, all paths of length one have probability one.
%  Similarly, the sum of the probabilities of all paths starting in $\msa_0$ \emph{up to} (as opposed to \emph{equal to}) some fixed length $n > 0$ may be greater than one.
%  
%  Alternatively, one can derive the path probabilities from \Cref{def:prelims:inducedprob} by constructing a suitable \emph{probability space} over $\mpathseq{\nata} (\msa_0)$ (cf.~\cm{What is the best ref here; Tobi? Ash?}).
%  However, we believe that the path-based approach is more intuitive.
%  This is particularly true when using MDPs to model program semantics, since paths can be thought of as program executions.  
%\end{remark}
%
\begin{remark}
    \label{rem:path-probs}
    Somewhat contrary to what the notation suggests, $\mpathprob{\msched}$ as defined above is \emph{not} actually a probability distribution over the (countable) set of all finite paths.
    This is for two reasons:
    First, as our MDPs have no initial state, we have $\mpathprob{\msched}(\msa) = 1$ for \emph{every} $\msa \in \ms$.
    It is thus clear that $\mpathprob{\msched}$ cannot be a sensible distribution if $|\ms| > 1$.
    But even for a fixed initial state $\msa_0 \in \ms$, $\mpathprob{\msched}$ is not a distribution over $\mpaths(\msa_0)$.
    In fact, we have $\sum_{\mpath \in \mpaths(\msa_0)} \mpathprob{\msched}(\mpath) = \infty$.
    However, $\mpathprob{\msched}$ \emph{is} a meaningful  distribution over $\mpathseq{\nata}(\msa_0)$ for every $\nata \in \Nats$.
    The path probability $\mpathprob{\msched}(\msa_0\ldots \msa_\nata)$ can then be understood as the probability that the path $\msa_0\ldots \msa_\nata$ occurs assuming that
    \begin{itemize}
        \item $\msched$ is used to resolve all nondeterminism,
        \item the MDP is started in $\msa_0$, and
        \item the MDP is executed for exactly $\nata$ steps.
    \end{itemize}
    Finally, we remark that constructing a probability measure over (measurable) sets of infinite paths in line with our path probabilities is standard, see~\cite[Chaper 10]{BK08}.
    We omit such constructions to keep the presentation more elementary.
%    Finally, we remark that, for this paper, it is not necessary to define a probability measure over (measurable) sets of infinite paths.
%    Though standard in the literature, the construction of such a measure relies on notions from measure theory which we omit here to keep the presentation as elementary as possible.
%    For details, see~\cite[Chaper 10]{BK08}.%
    \triangleqed%
\end{remark}

\section{Expected Rewards}
\label{sec:expected_rewards_def}
%!TEX root = ./main.tex

% [TW] new macros for this section
%--- expected  rewards with \mdp
\newcommand{\newmexprewb}[4]{\mathsf{ER}_{#1}^{= #2} \left( \mdp,#4, #3  \right)}
\newcommand{\newmexprew}[3]{\mathsf{ER}_{#1} \left( \mdp,#3,#2 \right)}
\newcommand{\newmminexprewb}[3]{\mathsf{MinER}^{= #1} \left( \mdp,#3,#2 \right)}
\newcommand{\newmmaxexprewb}[3]{\mathsf{MaxER}^{= #1} \left( \mdp,#3,#2 \right)}
\newcommand{\newmminexprew}[2]{\mathsf{MinER} \left( \mdp,#2,#1 \right)}
\newcommand{\newmmaxexprew}[2]{\mathsf{MaxER} \left( \mdp,#2, #1 \right)}
\newcommand{\newmcexprew}[2]{\mathsf{ER} \left( \mdp,#2,#1 \right)}
\newcommand{\newmcexprewb}[3]{\mathsf{ER}^{= #1} \left( \mdp,#3, #2 \right)}

%--- expected rewards without \mdp
\newcommand{\newmexprewbs}[4]{\mathsf{ER}_{#1}^{= #2} \left( #4, #3  \right)}
\newcommand{\newmexprews}[3]{\mathsf{ER}_{#1} \left( #3,#2 \right)}
\newcommand{\newmminexprewbs}[3]{\mathsf{MinER}^{= #1} \left( #3,#2 \right)}
\newcommand{\newmmaxexprewbs}[3]{\mathsf{MaxER}^{= #1} \left( #3,#2 \right)}
\newcommand{\newmminexprews}[2]{\mathsf{MinER} \left( #2,#1 \right)}
\newcommand{\newmmaxexprews}[2]{\mathsf{MaxER} \left( #2, #1 \right)}
\newcommand{\newmcexprews}[2]{\mathsf{ER} \left( #2,#1 \right)}
\newcommand{\newmcexprewbs}[3]{\mathsf{ER}^{= #1} \left( #3, #2 \right)}

%--- expected rewards with mdp passed as argument
\newcommand{\newmmexprew}[4]{\mathsf{ER}_{#1} \left( #4,#3,#2 \right)}
\newcommand{\newmmexprewb}[5]{\mathsf{ER}_{#1}^{= #2} \left( #5,#4, #3  \right)}
\newcommand{\newmmminexprewb}[4]{\mathsf{MinER}^{= #1} \left( #4,#3,#2 \right)}
\newcommand{\newmmmaxexprewb}[4]{\mathsf{MaxER}^{= #1} \left( #4,#3,#2 \right)}
\newcommand{\newmmminexprew}[3]{\mathsf{MinER} \left( #3,#2,#1 \right)}
\newcommand{\newmmmaxexprew}[3]{\mathsf{MaxER} \left( #3,#2, #1 \right)}
\newcommand{\newmmcexprew}[3]{\mathsf{ER} \left( #3,#2,#1 \right)}
\newcommand{\newmmcexprewb}[4]{\mathsf{ER}^{= #1} \left( #4,#3, #2 \right)}

We now equip MDPs with a reward structure and formalize \emph{expected rewards} based on the probability of and the reward collected along paths.
In contrast to definitions found in the literature, e.g. \cite{BK08,puterman}, our rewards can be any extended real in $\PosRealsInf$. Apart from that, our definition of expected (total) rewards is standard (cf. \cite[p. 77 - 79]{puterman}).

We first formalize summation over countable index sets.
Throughout this paper, we let $\reala + \infty = \infty + \reala = \infty$ for all $\reala \in \PosRealsInf$, and moreover $0\cdot \infty = \infty \cdot 0 = 0$. 
Following \cite[Section 6.2]{countable_sums}, given a countable set $A$ and $g\colon A \to \PosRealsInf$, we define%
\[
\sum_{\alpha \in A}  g(\alpha) \eeq \sup \setcomp{\vcenter{\hbox{$\displaystyle\sum_{\alpha \in F} g(\alpha)$}}}{\text{$F$ finite and $F\subseteq A$}}~,
\]%
which is a well-defined number in $\PosRealsInf$.
An important feature of the above sum is that even for countably \emph{infinite} $A$, the order of summation is irrelevant: 
For every bijective function $\mathsf{enum} \colon \Nats \to A$ (i.e.\ every permutation of $A$), we have (by \cite[Proposition 6.1]{countable_sums})%
\[
\sum_{\alpha \in A} g(\alpha) \eeq \sum_{\nati=0}^{\omega} g(\mathsf{enum}(\nati))
\textcolor{lightgray}{\eeq \sup_{\nata \in \Nats}\, \sum_{\nati=0}^{\nata} g(\mathsf{enum}(\nati))}~,
\]%
where $\omega$ is the first infinite ordinal number.
With this notion of countable summation at hand, we now define reward functions and expected rewards of MDPs.%
\begin{definition}[Reward Functions]
	A function%
	\[
		\highlight{\mrew} \colon \ms \to \PosRealsInf
	\]%
	for an MDP $\mdp = \mdptuple$ 
	mapping MDP states to non-negative extended reals is called a \emph{\highlight{reward function (for~$\mdp$)}}.
	We call $\mrew(\msa)$ the \emph{reward of state $\msa$}.%
\end{definition}%
Intuitively, we think of $\mrew(\msa)$ as a reward that is collected upon \emph{entering} state $\msa$ as one progresses along a path through the MDP.
We thus extend the definition of $\mrew$ from states to paths of $\mdp$:
For $\msa_0\ldots\msa_\nata \in \ms^{+}$, we define%
\[
\mrew(\msa_0\ldots\msa_\nata) \eeq \sum_{\nati = 0}^\nata \mrew(\msa_\nati) ~.
\]%
The above sum is never empty as every path contains at least one state.

Now that we have assigned a reward to each path and we have earlier assigned a probability to each path (under a given scheduler), we associate each state of an MDP with a minimal and a maximal \emph{expected reward}.\footnote{
%We always sum over paths with the same initial state and of some fixed length to ensure that the underlying probability space is well-defined, see \Cref{rem:path-probs}. 
Similarly to our definition of path probabilities, we consider it more intuitive to give direct path-based definitions of expected values instead of defining them in the framework of probability theory, i.e.\ via measurable functions and Lebesgue integrals.
}
\begin{definition}[Expected Rewards]%
\label{def:mdp:exrew}%
	Let $\mrew \colon \ms \to \PosRealsInf$ be a reward function for an MDP $\mdp = \mdptuple$ and $\msched \in \mscheds$. 
	We define the following:%
	\begin{enumerate}
		\item
		\label{def:mdp:exrew1} 
			The \emph{\highlight{expected reward from $\msa \in \ms$ after $\nata \in \Nats$ steps under $\msched$}} is defined as%
			\[ 
				\highlight{\newmexprewb{\msched}{\nata}{\mrew}{\msa}} \qeq \sum_{\msa_0\ldots\msa_\nata \in \mpathseq{\nata}(\msa)}
				\mpathprob{\msched}(\msa_0\ldots\msa_\nata )\cdot \mrew(\msa_0\ldots\msa_\nata ) ~.
			\]%
			Notice that the number of summands is finite.%
        			
		\item
		\label{def:mdp:exrew2} 
			The \emph{\highlight{expected reward from $\msa \in \ms$ under $\msched$}} is defined as
			\[
				\highlight{\newmexprew{\msched}{\mrew}{\msa}} \qeq \sup_{\nata \in \Nats}\, \newmexprewb{\msched}{\nata}{\mrew}{\msa}~.
			\]%

		\item
		\label{def:mdp:exrew3} 
			The \emph{\highlight{minimal expected reward from $\msa \in \ms$ after $\nata \in \Nats$ steps}} is defined as%
			\[
				\highlight{\newmminexprewb{\nata}{\mrew}{\msa}} \qeq \inf_{\msched \in \mscheds}  \newmexprewb{\msched}{\nata}{\mrew}{\msa} ~.
			\]%

		\item
		\label{def:mdp:exrew4}%
			The \emph{\highlight{minimal expected reward from $\msa \in \ms$}} is defined as%
			\[
				\highlight{\newmminexprew{\mrew}{\msa}} \qeq \inf_{\msched \in \mscheds}  \newmexprew{\msched}{\mrew}{\msa} ~.
			\]%

		\item
		\label{def:mdp:exrew5}%
			The \emph{\highlight{maximal expected reward from $\msa \in \ms$ {after $\nata \in \Nats$ ste}ps}} is defined as%
			\[
				\highlight{\newmmaxexprewb{\nata}{\mrew}{\msa}} \qeq \sup_{\msched \in \mscheds}  \newmexprewb{\msched}{\nata}{\mrew}{\msa} ~.
			\]%

		\item
		\label{def:mdp:exrew6}%
			The \emph{\highlight{maximal expected reward from $\msa \in \ms$}} is defined as%
			\[
				\highlight{\newmmaxexprew{\mrew}{\msa}} \qeq \sup_{\msched \in \mscheds}  \newmexprew{\msched}{\mrew}{\msa} ~.
			\]%
	\end{enumerate}%
\end{definition}%
If clear from context, we omit $\mdp$ from the notation above.
%If $\mdp$ is a Markov chain, then $\newmminexprews{\mrew}{\msa}$ and $\newmmaxexprews{\mrew}{\msa}$ 
%(resp.\ $\mminexprewb{\nata}{\mrew}{\msa}$ and $\mmaxexprewb{\nata}{\mrew}{\msa}$) 
%coincide, in which case we often write $\newmcexprews{\mrew}{\msa}$. %(resp.\ $\mcexprewb{\nata}{\mrew}{\msa}$). 

We briefly go over the above notions of expected rewards:
Upon reaching a state $\msa' \in \ms$, the MDP collects reward $\mrew(\msa')$.
Then $\newmexprewbs{\msched}{\nata}{\mrew}{\msa}$ is the expected reward collected after exactly $n$ steps if we start the MDP in state $\msa$ and resolve all nondeterminism according to $\msched$.
As the number of steps the process progresses tends to $\omega$, i.e. \enquote{in the limit}, we obtain the (total) expected reward $\newmexprews{\msched}{\mrew}{\msa}$.
%Let us provide some intuition on expected rewards:\cm{start with step-bounded}
%Upon reaching a state $\msa' \in \ms$, the process collects reward $\mrew(\msa')$.
%Then $\newmexprews{\msched}{\mrew}{\msa}$ is the (total) expected reward collected \enquote{in the limit}, i.e.\ as the number of steps the process progresses tends to $\omega$, assuming that the process is started in state $\msa$ and all nondeterminism in~$\mdp$ is resolved according to $\msched$.
%Similarly, $\newmexprewbs{\msched}{\nata}{\mrew}{\msa}$ is the expected reward collected not in the limit but after exactly $\nata$ steps.
$\newmminexprews{\mrew}{\msa}$ is the minimal -- under all possible choices of nondeterminism-resolving schedulers -- expected reward when starting in state $\msa$.
Analogously, $\newmmaxexprews{\mrew}{\msa}$ is the maximal\footnote{For historical reasons, we will speak of maximal expected rewards, even though the supremum in \Cref{def:mdp:exrew}.\ref{def:mdp:exrew6} is not necessarily a maximum (cf.\ \Cref{sec:exrew_schedulers}).}
expected reward. $\newmminexprewbs{\nata}{\mrew}{\msa}$ and $\newmmaxexprewbs{\nata}{\mrew}{\msa}$ are the corresponding step-bounded variants. %
Finally, we call a scheduler \emph{min-optimal} (resp. \emph{max-optimal}) if the infimum in $\newmminexprews{\mrew}{\msa}$ (resp. the supremum in $\newmmaxexprews{\mrew}{\msa}$) is attained for that scheduler. 

\begin{example}[Expected Rewards]
	\label{ex:maxERminER}
	Consider the MDP in \Cref{fig:running_example_mdp} together with the reward function $\mrew(s_i^L) = r$, for some constant $r \in \PosRealsInf$, $\mrew(s_i^R) = i+1$, and $\mrew(s_i) = 0$, for all $i \in \Nats$ (see numbers in boxes next to states in \Cref{fig:running_example_mdp}).
	We have the following:%
	\begin{itemize}
		\item 
			If $r > 0$ then $\newmmaxexprews{\mrew}{s_0} = \infty$ as witnessed by the strategy that always stays in the topmost row (cf.\ \Cref{ex:memoryless_scheduler_induced_mc}).
			On the other hand, to \emph{minimize} the expected reward, it is best to reach the sink state $\opsink$ as early as possible, i.e.\ to follow the scheduler $\msched$ with $\msched(s_i) = \actr$ for all $i \in\Nats$.
			Hence,
			\[
				\newmminexprews{\mrew}{s_0} \eeq \frac 1 2 \cdot 1 + \frac 1 4 \cdot 2 + \frac 1 8 \cdot 3 + \ldots \eeq 2~.
			\]
		
		\item 
			Now, let $r=0$.
			Then the minimal expected reward is $\newmminexprews{\mrew}{s_0} = 0$.
			For the maximal expected reward, we claim that still $\newmmaxexprews{\mrew}{s_0} = \infty$.
			To see this, consider for every $k \in \Nats$ the unique scheduler $\msched_k$ with $\msched_k(s_i) = \actl$ for all $i < k$ and $\msched_k(s_j) = \actr$ for all $j \geq k$.
			Note that $\newmexprewbs{\msched_k}{2(k+1)}{\mrew}{s_0} = \frac 1 2 (k+1)$.
			Thus:%
			\begin{align*}
				\newmmaxexprews{\mrew}{s_0}
				\eeq & \sup_{\msched \in \mscheds}  \newmexprews{\msched}{\mrew}{s_0} \\
				\ggeq & \sup_{k \in \Nats}\,  \newmexprews{\msched_k}{\mrew}{s_0} \\
				\eeq & \sup_{k \in \Nats}\, \sup_{n \in \Nats}\,  \newmexprewbs{\msched_k}{n}{\mrew}{s_0} \\
				\eeq & \sup_{k \in \Nats}\, \sup_{n \in \Nats}\,  \newmexprewbs{\msched_k}{2(n+1)}{\mrew}{s_0} \\
				\ggeq & \sup_{k \in \Nats}\,  \newmexprewbs{\msched_k}{2(k+1)}{\mrew}{s_0}
				\eeq  \sup_{k \in \Nats}\,  \frac 1 2 (k+1) \eeq \infty
				~.
			\end{align*}%
			However, it can be shown that $\newmexprews{\msched_k}{\mrew}{s_0} <\infty$ for all $k \in \Nats$, i.e.\ none of the schedulers $\msched_k$ attains the maximal expected reward.
			In general, max-optimal schedulers are not guaranteed to exist in infinite-state MDPs.
			In \emph{this} example, however, there does exist a max-optimal scheduler, see \Cref{ex:opt-sched}.%
			\triangleqed%
	\end{itemize}
\end{example}%
%
%
%The following is a useful alternative characterization of the expected reward:%
%%
%%
%\begin{restatable}{lemma}{exrewCharacterization}
%    \label{lem:mdp:exrew_characterization}
%    \tw{this lemma is new. perhaps move to appendix}\cm{I agree and think it's a distraction. It's only used in one proof in the appendix.}
%    Let $\mdp = \mdptuple$ be an MDP, let $\mrew \colon \ms \to \PosRealsInf$ be a reward function, and let $\msched \in \mscheds$.
%    For all $\nata \in \Nats$, it holds that
%    \[
%    \newmexprewb{\msched}{\nata}{\mrew}{\msa} \qeq \sum_{\msa_0\ldots\msa_\natb \in \mpathsb{\nata}(\msa)} \mpathprob{\msched}(\msa_0\ldots\msa_\natb) \cdot  \mrew(\msa_\natb)~.
%    \]
%    As a consequence,
%    \[
%    \newmexprew{\msched}{\mrew}{\msa} \qeq \sum_{\msa_0\ldots\msa_\natb \in \mpaths(\msa)} \mpathprob{\msched}(\msa_0\ldots\msa_\natb) \cdot \mrew(\msa_\natb)~.
%    \]
%\end{restatable}%
%%
%\noindent%
%%
%%
%In our opinion, \Cref{lem:mdp:exrew_characterization} may strike as rather unintuitive.
%A good approach is perhaps to draw an analogy to expected values of discrete random variables.
%There one can characterize the expected value $\textsf{E}(X)$ of a discrete random variable $X$ as%
%%
%\begin{align*}
%	\textsf{E}(X) \eeq \sum_{i = 0}^\omega \textsf{Pr}(X \geq i)~.
%\end{align*}%
%%
%The (admittedly vague) analogy is that both for paths as for discrete random variables, it suffices to sum over all \enquote{initial segments} (the $i$'s and $\msa_0\ldots \msa_m$'s, respectively) and look at what these contribute to the whole expectation mass.%
%
%\begin{lemma}
%    
%\end{lemma}

\section{Reachability Probabilities}
\label{sec:reach_prob}
%\tw{this section is completely new. We probability need some definition of reachability probability if we want to talk about the Kozen duality}
%!TEX root = ./main.tex

Readers with a probabilistic model checking background may perhaps be more familiar with \emph{reachability probabilities} or \emph{expected rewards upon reaching a set of target states} than with the (total) expected rewards we formalized in the previous section.
Those notions can be recovered from expected rewards. 
For example, reachability probabilities can be defined in terms of expected rewards as follows (cf. \cite[Chapter 10]{BK08}):
%
%Of course, it is also well-known that expected rewards generalize the notion of \emph{reachability probabilities}, see, e.g.~\cite[Chapter 10]{BK08}.
%Indeed, the latter can be defined in terms of expected rewards as follows:%
%
%
\begin{definition}[Reachability Probabilities]%
	Let $\mdp = \mdptuple$ be an MDP and let $\mtarget \subseteq \ms$ be a set of \emph{target states}.
	Consider the MDP $\mdp_\mtarget = (\ms \cup \{\opsink\},\, \mact,\, \mprob')$, where%
	\[
		\mprob'(\msa,\macta,\msb) 
		\eeq
		\begin{cases}
			\mprob(\msa,\macta,\msb), 	&\text{if } \msa \in \ms \setminus \mtarget \text{ and } \msb \neq \opsink	\\
			0, 						&\text{if } \msa \in \ms \setminus \mtarget \text{ and } \msb = \opsink		\\
			0,						&\text{if } \msa \in \mtarget \cup \{\opsink\} \text{ and } \msb \neq \opsink	\\
			1, 						&\text{if } \msa \in \mtarget \cup \{\opsink\} \text{ and } \msb = \opsink 	~.
		\end{cases}
	\]%
	Intuitively, $\mdp_\mtarget$ is obtained from $\mdp$ by adding a fresh sink state $\opsink$ and redirecting all outgoing transitions from target states in $\mtarget$ to the sink $\opsink$.
	
	Consider furthermore the reward function $\mrew_\mtarget$ for $\mdp_\mtarget$ given by%
	\[
		\mrew_\mtarget(\msa)
		\eeq
		\begin{cases}
			1, 	&\text{if } \msa \in \mtarget	\\
			0, 	&\text{if } \msa \notin \mtarget	~.
		\end{cases}
	\]%
	Then we define the \emph{\highlight{probability to reach $\mtarget$ from a state $\msa$ under a scheduler $\msched$}} as%
	\[
	\highlight{\mrprob{\msched}{\msa}{\mtarget}} \qeq  \newmmexprew{\msched}{\mrew_\mtarget}{\msa}{\mdp_\mtarget}~.
	\]
	The \emph{\highlight{minimal probability to reach $\mtarget$ from a state $\msa$}} is defined as 
	\[
	\highlight{\mminrprob{\msa}{\mtarget}}  \qeq \newmmminexprew{\mrew_\mtarget}{\msa}{\mdp_\mtarget} ~.
	\]
	Analogously, the \emph{\highlight{maximal probability to reach $\mtarget$ from a state $\msa$}} \mbox{is defined as}
	\[
	\highlight{\mmaxrprob{\msa}{\mtarget}}  \qeq \newmmmaxexprew{\mrew_\mtarget}{\msa}{\mdp_\mtarget}  ~.
	\]%
\end{definition}%
\begin{example}[Reachability Probabilities]
	Reconsider the MDP $\mdp$ from \Cref{fig:running_example_mdp} on page~\pageref{fig:running_example_mdp}.
	Let $\mtarget = \{s_i^R \mid i \in \Nats\}$.
	Note that $\mdp$ is already identical to $\mdp_\mtarget$.\footnote{Very pedantically speaking, we would have to add to $\mdp$ a \emph{fresh} sink state, say $\bot_\bot$, have all the $\msa_i^R$'s redirect to $\bot_\bot$, and have $\bot$ also redirect to $\bot_\bot$, but have $\bot$ itself be unreachable. We do not do this for clarity of exposition.}
	For the (memoryless) scheduler $\msched$ that always stays in the topmost row (cf.\ \Cref{ex:memoryless_scheduler_induced_mc}), we have $\mrprob{\msched}{s_0}{\mtarget} = 0$ and thus $\mminrprob{s_0}{\mtarget} = 0$.
	On the other hand, we have $\mmaxrprob{s_0}{\mtarget} = 1$ because some state from $\mtarget$ can be eventually reached with probability $1$ by, for example, always choosing action $\actr$ in the $s_i$ states.%
	\triangleqed%
\end{example}

\section{Basic Fixed Point Theory}
\label{sec:prelim:domain_theory}
%!TEX root = ./main.tex

The things we are interested in -- expected rewards -- live in a world of probability measures and expected values over paths through an MDP (under a scheduler).
%This is a kind of monolithical view.
%The paths are just there.
%This monolithicity would be even more present, had we based our whole development on probability measures over cylinder sets\tw{strictly speaking the measure is over the $\sigma$-algebra generated by the cylinder sets}, as is also quite common in the treatment of MDPs.\tw{in which sense is the probability theory stuff more monolithical? maybe just drop this sentence? Also, I don't think that we should try to \enquote{sell} the fact that we don't use probability theory here...}
%While this world is well-studied and accepted since the probability axioms of Kolmogorov~\cite{kolmogorov2018foundations}, it is somewhat \emph{uncomputational} -- not very process like.
%
%Moreover, the monolithic view
This world is not very compatible with the way one gives denotational semantics to programming languages, which is usually in the language of ordered spaces and fixed points.
In that world, finite approximations of a process progression are captured as a finite amount of iterations of some appropriate function $\domainop$.
One application of $\domainop$ will correspond to one progression step.
The limiting behavior of the process is then captured as iterating $\domainop$ ad infinitum.
Our goal is to connect these two views.\footnote{Yet another time -- this has been done often before, but always with restrictions that are not really suitable for an application to probabilistic program semantics.}
In some sense, we will later describe how to obtain $\newmminexprewb{\nata + 1}{\mrew}{\msa}$ from $\newmminexprewb{\nata}{\mrew}{\msa}$ by applying a suitable function~$\domainop$ to $\newmminexprewb{\nata}{\mrew}{\msa}$, and how $\newmminexprew{\mrew}{\msa}$ emerges from iterating $\domainop$ ad infinitum to a suitable initial value.
This fixed point iteration will moreover happen to correspond to the \emph{least} fixed point of the ambient ordered space that we are working in.

Let us recap some basics of these ordered spaces.
Imagine that we want to assign an expected reward to each (of possibly infinitely many) MDP states.
Let us gather such an assignment in a reward vector.
Consider now an MDP with two states and the reward vectors $u = (\tfrac{1}{2},\, \tfrac{1}{2})$, $v = (\tfrac{1}{2},\, \tfrac{3}{4})$, and $w = (\tfrac{1}{3},\, 1)$.
Clearly, $v$ is \enquote{larger} than $u$, in that all entries of $u$ are component-wise larger than $v$.
On the other hand, $w$ is not comparable to $u$ nor $v$: while the second component of $w$ is indeed larger, the first component is not.
In that sense, reward vectors do \emph{not} form a \emph{total} order but only a \emph{partial order}.
\begin{definition}[Partial Orders]
	Let $\domainset$ be a set and ${\domainleq} \subseteq \domainset\times\domainset$. The structure% 
	\[\highlight{\domaintuple}\]% 
	is called a \emph{\highlight{\keyword{partial order}}}, if the following conditions hold:
	\begin{enumerate}
		\item $\domainleq$ is \emph{\keyword{reflexive}}, i.e.\ $\tforall{\domaina \in \domainset}$,
		\[
		\domaina\domainleq\domaina~.
		\]
		\item $\domainleq$ is \emph{\keyword{transitive}}, i.e.\ $\tforall{\domaina,\domainb,\domainc \in \domainset}$,
		\[
		\domaina \domainleq \domainb \qand \domainb\domainleq\domainc 
		\qqimplies \domaina \domainleq \domainc~.
		\]
		\item $\domainleq$ is \emph{\keyword{antisymmetric}}, i.e.\ $\tforall{\domaina,\domainb \in \domainset}$,
		\[
		\domaina \domainleq \domainb \qand \domainb\domainleq\domaina 
		\qqimplies \domaina = \domainb~.
		\]
	\end{enumerate}
	We will usually say \enquote{$\domaina$ is \emph{smaller than} $\domainb$} for $\domaina \domainleq \domainb$ instead of the more precise formulation \enquote{$\domaina$ is \emph{smaller than or equal to} $\domainb$}; and similarly for \enquote{larger than}.
\end{definition}%
We now want to speak about some notion of limits and convergence in these ordered spaces called partial orders.
Fix a partial order $\domaintuple$ and take any subset $\domainsubset \subseteq \domainset$. 
An element~$\domaina \in \domainset$ is called an \emph{\highlight{upper bound of $\domainsubset$}}, if $\domaina$ is larger than all elements of $\domainsubset$, i.e.\ if
\[
\tforall{\domainb \in \domainsubset},\quad \domainb \domainleq \domaina~.
\]
Moreover, $\domaina$ is called the \emph{least upper bound} or \emph{supremum} of $\domainsubset$, if%
\begin{enumerate}
	\item $\domaina$ is an upper bound of $\domainsubset$, and
	\item every upper bound $\domainb$ of $\domainsubset$ is larger than $d$, i.e.\ $\domaina \domainleq \domainb$.
\end{enumerate}
Antisymmetry of $\domainleq$ implies that if the supremum of $\domainsubset$ exists, then it is unique and we denote it by \highlight{$\supremum \domainsubset$}.

Dually to upper bounds, $\domaina \in \domainset$ is called a \emph{\highlight{lower bound of $\domainsubset$}}, if $\domaina$ is smaller than all elements of $\domainsubset$, i.e.\ if
\[
\tforall{\domainb \in \domainsubset},\quad \domaina \domainleq \domainb~.
\]
%
%$\domaina$ is moreover called the \emph{greatest lower bound} or \emph{infimum} of $\domainsubset$, if%
%%
%\begin{enumerate}
%	\item $\domaina$ is a lower bound of $\domainsubset$, and
%	\item for every lower bound $\domainb$ of $\domainsubset$ it holds that $\domainb \domainleq \domaina$.
%\end{enumerate}
%%
%As with suprema, if the infimum of $\domainsubset$ exists, then it is unique and we denote it by \highlight{$\infimum \domainsubset$}.
Dually to suprema, one can define a (unique) infimum $\infimum \domainsubset$ of $\domainsubset$.

We usually restrict to partially ordered spaces in which all suprema and infima are guaranteed to exist. 
Such spaces are called \emph{complete lattices}:
\begin{definition}[Complete Lattices]
	A partial order \highlight{$\domaintuple$} is called a \emph{\highlight{\keyword{complete lattice}}}, if 
	%
	%	\[
	%		\tforall{\domainsubset \subseteq \domainset} \qquad 
	%		\texists{\domainsubset \subseteq \domainset} 
	%		\quad \text{and}\quad
	%		\texists{\infimum \domainsubset \in \domainset}~.
	%	\]
	every $\domainsubset \subseteq \domainset$ has a supremum $\supremum \domainsubset \in \domainset$ and an infimum $\infimum \domainsubset \in \domainset$.
\end{definition}
In particular, every complete lattice has a \emph{\highlight{\keyword{least element} $\domainleast$}} given by $\domainleast = \supremum \emptyset$, and a \emph{\highlight{\keyword{greatest element} $\domaingreatest$}} given by $\domaingreatest = \infimum \emptyset$. 

So far we have described the ambient spaces in which we would like to operate -- complete lattices.
We now come to the functions $\domainop$ which are supposed to capture the one-step progression of a process.
Fix a complete lattice $\domaintuple$.
We now consider endomaps on $\domaintuple$, i.e.\ functions%
\[
	\highlight{\domainop} \colon \domainset \to \domainset
\]%
mapping from the carrier set $\domainset$ into itself.
We already hinted above to the fact that we will be interested in least fixed points of these functions; \emph{least} in terms of the ambient partial order $\domainleq$.
Formally, we have:%
\begin{definition}[Least Fixed Points]%
	\label{def:prelim:domain_theory:fp}
	Let $\domaintuple$ be a complete lattice and let $\domainop\colon \domainset \to \domainset$ be an endomap on $\domainset$.
	\begin{enumerate}
		\item
		\label{def:prelim:domain_theory:fp1} 
			An element $\domaina \in \domainset$ is called a \emph{\highlight{\keyword{fixed point} of $\domainop$}}, if%
			\begin{align*}
				\domainop(\domaina) \eeq \domaina~.
			\end{align*}
		
		\item
		\label{def:prelim:domain_theory:fp2} 
			A fixed point $\domaina\in\domainset$ of $\domainop$ is called the \emph{\highlight{\keyword{least} fixed point of $\domainop$}}, if 
		\[
		\tforall{\text{fixed points $\domaina'$ of \,$\domainop$}}\colon\quad \domaina \ddomainleq\domaina'~.
		\]
		If it exists, we denote  the least fixed point of $\domainop$ by \highlight{$\lfp \domainop$}.
		%
		%
		%		\item A fixed point $\domaina\in\domainset$ of $\domainop$ is called the \emph{\highlight{greatest fixed point of $\domainop$}}, if 
		%		%
		%		\[
		%		\tforall{\text{fixed points $\domaina'$ of $\domainop$}}\quad \domaina' \domainleq\domaina~.
		%		\]
		%		%
		%		If it exists, we denote the greatest fixed point of $\domainop$ by \highlight{$\gfp \domainop$}.
	\end{enumerate}
\end{definition}
\noindent
Antisymmetry of $\domainleq$ implies that if the least fixed point of $\domainop$ exists, then it is unique, which justifies the above notation and to speak of \emph{the} least fixed point.

Next, we impose a very basic property on such endomaps: monotonicity.
\begin{definition}[Monotone Functions]
	An endomap \[\highlight{\domainop} \colon \domainset \to \domainset\] on a complete lattice $\domaintuple$ is called \emph{\highlight{\keywordpl{monotone}{monotone function} (w.r.t.\ $\domaintuple$)}}, if 
	\[
	\tforall{\domaina,\domainb\in\domainset}\quad \domaina\domainleq\domainb \quad\text{implies}\quad \domainop(\domaina)\domainleq\domainop(\domainb)~.
	\]
\end{definition}%
Intuitively, monotonicity of $\domainop$ states: The more we pour into $\domainop$, the more we get out of~$\domainop$.
In our context, it will mean: 
The more reward we have already collected after $n$ steps, the more reward we will collect after $n+1$ steps.
Somewhat astonishingly, monotonicity of an endomap $\domainop$ on a complete lattice is already a sufficient criterion for~$\domainop$~to~have a least fixed point -- a theorem famously attributed to Knaster \cite{knaster1928theoreme} and Tarski \cite{tarski1955lattice}.
However, the Knaster-Tarski theorem is non-constructive and it is not apparent how to arrive at or even approach the least fixed point.
A stricter notion than monotonicity -- \emph{continuity} -- will yield such approachability.%
\begin{definition}[$\boldsymbol{\omega}$-Continuity~\textnormal{\cite{winskel}}]
	Let $\domaintuple$ be a complete lattice and let \mbox{$\domainop \colon \domainset \to \domainset$} be an endomap on $\domainset$.
	%
	%\begin{enumerate}
	%\item 
	We say that $\highlight{\domainop}$ is \emph{$\omega$-continuous} (or simply \emph{continuous}), if for all increasing $\omega$-chains\footnote{I.e.\ ordinary infinite sequences of elements which are consecutively ordered by $\domainleq$.} $\domainsubset = \{\domaina_0 \domainleq \domaina_1 \domainleq \ldots\}\subseteq \domainset$, we have
	\[
		\domainop\left( \supremum \domainsubset \right) 
		\qeq 
		\supremum \domainop\left( \domainsubset \right) 
		\lightgray{\qeq 
		\supremum \big\{ \domainop(\domaina) ~{}\mid{}~ \domaina \in \domainsubset \big\}}~.
	\]
	%
	%
	%		\item We call $\highlight{\domainop}$ \highlight{\emph{$\omega$-\cocont}} (or simply \emph{\highlight{\cocont}}), if for all decereasing $\omega$-chains $\domainsubset = \{\domaina_0 \domaingeq \domaina_1 \domaingeq \ldots\} \subseteq \domainset$, we have 
	%		%
	%		\[
	%		\domainop\left( \infimum \domainsubset \right) \qeq \infimum \big\{ \domainop(\domaina) ~{}\mid{}~ \domaina \in \domainsubset \big\}~.
	%		\]
	%	\end{enumerate}
%
\end{definition}
In other words: $\domainop$ is continuous if taking suprema of and applying $\domainop$ to an $\omega$-chain $\domainsubset$ commutes, i.e.\ it does not matter whether we first determine the supremum of the chain and then apply $\domainop$ to it, or whether we first take the image of $\domainsubset$ under $\domainop$ and then determine the supremum of the so-obtained set.

For continuous functions, we will now see an also very well-known fixed point iteration method that approaches the least fixed point in the limit.
Given a natural number $\nata$, let $\domainop^\nata(\domaina)$ denote $\nata$-fold iteration of $\domainop$ to $\domaina$, e.g.\ $\domainop^3(\domaina) = \domainop(\domainop(\domainop(\domaina)))$ (we also speak of the \emph{$n$-th iteration of $\domainop$  on $\domaina$}).
Then we can characterize the least fixed point of a continuous function as follows:%
\begin{theorem}[Kleene Fixed Point Theorem \textnormal{\cite{winskel}}]
	\label{thm:prelim:domain_theory:kleene}
	Let $\domaintuple$ be a complete lattice and let $\domainop \colon \domainset \to \domainset$ be a \emph{continuous} endomap on $\domainset$.
	Then
	\[
		\lfp \domainop \qeq \supremum \setcomp{\domainop^\nata(\domainleast)}{\nata \in \Nats}~.
	%\diterup{\omega}(\domainleast)~.
	\]
\end{theorem}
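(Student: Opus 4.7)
The plan is to follow the standard Kleene-style argument: build the ascending chain of finite iterates of $\domainop$ on $\domainleast$, show its supremum is a fixed point using continuity, and then show it is below every other fixed point using monotonicity.

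First I would establish that $\domainop$ is monotone, since continuity is defined here only via $\omega$-chains. Given $\domaina \domainleq \domainb$, the sequence $\domaina \domainleq \domainb \domainleq \domainb \domainleq \ldots$ is an $\omega$-chain with supremum $\domainb$, so applying continuity yields $\domainop(\domainb) = \supremum \{\domainop(\domaina),\, \domainop(\domainb)\}$, which forces $\domainop(\domaina) \domainleq \domainop(\domainb)$. Next I would show by induction on $\nata$ that $\domainop^\nata(\domainleast) \domainleq \domainop^{\nata+1}(\domainleast)$: the base case uses $\domainleast$ being the least element, and the inductive step uses monotonicity. Thus $\domainsubset \eeq \setcomp{\domainop^\nata(\domainleast)}{\nata \in \Nats}$ is an $\omega$-chain and $\supremum \domainsubset$ is well-defined in the complete lattice.

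Next I would show that $\supremum \domainsubset$ is a fixed point of $\domainop$. By continuity applied to the chain $\domainsubset$,
\[
\domainop\bigl(\supremum \domainsubset\bigr) \qeq \supremum \setcomp{\domainop^{\nata+1}(\domainleast)}{\nata \in \Nats}.
\]
The right-hand side equals $\supremum \setcomp{\domainop^{\nata}(\domainleast)}{\nata \geq 1}$, and adjoining the element $\domainleast = \domainop^0(\domainleast)$ (which is below every other element of $\domainsubset$) does not change the supremum. Hence $\domainop(\supremum \domainsubset) = \supremum \domainsubset$.

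Finally I would argue leastness. Let $\domaina'$ be any fixed point of $\domainop$. By induction on $\nata$ I show $\domainop^\nata(\domainleast) \domainleq \domaina'$: the base case is $\domainleast \domainleq \domaina'$ because $\domainleast$ is least, and the inductive step applies monotonicity together with $\domainop(\domaina') = \domaina'$, giving $\domainop^{\nata+1}(\domainleast) \domainleq \domainop(\domaina') = \domaina'$. Since every element of $\domainsubset$ lies below $\domaina'$, by the defining property of the supremum we conclude $\supremum \domainsubset \domainleq \domaina'$, so $\supremum \domainsubset = \lfp \domainop$.

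The proof is essentially textbook and the only mild subtlety is the first step: since the excerpt's definition of continuity is phrased in terms of $\omega$-chains only, one must be slightly careful to derive monotonicity (by constructing an eventually constant chain) rather than assume it separately. After that, everything reduces to a pair of straightforward inductions on the iterate index.
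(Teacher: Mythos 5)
Your proof is correct and is exactly the standard Kleene argument: derive monotonicity from $\omega$-continuity via an eventually constant chain, show the iterates form an ascending chain whose supremum is a fixed point by continuity, and establish leastness by induction against an arbitrary fixed point. The paper itself gives no proof of this theorem and simply cites Winskel, whose proof proceeds the same way, so there is nothing to compare beyond noting that your handling of the monotonicity-from-continuity subtlety is the right touch.
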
%
The Kleene fixed point theorem is a very powerful general principle.
For instance, the much more prominent Banach fixed point theorem is a consequence of it~\cite{baranga1991contraction}.

Moreover, because continuity implies monotonicity and hence $\domainop$ is monotonic, iterating $\domainop$ on $\domainleast$ actually forms an increasing chain%
\begin{align*}
	\domainleast 
	\ddomainleq \domainop(\domainleast) 
	\ddomainleq \domainop^2(\domainleast) 
	\ddomainleq {\ldots}
	\qquad \ddomainleq \lfp \domainop
\end{align*}
whose supremum is the least fixed point of $\domainop$.
This means that not only does the fixed point iteration actually converge to $\lfp \domainop$, but moreover any iterate $\domainop^n(\domainleast)$ is smaller than (or equal to) $\lfp \domainop$ and hence the fixed point iteration approaches $\lfp \domainop$ faithfully \emph{from below}.
Ultimately, this is (one argument) why value iteration in probabilistic model checking gives sound lower bounds.\footnote{Although without additional ado, one has no idea how tight these lower bounds are~\cite{DBLP:journals/tcs/HaddadM18,DBLP:conf/cav/QuatmannK18,DBLP:conf/cav/HartmannsK20}.}

While the Kleene fixed point theorem gives us a handle at iteratively approaching the sought-after least fixed point, that least fixed point remains in some sense an intrinsically infinitary object, namely in that it is the supremum of an $\omega$-chain that will generally (and even for simple examples) indeed only stabilize at iteration $\omega$ and \emph{not} already after some finite number of iterations.
Yet, even for non-continuous but monotonic maps, the Knaster-Tarski theorem yields as an immediate consequence a powerful induction principle:%
\begin{lemma}[\keywordpl{Park Induction}{Park induction}~\textnormal{\cite{park}}]
	\label{lem:prelim:domain_theory:park}
	Let $\domaintuple$ be a complete lattice and $\domainop\colon \domainset \to \domainset$ be a \emph{monotonic} endomap on $\domainset$. 
	Then $\tforall{\domaina\in\domainop}$,
	\[
	\domainop(\domaina)\ddomainleq \domaina
	\qqimplies
	\lfp \domainop \ddomainleq \domaina~.
	\]
	%	\begin{enumerate}
		%		\item If $\domainop(\domaina)\domainleq \domaina$, then $\lfp \domainop \domainleq \domaina$.
		%		%
		%		\item If $\domaina \domainleq \domainop(\domaina)$, then $\domaina \domainleq \gfp \domainop$.
		%	\end{enumerate}
\end{lemma}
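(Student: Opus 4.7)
The plan is to derive Park induction as a direct consequence of the Knaster--Tarski construction of $\lfp \domainop$ as the infimum of the set of \emph{prefixed points}. Concretely, I would define
\[
	\domainsubset \qeq \setcomp{\domainb \in \domainset}{\domainop(\domainb)\ddomainleq \domainb}
\]
and let $\domainc = \infimum \domainsubset$, which exists because $\domaintuple$ is a complete lattice. The whole proof then reduces to showing (i) that $\domainc$ is a fixed point of $\domainop$, and (ii) that $\domainc$ is the \emph{least} such fixed point; Park induction then follows immediately, because the hypothesis $\domainop(\domaina)\domainleq \domaina$ says exactly that $\domaina \in \domainsubset$, so $\lfp \domainop = \domainc \domainleq \domaina$.

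The key calculation for (i) is the standard two-step argument. First, for any $\domainb \in \domainsubset$ one has $\domainc \domainleq \domainb$ by definition of infimum, and then by monotonicity $\domainop(\domainc) \domainleq \domainop(\domainb) \domainleq \domainb$. Hence $\domainop(\domainc)$ is a lower bound of $\domainsubset$, so $\domainop(\domainc) \domainleq \domainc$; this shows $\domainc \in \domainsubset$. Applying $\domainop$ to this inequality and using monotonicity again yields $\domainop(\domainop(\domainc)) \domainleq \domainop(\domainc)$, so $\domainop(\domainc) \in \domainsubset$, which gives $\domainc \domainleq \domainop(\domainc)$ by minimality of $\domainc$ in $\domainsubset$. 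Antisymmetry then forces $\domainop(\domainc) = \domainc$.

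For (ii), note that every fixed point $\domaina'$ of $\domainop$ trivially satisfies $\domainop(\domaina')\domainleq \domaina'$, hence $\domaina' \in \domainsubset$ and therefore $\domainc \domainleq \domaina'$. So $\domainc$ is indeed the least fixed point, i.e.\ $\lfp \domainop = \domainc$. I expect no genuine obstacle here: the only mildly subtle point is being careful to argue monotonicity twice (once to show $\domainop(\domainc)\domainleq \domainc$, once to show $\domainc \domainleq \domainop(\domainc)$) and then using antisymmetry of $\domainleq$ to conclude equality. Neither continuity nor any Kleene-style approximation is needed; the argument is purely order-theoretic and works for any monotone endomap on a complete lattice.
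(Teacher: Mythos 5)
Your proof is correct and follows exactly the route the paper itself indicates: the paper gives no explicit proof but remarks that Park induction is an immediate consequence of the Knaster--Tarski theorem, and your argument is precisely the standard Knaster--Tarski construction of $\lfp \domainop$ as the infimum of the prefixed points, from which the implication $\domainop(\domaina)\domainleq\domaina \Rightarrow \lfp\domainop\domainleq\domaina$ is immediate. No gaps; the double use of monotonicity and the final appeal to antisymmetry are handled correctly.
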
%
Intuitively, Park induction provides a way to check if some candidate $\domaina$ is an upper bound on the least fixed point of $\domainop$.
For that, we push $\domaina$ through $\domainop$ just \emph{once} and check whether this will take us \emph{down} in the partial order $\domainleq$.
If yes, then we have immediately verified that $\domaina$ is an upper bound on $\lfp \domainop$ -- the supremum of a potentially infinite chain.

Park induction is yet another powerful principle that is ubiquitous in computer science, but especially in the realm of verification and formal methods. 
For example, the invariant rule of Hoare logic \cite{DBLP:journals/cacm/Hoare69} can be entirely deduced from Park induction.
For a detailed explanation, see \cite{kaminski_diss}.

\section{Expected Rewards via Least Fixed Points}
\label{sec:rewards_lfp}
%!TEX root = ./main.tex

In \Cref{sec:expected_rewards_def}, we defined expected rewards in terms of their paths. 
While we believe this notion is the most intuitive one -- it is straightforward to comprehend the expected reward of small (especially acyclic) MDPs by considering their paths -- it is also desirable to characterize expected rewards as least fixed points (see \Cref{sec:prelim:domain_theory}), particularly when reasoning symbolically about infinite-state MDPs (see \Cref{sec:pp_tick}).

Least fixed point characterizations of expected rewards are typically derived from \emph{Bellman's optimality equations}~\cite{Bellman1957MDPs}.
More specifically, for an MDP $\mdp = \mdptuple$ and a reward function $\mrew\colon\ms \to\PosRealsInf$, the set of Bellmann equations is given by
\begin{align}
	\label{eqn:mdp:bellman}
	X \qeq \mylambda{\msa} \mrew(\msa) + \mopt_{\macta \in \mact(\msa)} \sum_{\msa'\in\msuccs{\macta}(\msa)} \mprob(\msa, \macta, \msa') \cdot X(\msa') ~,
\end{align}%
where $X \colon \ms \to \PosRealsInf$ is a reward vector and the choice of $\mopt \in \{\min,\max\}$ depends on whether we are considering minimal or maximal expected rewards. 

For various classes of MDPs with finitely-valued reward functions, i.e. $\mrew(s) < \infty$ for all states $\msa \in \ms$, it is well-established, that 
$\newmminexprews{\mrew}{\msa}$ is the least solution of \Cref{eqn:mdp:bellman} for $\mopt = \min$ (cf.~\cite[Theorem 7.2.3a]{puterman}) and $\newmmaxexprews{\mrew}{\msa}$ is the least solution of \Cref{eqn:mdp:bellman} for $\mopt = \max$ (cf.~\cite[Theorem 7.3.3a]{puterman}).

In this section, we will show that the same result holds for \emph{unbounded} reward functions $\mrew\colon\ms \to\PosRealsInf$ and expected rewards of MDPs as introduced in \Cref{sec:prelim:mdp,sec:expected_rewards_def}.
That is, the least solution of \Cref{eqn:mdp:bellman} exists (\Cref{thm:mdp:bellman_cont}) and is equal to the sought-after expected rewards (\Cref{thm:mdp:bellman_correct}).
The solution domain of \Cref{eqn:mdp:bellman}, i.e. reward vectors, is the complete lattice of \emph{value functions}.%
\begin{definition}[Value Functions]\label{def:value-functions}
	Let $\mdp = \mdptuple$ be an MDP. The complete lattice of \emph{\highlight{value functions (for $\mdp$)}} is defined as 
	\begin{align*}
	\highlight{(\mvals,\, \mleq)}~, \qquad\textnormal{where:}
	\end{align*}%
%	\normalsize%
	%
%	where: 
	%
	\begin{enumerate}
		\item $\highlight{\mvals} = \ms \to \PosRealsInf$ is the \emph{\highlight{set of value functions}}, and 
		\item  $\highlight{\mleq}$ is the pointwise lifted ordinary order on $\PosRealsInf$, i.e.\ $\tforall{\mval,\mval' \in \mvals},$
		\[
		\mval \mmleq \mval' \qqiff \tforall{\msa \in \ms} \colon \mval(\msa) \leq \mval'(\msa)~.
		\]
	\end{enumerate}
\end{definition}%
The least element of $(\mvals, \, \mleq)$ is the constant-zero-function $\mylambda{\msa} 0$, which we \mbox{--- by} slight abuse of notation --- denote by $0$. Moreover, suprema and infima in \mbox{$(\mvals, \, {\mleq})$} are the pointwise liftings of suprema and infima in $(\PosRealsInf, \, \leq)$, i.e.\ $\tforall{\mvalsset \subseteq \mvals}$,
\[
\supremum \mvalsset \eeq \mylambda{\msa} \sup\setcomp{\mval(\msa)}{\mval \in \mvalsset}
\qqand
\infimum \mvalsset \eeq \mylambda{\msa} \inf\setcomp{\mval(\msa)}{\mval \in \mvalsset}~.
\]%
Equipped with a complete lattice, we next translate the Bellman equations in \Cref{eqn:mdp:bellman} into an endomap $\mvals \to \mvals$:
%
%So far, we have described the ordered set in which $\mylambda{\msa}\newmmaxexprews{\mrew}{\msa}$ and $\mylambda{\msa}\newmminexprews{\mrew}{\msa}$ will live.
%Next, we need to write down functions of type $\mvals \to \mvals$ whose (least) fixed points are precisely the (least) solutions to \Cref{eqn:mdp:bellman}.
%These functions are called \emph{Bellman operators}:%
%
%
\begin{definition}[Bellman Operators]
	\label{def:mdp:bellman}
	Let $\mdp = \mdptuple$ be an MDP and \mbox{$\mrew \colon \ms \to \PosRealsInf$} be a reward function for $\mdp$.
	\begin{enumerate}
		\item\label{def:mdp:bellman1} The function $\highlight{\mopmin{\mdp}{\mrew}} \colon \mvals \to \mvals$, defined as
		\[
		\highlight{\mopmins}(\mval) \eeq \mylambda{\msa} \mrew(\msa) + \min_{\macta \in \mact(\msa)} \sum_{\msa'\in\msuccs{\macta}(\msa)} \mprob(\msa, \macta, \msa') \cdot \mval(\msa') ~,
		\]
		is called the \emph{\highlight{min-Bellman operator of $\mdp$ with respect to $\mrew$}}.
		\item\label{def:mdp:bellman2} Analogously, the function $\highlight{\mopmax{\mdp}{\mrew}} \colon \mvals \to \mvals$, defined as
		\[
		\highlight{\mopmaxs}(\mval) \eeq \mylambda{\msa} \mrew(\msa) + \max_{\macta \in \mact(\msa)} \sum_{\msa'\in\msuccs{\macta}(\msa)} \mprob(\msa, \macta, \msa') \cdot \mval(\msa') ~,
		\]
		is called the \emph{\highlight{max-Bellman operator of $\mdp$ with respect to $\mrew$}}.
	\end{enumerate}
\end{definition}%
%
%
%If $\mdp$ is a Markov chain, then $\mopmins$ and $\mopmaxs$ coincide, in which case we often write $\mcops$.%
%
%
\begin{example}[Bellman Operators]
\label{ex:bellman-operators}
	The min-Bellman operator of the MDP $\mdp$ from \Cref{fig:running_example_mdp} with respect to the reward function indicated by the boxed numbers next to the states is given by%
	\abovedisplayskip=0pt%
	\begin{align*}
		\mopmins(\mval) \eeq \mylambda{\msa}
		\begin{cases}
			\min \left\{ \mval(s_i^L), \frac 1 2 \mval(s_{i+1}) + \frac 1 2 \mval(s_i^R)\right\} & \text{ if } \msa = s_i, i \in \Nats \\
			r + \mval(s_{i+1}) & \text{ if } \msa = s_i^L, i \in \Nats \\
			i+1  + \mval(\opsink) & \text{ if } \msa = s_i^R, i \in \Nats \\
			\mval(\opsink) & \text{ if } \msa = \opsink.
		\end{cases}
	\end{align*}%
	The max-Bellman operator is identical except that $\min$ is replaced by $\max$.%
	\triangleqed%
\end{example}
Next, we will show the existence of the least fixed point of the Bellman operators:%
\begin{theorem}[Continuity of Bellman Operators]
	\label{thm:mdp:bellman_cont}
	Let $\mdp = \mdptuple$ be an MDP and let $\mrew \colon \ms \to \PosRealsInf$.  
	\begin{enumerate}
		\item\label{thm:mdp:bellman_cont1} Both the min-Bellman operator $\mopmins$ and the max-Bellman operator $\mopmaxs$ of $\mdp$ w.r.t.\ $\mrew$ are continuous w.r.t.\ $(\mvals, \, \mleq)$.
		\item\label{thm:mdp:bellman_cont2} Both $\mopmins$ and $\mopmaxs$ have a least fixed point given by
		\begin{align*}
			\lfp \mopmins &\qeq \supremum \setcomp{\mopminsiter{n}(0)}{\nata \in \Nats} \\[.5em]
			\text{and}\qquad
			\lfp \mopmaxs &\qeq \supremum \setcomp{\mopmaxsiter{n}(0)}{\nata \in \Nats}~.
		\end{align*}
	\end{enumerate}
\end{theorem}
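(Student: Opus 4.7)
The plan is to prove continuity of both Bellman operators by decomposing them into simple building blocks, each of which is $\omega$-continuous. Once continuity is established, part 2 follows directly from the Kleene Fixed Point Theorem (\Cref{thm:prelim:domain_theory:kleene}) applied to the complete lattice $(\mvals, \mleq)$ whose least element is the constant-zero function $\mylambda{\msa}0$.

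First, I would verify monotonicity: if $\mval \mmleq \mval'$ in $\mvals$, then replacing $\mval$ by $\mval'$ inside $\mopmins$ (or $\mopmaxs$) only increases each inner sum $\sum_{\msa' \in \msuccs{\macta}(\msa)} \mprob(\msa,\macta,\msa') \cdot \mval(\msa')$ pointwise in $\macta$; hence the $\min$ (resp.\ $\max$) over the finite set $\mact(\msa)$ cannot decrease, and adding the constant $\mrew(\msa)$ preserves the resulting inequality.

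Second, to prove $\omega$-continuity I would fix an increasing $\omega$-chain $\mval_0 \mmleq \mval_1 \mmleq \ldots$ in $\mvals$, set $\mval^{\star} = \supremum_{n} \mval_n$, and rewrite $\mopmins(\mval^{\star})(\msa)$ by pushing the supremum inside. This uses four building blocks: (i) the pointwise definition of $\supremum$ in $\mvals$, so $\mval^{\star}(\msa') = \sup_{n} \mval_n(\msa')$; (ii) scalar multiplication by $\mprob(\msa,\macta,\msa') \in [0,1]$ commutes with suprema of monotone sequences in $\PosRealsInf$ (the case $\mprob = 0$ is handled separately by the convention $0 \cdot \infty = 0$, contributing $0$ on both sides); (iii) a \emph{finite} sum over $\msuccs{\macta}(\msa)$, which is finite by \Cref{def:prelim:mdps}, commutes with suprema of monotone sequences; and (iv) $\min$ over the \emph{finite} action set $\mact(\msa)$ commutes with the supremum of a monotone chain. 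Chaining (i)--(iii) yields
\[
\mopmins(\mval^{\star})(\msa) \eeq \mrew(\msa) + \min_{\macta \in \mact(\msa)}\, \sup_{n \in \Nats}\, \sum_{\msa' \in \msuccs{\macta}(\msa)} \mprob(\msa,\macta,\msa') \cdot \mval_n(\msa')~,
\]
and a final swap via (iv) delivers $\sup_n \mopmins(\mval_n)(\msa)$, which is exactly $\supremum_{n} \mopmins(\mval_n)$ evaluated at $\msa$ by (i) again.

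The principal technical obstacle is the $\min$--$\supremum$ swap in (iv), since in general one only has $\min_\macta \sup_n \ggeq \sup_n \min_\macta$. I would prove the other direction using both essential hypotheses: finiteness of $\mact(\msa)$ \emph{and} monotonicity of the chain. Concretely, given $\varepsilon > 0$, for each $\macta$ in the finite set $\mact(\msa)$ pick an index $N_\macta$ whose associated inner sum is within $\varepsilon$ of its supremum (or exceeds any prescribed finite bound, if that supremum is $\infty$); then $N = \max_\macta N_\macta$ exists because $\mact(\msa)$ is finite, and monotonicity guarantees that the estimate holds simultaneously for all $\macta$ at stage $N$. Taking $\varepsilon \to 0$ closes the inequality. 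For $\mopmaxs$ the analogous swap $\max_\macta \sup_n = \sup_n \max_\macta$ is trivial, since both sides equal the double supremum over the index set $\mact(\msa) \times \Nats$; no additional argument is required. Part 2 is then an immediate instantiation of \Cref{thm:prelim:domain_theory:kleene} with $\domainleast = \mylambda{\msa}0$.
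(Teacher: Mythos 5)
Your proposal is correct and takes essentially the same route as the paper: the paper's proof likewise reduces continuity to the fact that addition, multiplication by scalars, finite minima, and finite maxima are continuous on $(\PosRealsInf,\leq)$ (citing this from the weighted-automata literature) and then invokes Kleene's fixed point theorem for part 2. The only difference is that you work out in full the one genuinely delicate building block --- the exchange of $\min$ over the finite action set with the supremum of a monotone chain --- which the paper delegates to the cited reference; your $\varepsilon$-argument via $N = \max_\macta N_\macta$ is the standard and correct way to do it.
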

\begin{proof}
	\Cref{thm:mdp:bellman_cont}.\ref{thm:mdp:bellman_cont1} is a consequence of the fact that addition, multiplication, minima, and maxima are continuous w.r.t.\ $(\PosRealsInf, \, \leq)$ \cite[Chapter 1]{weighted_automata}. \Cref{thm:mdp:bellman_cont}.\ref{thm:mdp:bellman_cont2} is then an instance of Kleene's fixed point theorem, see \Cref{thm:prelim:domain_theory:kleene}.\qed
\end{proof}%
\noindent%
Now that we have established that the least fixed points exist, we will show that they evaluate to the sought-after expected rewards.
To this end, we first show that performing $n+1$ fixed point iterations yields the $n$-step bound expected reward (see~\Cref{def:mdp:exrew}).

%
%\begin{align*}
%	\lfp \mopmins &\qeq \mylambda{\msa} \newmminexprews{\mrew}{\msa}\\
%	%\]
%	%
%	%and 
%	%\[
%	\text{and} \qquad \lfp \mopmaxs &\qeq \mylambda{\msa} \newmmaxexprews{\mrew}{\msa}~.
%\end{align*}%
%%
%%
%Towards this goal, we first establish a characterization of step-bounded expected rewards as fixed point iterates.%
%
%
\begin{restatable}[Step-Bounded Exp.\ Rew.\ via Fixed Point Iteration]{lemma}{bellmanStep}
	\label{lem:mdp:bellman-step}
	{
		Let $\mdp = \mdptuple$ be an MDP and let $\mrew \colon \ms \to \PosRealsInf$.  Then for all $\nata \in \Nats$:
		\begin{enumerate}
			\item\label{lem:mdp:bellman-step1}
			$\mopminsiter{\nata+1}(0) \eeq \mylambda{\msa} 
			\newmminexprewb{\nata}{\mrew}{\msa}$
			\bigskip
			\item\label{lem:mdp:bellman-step2}
			$\mopmaxsiter{\nata+1}(0) \eeq \mylambda{\msa} \newmmaxexprewb{\nata}{\mrew}{\msa}$
		\end{enumerate}
		%
%		Moreover, if $\mdp$ is a Markov chain, then for all $\nata \in \Nats$:\cm{We do not have MCs}
%		%
%		\begin{enumerate}
%			\setcounter{enumi}{2}
%			\item \label{lem:mdp:bellman-step3}
%			$\mcopsiter{\nata+1}(0) \eeq \mylambda{\msa} \newmcexprewbs{\nata}{\mrew}{\msa}$
%		\end{enumerate}
	}
\end{restatable}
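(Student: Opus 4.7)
The proof is by induction on $\nata$, proving both parts simultaneously. Since the max and min cases are symmetric, I focus on part~\ref{lem:mdp:bellman-step1}; part~\ref{lem:mdp:bellman-step2} follows by replacing $\inf/\min$ with $\sup/\max$ and adjusting quantifiers for the $\epsilon$-argument accordingly.

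\textbf{Base case ($\nata = 0$).} For any scheduler $\msched$, the only path in $\mpathseq{0}(\msa)$ is $\msa$ itself, and $\mpathprob{\msched}(\msa) = 1$ (empty product). Hence $\newmexprewbs{\msched}{0}{\mrew}{\msa} = \mrew(\msa)$ for every $\msched$, so $\newmminexprewbs{0}{\mrew}{\msa} = \mrew(\msa)$. On the other hand, $\mopminsiter{1}(0)(\msa) = \mopmins(0)(\msa) = \mrew(\msa) + \min_{\macta} 0 = \mrew(\msa)$.

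\textbf{Inductive step.} Assume the claim holds for $\nata$. The key auxiliary step is a one-step decomposition: for every scheduler $\msched$ and every $\msa \in \ms$,
\begin{align*}
\newmexprewbs{\msched}{\nata+1}{\mrew}{\msa} \qeq \mrew(\msa) + \sum_{\msa' \in \msuccs{\msched(\msa)}(\msa)} \mprob(\msa, \msched(\msa), \msa') \cdot \newmexprewbs{\msched[\msa]}{\nata}{\mrew}{\msa'}~,
\end{align*}
where $\msched[\msa]$ denotes the shifted scheduler defined by $\msched[\msa](\msb_0 \ldots \msb_\natk) = \msched(\msa \msb_0 \ldots \msb_\natk)$. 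This identity is obtained by expanding the definition of $\newmexprewbs{\msched}{\nata+1}{\mrew}{\msa}$, factoring $\mpathprob{\msched}(\msa \msa_1 \ldots \msa_{\nata+1}) = \mprob(\msa, \msched(\msa), \msa_1) \cdot \mpathprob{\msched[\msa]}(\msa_1 \ldots \msa_{\nata+1})$, splitting $\mrew(\msa \msa_1 \ldots \msa_{\nata+1}) = \mrew(\msa) + \mrew(\msa_1 \ldots \msa_{\nata+1})$, using that the path probabilities out of $\msa$ (for fixed action $\msched(\msa)$) sum to $1$, and rearranging the (finite) countable sums.

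\textbf{Taking the infimum.} Using the above decomposition, I would prove
\begin{align*}
\newmminexprewbs{\nata+1}{\mrew}{\msa} \qeq \mrew(\msa) + \min_{\macta \in \mact(\msa)} \sum_{\msa' \in \msuccs{\macta}(\msa)} \mprob(\msa, \macta, \msa') \cdot \newmminexprewbs{\nata}{\mrew}{\msa'}~.
\end{align*}
The $\geq$-direction is immediate: for any $\msched$, plugging the decomposition and bounding each $\newmexprewbs{\msched[\msa]}{\nata}{\mrew}{\msa'} \geq \newmminexprewbs{\nata}{\mrew}{\msa'}$ while noting that the chosen action $\msched(\msa)$ is feasible gives the bound. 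The $\leq$-direction requires a scheduler-stitching argument: given $\macta \in \mact(\msa)$ and, for each of the \emph{finitely many} $\msa' \in \msuccs{\macta}(\msa)$, a near-optimal scheduler $\msched_{\msa'}$ witnessing $\newmexprewbs{\msched_{\msa'}}{\nata}{\mrew}{\msa'} \leq \newmminexprewbs{\nata}{\mrew}{\msa'} + \epsilon$ (or a scheduler achieving the infimum exactly if it is $\infty$), I build a combined scheduler $\msched^\star$ defined by $\msched^\star(\msa) = \macta$ and $\msched^\star(\msa \msa' \mpath) = \msched_{\msa'}(\msa' \mpath)$ for $\msa' \in \msuccs{\macta}(\msa)$ (extended arbitrarily on other histories). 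Applying the one-step decomposition to $\msched^\star$ and letting $\epsilon \to 0$ yields the desired upper bound. Finiteness of the successor set is essential here to avoid issues when summing infinitely many $\epsilon$'s, and the conventions $0 \cdot \infty = 0$ and $\infty + r = \infty$ handle the cases with infinite rewards cleanly.

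\textbf{Concluding.} Combining the above with the induction hypothesis $\mopminsiter{\nata+1}(0)(\msa') = \newmminexprewbs{\nata}{\mrew}{\msa'}$ gives
\begin{align*}
\mopminsiter{\nata+2}(0)(\msa) \eeq \mrew(\msa) + \min_{\macta \in \mact(\msa)} \sum_{\msa'} \mprob(\msa,\macta,\msa') \cdot \mopminsiter{\nata+1}(0)(\msa') \eeq \newmminexprewbs{\nata+1}{\mrew}{\msa}~,
\end{align*}
completing the induction. The main obstacle is the $\leq$-direction of the infimum/minimum decomposition: it requires a careful scheduler-stitching together with an $\epsilon$-approximation argument that must gracefully accommodate infinite values in $\PosRealsInf$. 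Once this is done, the max case is the strict dual, with the same stitching idea producing the required near-optimal scheduler from near-optimal continuations.
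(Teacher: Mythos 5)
Your proposal is correct and follows essentially the same route as the paper: induction on $\nata$, a one-step decomposition of the $(\nata{+}1)$-step expected reward at the first transition, and an exchange of the infimum over schedulers with the finite sum over successors via scheduler stitching (the paper phrases this as combining prefix-action schedulers $(\msa,\macta)\cdot\msched$ and justifies the exchange by disjointness of the successor path sets, while you make the $\epsilon$-approximation explicit). The only cosmetic difference is that the paper first rewrites the step-bounded reward as a sum over paths of length at most $\nata$ weighted by the last state's reward (its \Cref{lem:mdp:exrew_characterization}), whereas you work directly with the original definition and shifted schedulers; both are sound.
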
%
\begin{proof}
	By induction on $\nata$. See \Cref{proof:mdp:bellman-step} for details.
	\qed
\end{proof}%
%
%
%
%
%Using some additional auxiliary results given in \Cref{app:mdps}, this enables us to prove our least fixed point construction correct:
%
Our main result states that, in the limit, the least fixed points of the Bellmann operators is equal to the minimal and maximal expected reward, respectively.
%\tw{R2: ``highlight the key results [...] (e.g., Theorem 3) in a way such that the readers can easily grasp them out of the extensive set of preliminaries''. JPK macht sowas (neuerdings?) gerne mit schlichten Boxen um die Theoreme, s.\ z.B.\ \url{https://arxiv.org/pdf/2408.01212} Seite 8. Ich finde das OK, aber kann's auch verstehen wenn man es hässlich findet.}
%\cm{Habe ich keine starke Meinung zu. Habe aber dieses Theorem auch nochmal in den Contributions explizit referenziert.}
%
\begin{restatable}[Expected Rewards via Least Fixed Points]{theorem}{bellmanCorrect}
	\label{thm:mdp:bellman_correct}
	{Let $\mdp = \mdptuple$ be an MDP and $\mrew \colon \ms \to \PosRealsInf$. Then:
		\begin{enumerate}
			\item\label{thm:mdp:bellman_correct1}
			$
			\lfp \mopmins \qeq \mylambda{\msa} \newmminexprew{\mrew}{\msa} 
			$
			\bigskip
			\item\label{thm:mdp:bellman_correct2}
			$
			\lfp \mopmaxs \qeq \mylambda{\msa} \newmmaxexprew{\mrew}{\msa} 
			$
		\end{enumerate}
		%
%		Moreover, if $\mdp$ is a Markov chain, then:\cm{We do not have MCs}
%		%
%		\begin{enumerate}
%			\setcounter{enumi}{2}
%			\item\label{thm:mdp:bellman_correct3}
%			$
%			\lfp \mcops \qeq \mylambda{\msa} \newmcexprew{\mrew}{\msa}~.
%			$
%		\end{enumerate}
	}
\end{restatable}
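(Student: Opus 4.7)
The plan is to combine the Kleene-style iterative characterization from Theorem~\ref{thm:mdp:bellman_cont} with the step-bounded identity from Lemma~\ref{lem:mdp:bellman-step}. Since iterating $\mopmins$ (resp. $\mopmaxs$) once more at the start of the chain does not change its supremum,
\begin{align*}
\lfp\mopmins \qeq \supremum\setcomp{\mopminsiter{n+1}(0)}{n\in\Nats} \qeq \mylambda{\msa}\sup_{n\in\Nats}\newmminexprewbs{n}{\mrew}{\msa}~,
\end{align*}
and analogously $\lfp\mopmaxs = \mylambda{\msa}\sup_{n\in\Nats}\newmmaxexprewbs{n}{\mrew}{\msa}$. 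The theorem thus reduces to interchanging a supremum over step bounds with an infimum (resp.\ supremum) over schedulers. For part~\ref{thm:mdp:bellman_correct2} (max), the interchange is immediate since suprema over independent indices commute: $\sup_{n}\sup_{\msched}\newmexprewbs{\msched}{n}{\mrew}{\msa} = \sup_{\msched}\sup_{n}\newmexprewbs{\msched}{n}{\mrew}{\msa} = \newmmaxexprews{\mrew}{\msa}$.

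For part~\ref{thm:mdp:bellman_correct1} (min), the inequality $\lfp\mopmins \mmleq \mylambda{\msa}\newmminexprews{\mrew}{\msa}$ follows from the standard minimax inequality $\sup_{n}\inf_{\msched}\leq\inf_{\msched}\sup_{n}$. The converse is the main obstacle, and I would establish it by synthesizing a witness from the fixed point. Let $f=\lfp\mopmins$. Since $\mact(\msa)$ is finite and nonempty, the minimum in the Bellman equation $f(\msa)=\mrew(\msa)+\min_{\macta\in\mact(\msa)}\sum_{\msa'\in\msuccs{\macta}(\msa)}\mprob(\msa,\macta,\msa')\cdot f(\msa')$ is attained at every $\msa$; pick a minimizer to obtain a memoryless scheduler $\msched^*\in\mmscheds$. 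Consider the sub-MDP that retains only $\msched^*(\msa)$ enabled in each state $\msa$. In this sub-MDP, the min- and max-Bellman operators both coincide with the single-scheduler operator $T_{\msched^*}(\mval)(\msa)=\mrew(\msa)+\sum_{\msa'\in\msuccs{\msched^*(\msa)}(\msa)}\mprob(\msa,\msched^*(\msa),\msa')\cdot \mval(\msa')$. By construction $T_{\msched^*}(f)=f$, and the monotonicity of $T_{\msched^*}$ together with $0\mmleq f$ gives $T_{\msched^*}^{n+1}(0)\mmleq f$ for every $n$. Applying Lemma~\ref{lem:mdp:bellman-step} inside the sub-MDP (where the step-bounded minimum coincides with $\newmexprewbs{\msched^*}{n}{\mrew}{\msa}$, since $\msched^*$ is the only scheduler) yields $T_{\msched^*}^{n+1}(0)(\msa)=\newmexprewbs{\msched^*}{n}{\mrew}{\msa}$. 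Taking suprema over $n$ gives $\newmexprews{\msched^*}{\mrew}{\msa}\leq f(\msa)$, whence $\newmminexprews{\mrew}{\msa}\leq\newmexprews{\msched^*}{\mrew}{\msa}\leq\lfp\mopmins(\msa)$, completing the proof.

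The principal subtlety is the extended-real arithmetic. Finiteness of $\mact(\msa)$ is essential: it ensures that the pointwise minimum over the finite subset $\{\sum_{\msa'}\mprob(\msa,\macta,\msa')\cdot f(\msa')\mid\macta\in\mact(\msa)\}\subseteq\PosRealsInf$ is always attained, so that $\msched^*$ is well-defined even when some $f(\msa')=\infty$. One must also verify that $T_{\msched^*}$ is monotone on all of $\mvals$ and that the identity $T_{\msched^*}^{n+1}(0)(\msa)=\newmexprewbs{\msched^*}{n}{\mrew}{\msa}$ goes through transparently under the conventions $0\cdot\infty=0$ and $\infty+\infty=\infty$; this is a routine induction on $n$ mirroring the proof of Lemma~\ref{lem:mdp:bellman-step}.
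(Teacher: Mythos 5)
Your proposal is correct and follows essentially the same route as the paper: the max case via Kleene iteration, \Cref{lem:mdp:bellman-step}, and commuting suprema; the min case via the standard $\sup\inf \leq \inf\sup$ inequality plus a memoryless scheduler synthesized by taking argmins of the Bellman minimum at the least fixed point (this is exactly the paper's \Cref{lem:mdp:min_opt_sched}). The only cosmetic difference is that you bound that scheduler's value by $\lfp \mopmins$ directly via monotonicity of the single-scheduler operator, whereas the paper proves the full equality with the least fixed point of the induced Markov chain's Bellman operator and then invokes the already-established chain instance of the theorem.
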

\begin{proof}[sketch]
	For maximal expected rewards, the claim follows from \Cref{thm:prelim:domain_theory:kleene} and \Cref{lem:mdp:bellman-step}.\ref{lem:mdp:bellman-step2}. The proof for minimal expected rewards is more involved and relies on the fact $\lfp \mopmins$ gives rise to an optimal memoryless scheduler (cf.\ \Cref{lem:mdp:min_opt_sched} in the appendix). The full proof is given in \Cref{proof:mdp:bellman_correct}.
	\qed
\end{proof}%
\begin{example}[Expected Rewards as Least Fixed Points in Action]
	\label{ex:bellman-fixpoint}
	In general, the Bellmann operators of MDPs may have many fixed points.
	In particular, for every MDP $\mdp$ with reward function $\mrew$, the constant value function $\infty \mathrel{\gray{({=}}} \gray{\mylambda{\msa}\infty)}$ is a trivial fixed point of both $\mopmins$ and $\mopmaxs$.
	%There may (and for many interesting examples:~\emph{will}), however, exist non-trivial fixed points strictly smaller than $\infty$.
	For many interesting examples, however, the \emph{least} fixed point is strictly smaller than $\infty$.
	For instance, if $\mdp$ is the MDP from \Cref{fig:running_example_mdp}, then the value function $\mval$ with 
	\[
		\mval(s_i) = \infty,
		\quad
		\mval(s_i^L) = \infty,
		\quad
		\mval(s_i^R) = i+1,
		\quad\text{and}\quad
		\mval(\opsink) = 0~,
	\]
	for all $i \in \Nats$, is a non-trivial fixed point of both $\mopmaxs$ and $\mopmins$.
    This holds for all possible concrete values of the reward $r$.
	The value function $\mval$ is a fixed point of $\mopmaxs$ because, for all $i \in \Nats$, we have (cf.\ \Cref{ex:bellman-operators})
	\begin{align*}
		&{\max} \left\{ \mval(s_i^L), \frac 1 2 \mval(s_{i+1}) + \frac 1 2 \mval(s_i^R)\right\} 
		= 
		\max \left\{ \infty, \frac 1 2 \cdot \infty + \frac 1 2  (i + 1)\right\}
		=
		\infty 
		=
		\mval(s_i) ~,
		\\
		&r + \mval(s_{i+1})
		\eeq
		r + \infty 
		\eeq
		\infty
		\eeq
		\mval(s_i^L) ~,
		\\
		&i+1 + \mval(\opsink)
		\eeq
		i+1 + 0
		\eeq
		\mval(s_i^R) ~, \qand \\
		&\mval(\opsink)
		\eeq
		0
		\eeq
		\mval(\opsink)
		~.
	\end{align*}
    The fact that $\mval$ is also a fixed point of $\mopmins$ follows similarly by noticing that%
    \begin{align*}
        &{\min} \left\{ \mval(s_i^L), \frac 1 2 \mval(s_{i+1}) + \frac 1 2 \mval(s_i^R)\right\} 
        = 
        \min \left\{ \infty, \frac 1 2 \cdot \infty + \frac 1 2  (i + 1)\right\}
        =
        \infty 
        =
        \mval(s_i) ~.
    \end{align*}%
	We claim that $\mval$ as defined above is indeed the \emph{least} fixed point of $\mopmaxs$ (it is, however, \emph{not} the least fixed point of $\mopmins$, see below).
	This already follows with \Cref{thm:mdp:bellman_correct} and \Cref{ex:maxERminER}, but it is instructive to give a direct proof:
	
	Towards a contradiction, assume that $\mval' \mleq \mval \neq \mval'$ is another fixed point of $\mopmaxs$.
	It is immediate that $\mval'(s_i^R) = \mval(s_i^R)$ for all $i \in \Nats$, and hence
	$\mval'(s_i) < \infty$ or $\mval'(s_i^L) < \infty$ for some $i \in \Nats$.
	Since $\mval'$ is a fixed point, this already implies that $\mval'(s_i) < \infty$ and $\mval'(s_i^L) < \infty$ for \emph{all} $i \in \Nats$.
	Now, consider the following: 
	\begin{align*}
	   \mval'(s_0) \ggeq \frac 1 2 \mval'(s_0^R) \eeq \frac 1 2& \\
		\mval'(s_0) \ggeq \mval'(s_0^L) \geq \mval'(s_1) \ggeq \frac 1 2 \mval'(s_1^R) \eeq 1&  \\
		\mval'(s_0) \ggeq \mval'(s_0^L) \geq \mval'(s_1) \ggeq \mval'(s_1^L) \ggeq \mval'(s_2) \ggeq \frac 1 2 \mval'(s_2^R) \eeq \frac 3 2& ~, \text{ etc.}
	\end{align*}
	The above argument shows that $\mval'(s_0)$ is greater than \emph{any} real number, and so it must hold that $\mval'(s_0) = \infty$, contradiction.
	
	We now consider the least fixed point of $\mopmins$.
	If $r = 0$, then it is easy to see that the least fixed point is given by $\mval(s_i) = \mval(s_i^L) = \mval(\opsink) = 0$ and $\mval(s_i^R) = i+1$, for all $i \in \Nats$.
	On the other hand, if $r > 0$, then one can verify that the least fixed point is $\mval$ with $\mval(s_i) = i+2$, $\mval(s_i^L)= r + i + 3$, $\mval(s_i^R) = i+1$ and $ = \mval(\opsink) = 0$ for all $i \in \Nats$.%
	\triangleqed%
\end{example}

\section{Existence of Optimal Schedulers for Expected Reward}
\label{sec:exrew_schedulers}
%!TEX root = ./main.tex

We saw in \Cref{sec:expected_rewards_def} that minimal and maximal expected rewards are defined via the infimum and supremum, respectively, over all schedulers for an MDP.
However, even though the maximal expected reward is a well-defined number, that does not necessarily mean that there also \emph{exists} a scheduler that actually attains this supremum.
It may well be that there is only a sequence of schedulers that converges to the maximal expected reward.
By contrast, for minimal expected rewards, there always exists a single scheduler.

For restricted classes of MDPs, these results are known (cf.~\cite[Sections 7.23 and 7.32]{puterman}). In this section, we show that they still hold for our generalized expected rewards.

%In this section, we will hence briefly comment on the actual existence of \emph{optimal} schedulers for expected rewards.

Let $\mdp = \mdptuple$ be an MDP and let $\mrew\colon \ms \to \PosRealsInf$ be a reward function.
Recall from \Cref{sec:expected_rewards_def} that a scheduler $\msched$ is \emph{\highlight{min-optimal for $\mdp$ with respect to $\mrew$ at state $\msa$}}, if $\msched$ attains the minimal expected reward, i.e.\ if%
\[
\newmexprew{\msched}{\mrew}{\msa} \eeq \newmminexprew{\mrew}{\msa}~.
\]%
We say that $\msched$ is \emph{\highlight{uniformly} min-optimal for $\mdp$ w.r.t.\ $\mrew$}, if $\msched$ is min-optimal at all states $\msa\in\ms$, i.e. if 
\[
\tforall{\msa\in\ms}\colon\quad  \newmexprew{\msched}{\mrew}{\msa} \eeq \newmminexprew{\mrew}{\msa}~.
\]
For maximal expected rewards, both notions are defined analogously.

\begin{figure}[t]
    \centering
    \begin{tikzpicture}[on grid, node distance=15mm and 25mm,actstyle/.style={draw,circle,fill=black,inner sep=1pt},every state/.style={inner sep=0pt, minimum size=6mm}]
    \node[state] (0) {$s_0$};
    \node[state,right=of 0] (2) {$s_1$};
    \node[state,right=of 2] (4) {$s_2$};
    \node[state,right=of 4] (6) {$s_3$};
    \node[right=of 6] (dots) {$\cdots$};
    
    \node[state,below=of 0, label=180:\fbox{$0\vphantom{2}$}] (1) {$s_0'$};
    \node[state,right=of 1,label=180:\fbox{$1\vphantom{2}$}] (3) {$s_1'$};
    \node[state,right=of 3,label=180:\fbox{$2\vphantom{2}$}] (5) {$s_2'$};
    \node[state,right=of 5,label=180:\fbox{$3\vphantom{2}$}] (7) {$s_3'$};
    \node[right=of 6] (dots') {$\cdots$};

    \node[state,below=of 5] (sink) {$\opsink$};
    
    \draw[->] (0) edgenode[actstyle,pos=0.5] {} node[pos=0.25,above]{$\actl$} node[pos=0.75,above]{} (2);
    \draw[->] (2) edgenode[actstyle,pos=0.5] {} node[pos=0.25,above]{$\actl$} node[pos=0.75,above]{} (4);
    \draw[->] (4) edgenode[actstyle,pos=0.5] {} node[pos=0.25,above]{$\actl$} node[pos=0.75,above]{} (6);
    \draw[->] (6) edgenode[actstyle,pos=0.5] {} node[pos=0.25,above]{$\actl$} node[pos=0.75,above]{} (dots);
    
    \draw[->] (0) edgenode[actstyle,pos=0.5] {} node[pos=0.25,right]{$\actr$} node[pos=0.75,above]{} (1);
    \draw[->] (2) edgenode[actstyle,pos=0.5] {} node[pos=0.25,right]{$\actr$} node[pos=0.75,above]{} (3);
    \draw[->] (4) edgenode[actstyle,pos=0.5] {} node[pos=0.25,right]{$\actr$} node[pos=0.75,above]{} (5);
    \draw[->] (6) edgenode[actstyle,pos=0.5] {} node[pos=0.25,right]{$\actr$} node[pos=0.75,above]{} (7);
    
    \draw[->] (1) edgenode[actstyle,pos=0.5] {} node[pos=0.25,below]{$\actn$} node[pos=0.75,above]{} (sink);
    \draw[->] (3) edgenode[actstyle,pos=0.5] {} node[pos=0.25,above]{$\actn$} node[pos=0.75,above]{} (sink);
    \draw[->] (5) edgenode[actstyle,pos=0.5] {} node[pos=0.25,right]{$\actn$} node[pos=0.75,above]{} (sink);
    \draw[->] (7) edgenode[actstyle,pos=0.5] {} node[pos=0.25,below right]{$\actn$} node[pos=0.75,above]{} (sink);

    \path[->]
    (sink) edge[loop below,looseness=10] node[pos=0.5,actstyle,yshift=2pt]  {} node[pos=0.25,right] {$\actn$} node[pos=0.75, right] {} (sink)
    ;
    
\end{tikzpicture}
    \caption{An MDP where no max-optimal scheduler exist at any state $s_i$, $i \in \Nats$.}
    \label{fig:no-opt-sched}
\end{figure}
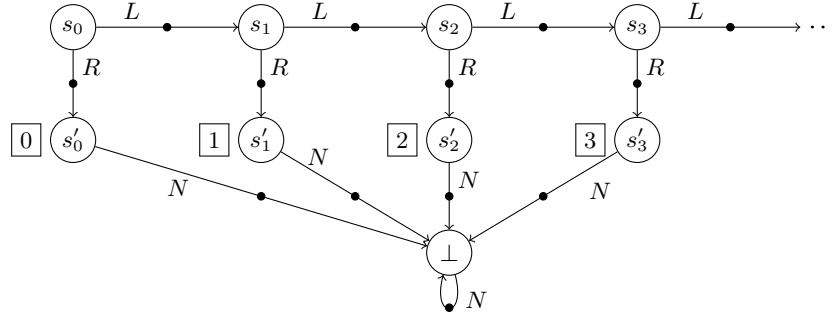

We obtain as a by-catch of \Cref{lem:mdp:min_opt_sched} and \Cref{thm:mdp:bellman_correct}.\ref{thm:mdp:bellman_correct1} that there is always a uniformly \emph{min}-optimal scheduler $\msched$, which is moreover memoryless.%
%\footnote{A similar result is known for \emph{expected total rewards} in case the reward function is pointwise smaller than $\infty$  \cite[Theorem 7.3.6a]{puterman}. Contrary to that result, we do not need this restriction.}%
%
%
\begin{theorem}[Optimal Schedulers for Min.~Expected Rewards]
	\label{thm:mdp:opt_min_strats}
	Let $\mdp = \mdptuple$ be an MDP and let $\mrew$ be a reward function for $\mdp$. 
	There is a memoryless uniformly min-optimal scheduler $\msched \in \mmscheds$ for $\mdp$.
%	, i.e.\ such that
%	%
%	\[
%	\tforall{\msa\in\ms}\colon\quad 
%	\newmminexprews{\mrew}{\msa} \eeq \newmexprews{\msched}{\mrew}{\msa}~.
%	\]
\end{theorem}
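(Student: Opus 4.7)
The plan is to piggy-back on Theorem~\ref{thm:mdp:bellman_correct}.\ref{thm:mdp:bellman_correct1} by constructing a memoryless scheduler that attains the action-minimum at the least fixed point, and then showing it realises $\lfp \mopmins$ as its own expected reward. Since $\mact(\msa)$ is finite for every $\msa \in \ms$, the minimum in the definition of $\mopmins$ is attained. I would therefore define $\msched \in \mmscheds$ by picking, for each state $\msa$, an action witnessing that minimum at the fixed point:
\[
  \msched(\msa) \iin \argmin_{\macta \in \mact(\msa)} \sum_{\msa' \in \msuccs{\macta}(\msa)} \mprob(\msa, \macta, \msa') \cdot \lfp \mopmins(\msa')~.
\]
This is precisely the content extracted from the appendix's Lemma~\ref{lem:mdp:min_opt_sched}.

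Next, consider the MDP $\mdp_\msched$ obtained from $\mdp$ by restricting the enabled actions at each state $\msa$ to the singleton $\{\msched(\msa)\}$, and let $\Phi_\msched$ denote its min-Bellman operator. By the very choice of $\msched$, evaluating the outer $\min$ in $\mopmins$ at $\lfp \mopmins$ returns the single summand contributed by $\msched(\msa)$, so
\[
  \Phi_\msched(\lfp \mopmins) \eeq \mopmins(\lfp \mopmins) \eeq \lfp \mopmins~.
\]
Hence $\lfp \mopmins$ is a fixed point of $\Phi_\msched$, and by the Knaster--Tarski theorem $\lfp \Phi_\msched \mmleq \lfp \mopmins$. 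Since $\msched$ is the unique scheduler of $\mdp_\msched$, Theorem~\ref{thm:mdp:bellman_correct}.\ref{thm:mdp:bellman_correct1} applied to $\mdp_\msched$ gives $\lfp \Phi_\msched(\msa) = \newmexprews{\msched}{\mrew}{\msa}$, and path probabilities under $\msched$ agree in $\mdp$ and $\mdp_\msched$. A sandwich then closes the argument:
\[
  \newmminexprews{\mrew}{\msa} \lleq \newmexprews{\msched}{\mrew}{\msa} \eeq \lfp \Phi_\msched(\msa) \lleq \lfp \mopmins(\msa) \eeq \newmminexprews{\mrew}{\msa}~,
\]
where the first inequality is the defining infimum over schedulers and the final equality is again Theorem~\ref{thm:mdp:bellman_correct}.\ref{thm:mdp:bellman_correct1}. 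Equality throughout shows that $\msched$ is uniformly min-optimal.

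The main obstacle is hidden inside Lemma~\ref{lem:mdp:min_opt_sched}: one has to guarantee that the pointwise argmin choice does yield a well-defined memoryless scheduler of $\mdp$ and that, viewed inside $\mdp_\msched$, the resulting expected reward is really captured by $\lfp \Phi_\msched$. Finiteness of $\mact(\msa)$ is essential for the former -- for infinite action sets only approximating schedulers would survive -- while the latter requires checking that $\mdp_\msched$ still fits the framework developed in Sections~\ref{sec:prelim:mdp}--\ref{sec:rewards_lfp}, which is immediate because each state of $\mdp_\msched$ has exactly one enabled action. Once $\msched$ is in hand, the rest is the short and entirely fixed-point-theoretic sandwich above, avoiding any direct manipulation of infinite sums over paths.
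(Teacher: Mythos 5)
Your proposal is correct and follows essentially the same route as the paper: the paper also constructs the memoryless scheduler via the pointwise $\argmin$ at $\lfp \mopmins$ (this is exactly Lemma~\ref{lem:mdp:min_opt_sched}), relates the Bellman operator of the induced Markov chain to $\mopmins$ by a Park/Knaster--Tarski argument, and then closes the sandwich with \Cref{thm:mdp:bellman_correct}.\ref{thm:mdp:bellman_correct1} and the agreement of expected rewards between $\mdp$ under $\msched$ and the induced chain (\Cref{def:mdp:induced_mc}). The only cosmetic difference is that the paper proves the two inequalities $\lfp \mopmins \mleq \lfp \mcop{\mdp(\msched)}{\mrew}$ and the converse to get an equality, whereas you only need the one direction your sandwich uses.
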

Since such a memoryless scheduler actually exists, the infimum from \Cref{def:mdp:exrew}.\ref{def:mdp:exrew4} can be expressed as a \emph{minimum}, namely%
\begin{align}
	\label{eqn:mdp:minexprew_min}
	\newmminexprews{\mrew}{\msa} \qeq  \min_{\msched \in \mmscheds} \newmexprews{\msched}{\mrew}{\msa} ~.
\end{align}
%\begin{proof}
%	This follows from \Cref{lem:mdp:min_opt_sched} and \Cref{thm:mdp:bellman_correct}.\ref{thm:mdp:bellman_correct1}.
%\end{proof}
%
Regarding \emph{maximal} expected rewards, it is known that \emph{neither} uniform \emph{nor} non-uniform max-optimal schedulers necessarily exist, see, e.g.\ \cite[Example~7.2.4]{puterman}.
Indeed, for the MDP depicted in \Cref{fig:no-opt-sched}, there does not exist a max-optimal scheduler at any state $\msa_i$, $i \in \Nats$.
This is because for every $n \in \Nats$ there is a scheduler $\msched_n$ achieving reward $\newmexprews{\msched_n}{\mrew}{\msa} = n + i$ from $s_i$ by playing the $\actl$-action $n-1$ times and then the $\actr$-action to collect reward $n+i$.
Hence, $\newmmaxexprews{\mrew}{\msa_i} = \infty$, but there does not exist a scheduler that actually achieves infinite expected reward (note that the scheduler that always chooses $\actl$ obtains zero reward).
In particular, the supremum from \Cref{def:mdp:exrew}.\ref{def:mdp:exrew6} is \emph{not necessarily attained}.
Somewhat analogous to \Cref{eqn:mdp:minexprew_min}, one could, however, ask whether the supremum from \Cref{def:mdp:exrew}.\ref{def:mdp:exrew6} over \emph{all} schedulers can at least be replaced by a supremum over \emph{memoryless} schedulers, i.e. whether for all states $\msa$,
\[
\newmmaxexprews{\mrew}{\msa} ~{}\stackrel{?}{=}{}~  \sup_{\msched \in \mmscheds} \newmexprews{\msched}{\mrew}{\msa} ~.
\]
This is the case if and only if, for all states $\msa$ and all $\epsilon>0$, there is an $\epsilon$-max-optimal memoryless scheduler $\msched$, i.e. a memoryless scheduler $\msched$ with 
\[
\newmmaxexprews{\mrew}{\msa} - \epsilon ~{}\leq{}~ \newmexprews{\msched}{\mrew}{\msa} ~.
\]
While the above is known for positive models if the expected reward is strictly less than $\infty$ for every state~\cite[Corollary 7.2.8]{puterman}, the answer to this question is, to the best of our knowledge, \emph{open} when admitting infinite (expected) rewards.
An in-depth treatment of scheduler existence is outside the scope of this paper.%
\begin{example}[Optimal Schedulers for MDPs]
\label{ex:opt-sched}
	Reconsider the MDP from \Cref{fig:running_example_mdp} on page~\pageref{fig:running_example_mdp} together with the reward function indicated by the boxed numbers next to the states.
	\Cref{thm:mdp:opt_min_strats} asserts existence of a uniformly min-optimal memoryless scheduler.
	Indeed, a memoryless scheduler $\msched$ is uniformly min-optimal iff,
	\begin{itemize}
		\item assuming $r=0$, $\msched(s_i) = \actl$ for all $ i \in \Nats$;
		\item assuming $r>0$, $\msched(s_i) = \actr$ for all $ i \in \Nats$.
	\end{itemize}
	We now consider max-optimal schedulers.
	First, if $r>0$, then there exists a uniformly max-optimal memoryless scheduler $\msched$ with $\msched(s_i) = \actl$ for all $ i \in \Nats$.
	This scheduler attains the least fixpoint of $\mopmaxs$ described in \Cref{ex:bellman-fixpoint}.
	For the case $r=0$ we claim that the scheduler $\msched$ which is uniquely determined by requiring that $\msched(s_i) = \actr$ iff $i = 2^k$ for some $k \in \Nats$ is uniformly max-optimal.
	Indeed, we have%
	\begin{align*}
		\newmexprews{\msched}{\mrew}{s_i}
		\eeq
		&\frac 1 2 (i+1) + \frac 1 4 (i+2) + \frac 1 8 (i+4 ) + \ldots \\
		\eeq
		&i + \frac 1 2 \left(\frac 1 2 \cdot 2 + \frac 1 4 \cdot 4 + \frac 1 8 \cdot 8 + \ldots\right)
		\eeq\infty
		~.
	\end{align*}%
	We recall that, in general, max-optimal schedulers are not guaranteed to exist in infinite-state MDPs, see \Cref{fig:no-opt-sched}.%
	\triangleqed%
\end{example}

\section{Probabilistic Programs}
\label{sec:pp_tick}
%!TEX root = ./main.tex

We will now turn towards \enquote{an application to operational semantics of probabilistic semantics}.
For that, we will introduce a simple probabilistic programming language \`a la McIver \& Morgan \cite{DBLP:series/mcs/McIverM05}.
Besides standard constructs such as assignments, while loops as well as conditional, probabilistic, and nondeterministic branching, our language has a designated statement for modelling the collection of rewards during execution.

In this section, we discuss syntax and operational semantics of our language, i.e.\ a single infinite-state MDP describing all program executions.
In \Cref{sec:prelim:wp:calculus}, we will present two weakest preexpectation calculi for the same language and -- using \Cref{thm:mdp:bellman_correct} -- prove them sound with respect to the operational semantics.

%For that, we introduce a simple programming language \`a la McIver \& Morgan \cite{DBLP:series/mcs/McIverM05} featuring standard constructs such as variable assignments, conditional branching, and while loops.\cm{Clarify that this language is actually an amalgam of multiple languages?}
%Additionally, our probabilistic programs will feature the possibility to throw a (biased) coin and base their control flow on the outcome of the coin toss and to model rewards collected during execution by means of a designated statement.

%\blkin{TODODODOTODODODOD what is in this section??? THEN THEN THEN MAYBE ONLY IN NEXT SECTION In program semantics, it is quite usual to give semantics to loops or more general recursion as (least) fixed points in some suitable domain.}

\subsection{A Probabilistic Guarded Command Language with Rewards}
\label{sec:pgcl-syntax}
Let $\highlight{\Vars} = \{\vara,\varb,\varc,\ldots\}$ be a countably infinite \emph{\highlight{set of program variables}}.
A \emph{\highlight{(program) state}} is a function of type $\Vars \to \PosRats$. Program states are denoted by $\statea,\stateb$, and variations thereof. 
We restrict to \emph{non-negative} rationals for the sake of convenience (this is further discussed in \Cref{sec:prelim:wp:expectations}). 
%We ensurprogram states will be part of the state space of 
%
%To not run later into distracting and unessential measurability questions\tw{What does this have to do with measurability? I thought we avoid ever speaking about measure theory anyway ...}, 
Since program states will be part of the state space of an MDP, which, by \Cref{def:prelim:mdps}, must be countable, we must ensure that there are only countably many program states. Hence, we define the  \emph{\highlight{set of program states}} as
\[
\highlight{\States} \eeq \setcomp{\statea \colon \Vars \to \PosRats}{ \text{the set $\setcomp{\vara \in \Vars}{\statea(\vara)>0}$ is finite}}~.
\]%
In other words, we assume that all but finitely many variables evaluate to zero for every program state.
This is a fair assumption because only finitely many variables will appear in any given program.
%This is a bit like the infinitely long tape of a Turing machine on which all non-input cells are initialized with a blank.
%We tacitly assume that all program states are taken from $\States$. 

A \emph{\highlight{predicate (over \States)}} is a set $\highlight{\preda} \subseteq \States$ and the \emph{\highlight{set of predicates}} is \highlight{$\Preds$}. We usually write $\statea \models \preda$ instead of $\statea \in \preda$ and say that \emph{$\statea$ satisfies $\preda$}. Moreover, we employ the usual logical connectives between predicates. For instance, we write%
\[
\preda \wedge \predb \qquad\text{instead of}\qquad  \preda \cap \predb~.
\]%
With these notions at hand, we introduce our probabilistic programming language $\pgcl$.%
\begin{definition}[The Probabilistic Guarded Command Language]
	\label{def:prelim:wp:pgcl}
	The set $\pgcl$ of programs in the \highlight{\emph{probabilistic guarded command language}} is given \mbox{by the grammar}
	\begin{align*}
		\cc \quad \longrightarrow \quad& \phantom{{}\mid{}~} \SKIP  \tag{effectless program}\\
		&{}\mid{}~ \ASSIGN{\vara}{\Axa}  \tag{assignment}\\
		&{}\mid{}~ \TICK{\tickrew}  \tag{collect reward}\\
		&{}\mid{}~ \COMPOSE{\cc}{\cc}  \tag{sequential composition}\\
		&{}\mid{}~ \PCHOICE{\cc}{\proba}{\cc} \tag{probabilistic choice}\\
		&{}\mid{}~ \NDCHOICE{\cc}{\cc} \tag{nondeterministic choice}\\
		&{}\mid{}~ \ITE{\Bxa}{\cc}{\cc} \tag{conditional choice}\\
		&{}\mid{}~ \WHILEDO{\Bxa}{\cc}~, \tag{while loop}
	\end{align*}
	where:
	\begin{enumerate}
		\item $\highlight{\Axa} \colon \States \to \PosRats$ is an \emph{\highlight{arithmetic expression}}, 
		\item $\highlight{\tickrew} \in \PosRatsInf$ is a \emph{\highlight{rational reward}} (or $\infty$ ),
		\item $\highlight{\proba} \in [0,1] \cap \Rats$ is a \emph{\highlight{rational probability}}, and
		\item $\highlight{\Bxa} \in \Preds$ is a predicate also referred to as a \emph{\highlight{guard}}.
	\end{enumerate}
	
\end{definition}
The command $\TICK{\tickrew}$ collects reward $\tickrew$ and then immediately terminates. 
All other commands are fairly standard: $\SKIP$ does nothing, i.e.\ it behaves like $\TICK{0}$. The assignment $\ASSIGN{\vara}{\Axa}$ assigns the value of expression $\Axa$ (evaluated in the current program state) to variable $\vara$. The sequential composition $\COMPOSE{\cc_1}{\cc_2}$ executes $\cc_1$ followed by $\cc_2$. The probabilistic choice $\PCHOICE{\cc_1}{\proba}{\cc_2}$ executes $\cc_1$ with probability $\proba$ and $\cc_2$ with probability $(1-\proba)$. By constrast, $\NDCHOICE{\cc_1}{\cc_2}$ nondeterministically executes either $\cc_1$ or $\cc_2$.
$\ITE{\Bxa}{\cc_1}{\cc_2}$ executes $\cc_1$ if the guard $\Bxa$ holds in the current program state; otherwise, it executes $\cc_2$.
Similarly, the loop $\WHILEDO{\Bxa}{\cc}$ checks if the guard $\Bxa$ holds. If yes, it executes $\cc$ and repeats. Otherwise, it stops immediately like $\SKIP$.

To simplify the presentation, we abstract from a concrete syntax for arithmetic or Boolean expressions.
Instead, we admit any function $\States \to \Vals$ and any predicate in $\Preds$ as expressions.
However, to ensure that the set $\pgcl$ of programs is countable, we assume that (i) both arithmetic and Boolean expressions are taken from some countable universe of expressions and (ii) rewards are taken from the countable set of non-negative rationals or infinity. 

\begin{figure}[t]
	\abovedisplayskip=0pt%
	\belowdisplayskip=0pt%
	\begin{align*}
		& \WHILE{x = 0} \\
		& \qquad \{\, \TICK{\tickrew} \,\} \\
		& \qquad \ndchoicesymb \\
		& \qquad \{\, \PCHOICE{\SKIP}{\nicefrac 1 2}{\ASSIGN{x}{1}} \,\} \,; \\
		& \qquad \ASSIGN{y}{y + 1} \quad \}
	\end{align*}%
	\normalsize%
	\caption{An example $\pgcl$ program. We assume that $\tickrew \in \PosRatsInf$ is a constant.}%
	\label{fig:running-example-pgcl}%
\end{figure}

\subsection{Operational MDP Semantics of $\textbf{\textup{\textsf{pGCL}}}$}
\label{sec:prelim:pgcl:opsemtick}
\begin{figure}[t]
	\centering
	\scriptsize
\begin{tikzpicture}[actstyle/.style={draw,circle,fill=black,inner sep=1pt},scale=0.8]

	\node[rectangle, rounded corners,draw=black] at (0,0) (s0) {$(\WHILEDO{\vara=0}{\ldots}, \, \vara=0 \wedge \varb=0)$};
	
	\node[rectangle, rounded corners,draw=black] at (0,-2) (s1) {$(\COMPOSE{\NDCHOICE{\ldots}{\ldots}}{\ldots}, \, \vara=0 \wedge\varb = 0)$};
	
	\node[rectangle, rounded corners,draw=black,label=165:\fbox{$r\vphantom{2}$}] at (0,-4) (s2) {$(\COMPOSE{\TICK{\tickrew}}{\ldots}, \, \vara=0 \wedge\varb = 0)$};
	\node[rectangle, rounded corners,draw=black] at (6,-4) (s3) {$(\COMPOSE{\PCHOICE{\ldots}{\nicefrac 1 2}{\ldots}}{\ldots}, \, \vara=0 \wedge\varb = 0)$};
	
	\node[rectangle, rounded corners,draw=black] at (0,-6) (s4) {$(\COMPOSE{\ASSIGN{\varb}{\varb+1}}{\ldots}, \, \vara=0 \wedge\varb = 0)$};
   
	\node[rectangle, rounded corners,draw=black] at (4.75,-6) (s5) {$(\COMPOSE{\SKIP}{\ldots}, \, \vara=0 \wedge\varb = 0)$};
    \node[rectangle, rounded corners,draw=black] at (3.75,-8) (s8) {$(\COMPOSE{\ASSIGN{\varb}{\varb+1}}{\ldots}, \, \vara=0 \wedge\varb = 0)$};
    
    \node[rectangle, rounded corners,draw=black] at (9.25,-6) (s7) {$(\COMPOSE{\ASSIGN{\vara}{1}}{\ldots}, \, \vara=0 \wedge\varb = 0)$};
    \node[rectangle, rounded corners,draw=black] at (9,-8) (s9) {$(\COMPOSE{\ASSIGN{\varb}{\varb+1}}{\ldots}, \, \vara=1 \wedge\varb = 0)$};
    
    \node[rectangle, rounded corners,draw=black] at (9,-10) (s10) {$(\WHILEDO{\vara=0}{\ldots}, \, \vara=1 \wedge\varb = 1)$};
	
	\node[rectangle, rounded corners,draw=black] at (0,-10) (s6) {$(\WHILEDO{x=0}{\ldots}, \, \vara=0 \wedge\varb = 1)$};
    
   	\node[rectangle, rounded corners,draw=black] at (9,-12) (st) {$(\term, \, \vara=1 \wedge\varb = 1)$};
   	\node[rectangle, rounded corners,draw=black] at (11.5,-12) (sink) {$\opsink$};
	
	\node at (0, -12) (sf1) {$\vdots$};
	
	\path[->]
		(s0) edgenode[pos=0.5,actstyle]  {} node[pos=0.25, left] {\scriptsize $\actn$} node[pos=0.75, right] {\scriptsize $1$} (s1);
	;
	
	\path[->]
	(s1) edgenode[pos=0.5,actstyle]  {} node[pos=0.25, left,yshift=2pt] {\scriptsize $\actl$} node[pos=0.75, right,yshift=-2pt] {\scriptsize $1$} (s2);
	;
	
	\path[->]
	(s1) edgenode[pos=0.5,actstyle] {} node[pos=0.25, right,yshift=2pt] {\scriptsize $\actr$} node[pos=0.75, left,yshift=-2pt] {\scriptsize $1$} (s3);
	;
	
	\path[->]
	(s2) edgenode[pos=0.5,actstyle]  {} node[pos=0.25, left] {\scriptsize $\actn$} node[pos=0.75, right] {\scriptsize $1$} (s4);
	;
	
	\path[->]
	(s4) edgenode[pos=0.5,actstyle]  {} node[pos=0.25, left] {\scriptsize $\actn$} node[pos=0.75, right] {\scriptsize $1$} (s6);
	;

	\path[->]
	(s3) edgenode[pos=0.5,actstyle] (choice) {} node[pos=0.25, left] {\scriptsize $\actn$} node[pos=0.75, right] {\scriptsize $\nicefrac 1 2$} (s5);
	;
    
    \draw[->]
    (choice) -- node[auto] {$\nicefrac 1 2$} (s7);
    ;
    
    \path[->]
    (s8) edgenode[pos=0.5,actstyle]  {} node[pos=0.25, left] {\scriptsize $\actn$} node[pos=0.75, right] {\scriptsize $1$} (s6);
    ;
	
	\path[->]
	(s5) edgenode[pos=0.5,actstyle]  {} node[pos=0.25, left] {\scriptsize $\actn$} node[pos=0.75, right] {\scriptsize $1$} (s8);
	;
    
    \path[->]
    (s7) edgenode[pos=0.5,actstyle]  {} node[pos=0.25, left] {\scriptsize $\actn$} node[pos=0.75, right] {\scriptsize $1$} (s9);
    ;
	
		\path[->]
	(s6) edgenode[pos=0.5,actstyle]  {} node[pos=0.25, left,,yshift=1pt] {\scriptsize $\actn$} node[pos=0.75, right,yshift=-1pt] {\scriptsize $1$} (sf1);
	;

    \path[->]
    (s9) edgenode[pos=0.5,actstyle]  {} node[pos=0.25, left] {\scriptsize $\actn$} node[pos=0.75, right] {\scriptsize $1$} (s10);
    ;
    
    \path[->]
    (s10) edgenode[pos=0.5,actstyle]  {} node[pos=0.25, left] {\scriptsize $\actn$} node[pos=0.75, right] {\scriptsize $1$} (st);
    ;
    
    \path[->]
    (st) edgenode[pos=0.5,actstyle]  {} node[pos=0.25,above] {\scriptsize $\actn$} node[pos=0.75,below] {\scriptsize $1$} (sink);
    ;
    
    	\path[->]
    (sink) edge[loop right,looseness=15] node[pos=0.5,actstyle,xshift=-2pt]  {} node[pos=0.25,above] {\scriptsize $\actn$} node[pos=0.75, below] {\scriptsize $1$} (sink)
    ;
    
\end{tikzpicture}%
\normalsize
	\caption{A reachable fragment of $\pgcl$'s operational MDP $\opmdp$ for the loop from \Cref{fig:running-example-pgcl}. Program states are denoted by conjunctions of equalities, i.e.\ $\vara = \nata \wedge \varb=\natb$ indicates that the current state $\statea$ satisfies $\statea(\vara)=\nata$ and $\statea(\varb)=\natb$.
%        This fragment of $\opmdp$ is very similar to the MDP from \Cref{fig:running_example_mdp} on \cpageref{fig:running_example_mdp} expect for various ``intermediate states''.
        }
	\label{fig:opsem-example}
\end{figure}
Towards defining the operational MDP semantics of $\pgcl$, we define a small-step execution relation à la Plotkin \cite{plotkin_op}. We mainly follow the presentation from \cite{qsl_popl} with adaptions from \cite{programmatic}.  We define the (countable) \emph{\highlight{set of configurations}} as
\[
\highlight{\Confs} \eeq \left( \pgcl \cup \{\term\} \right) \times \States ~\cup~ \{\opsink\}\, ~.
\]
Configurations are denoted by $\confa$ and variations thereof.
A configuration of the form $(\cc,\statea)$ indicates that program $\cc$ is to be executed on program state $\statea$.
A configuration of the form \highlight{$(\term, \stateb)$} is called \emph{\highlight{final}} and indicates termination in the final state $\stateb$.
$\opsink$ is a special configuration reached from every final configuration and is needed to properly encode the semantics in terms of expected rewards as will be explained in \Cref{sec:wp_tick}.
We then define the \emph{\highlight{small-step execution relation}}
\[
\highlight{\xrightarrow{}} ~{}\subseteq{}~ \Confs \times \{\actn,\actl,\actr \} \times \left([0,1] \cap \Rats\right) \times \Confs
\]
as the smallest relation satisfying the rules given in \Cref{fig:prelim:op_rules_tick}, where we write $\execrel{\confa}{\macta}{\proba}{\confa'}$ instead of $(\confa,\macta,\proba,\confa') \in\, \xrightarrow{}$. This yields $\pgcl$'s \emph{operational MDP}:
\begin{definition}[Operational MDP]
	\label{def:prelim:opmdp}
	The \emph{\highlight{operational MDP of $\pgcl$}} is%
	\begin{align*}
	\highlight{\opmdp} \eeq (\Confs, \, \mact, \, \mprob)~,\qquad\textnormal{where:}
	\end{align*}%
%	\normalsize%
	%
%	where:
	%
	\begin{enumerate}
		\item $\mact = \{\actn,\actl,\actr\}$ is the set of actions,
		\item $\mprob\colon \Confs \times \mact \times \Confs \to [0,1]$ is the transition probability function with
		\[
		\mprob(\confa,\macta,\confa') \eeq 
		\begin{cases}
			\proba &\text{if $\execrel{\confa}{\macta}{\proba}{\confa'}$} \\
			0 &\text{otherwise}~.
		\end{cases}
		\]
	\end{enumerate}
\end{definition}
\begin{figure}[p]
    \scriptsize
        \begin{enumerate}
            \item Final configurations: \[\infer{
                \execrel{(\term,\statea)}{\actn}{1}{\opsink}
            }{
            }
            \qquad
            \infer{
                \execrel{\opsink}{\actn}{1}{\opsink}
            }{
            }
            \]

            \item Effectless Program: \[\infer{
                \execrel{(\SKIP,\statea)}{\actn}{1}{(\term,\statea)}
            }{}\]
            \item Assignment:\[
            \infer{
                \execrel{(\ASSIGN{x}{\Axa},\statea)}{\actn}{1}{(\term,\statea\statesubst{x}{\Axa(\statea)})}
            }{
            }
            \]%
            \item Reward:\[
            \infer{
                \execrel{(\TICK{\tickrew},\statea)}{\actn}{1}{(\term,\statea)}
            }{
            }
            \]%
            \item Sequential Composition:
            \[
            \infer{
                \execrel{(\COMPOSE{\cc_1}{\cc_2},\statea)}{\macta}{\proba}{(\cc_2,\statea')}
            }{
                \execrel{(\cc_1,\statea)}{\macta}{\proba}{(\term,\statea')}
            }
            %	%  %
            \qquad
            %\\[1.1.5ex]
            %
            \infer{
                \execrel{(\COMPOSE{\cc_1}{\cc_2},\statea)}{\macta}{\proba}{(\COMPOSE{\cc_1'}{\cc_2},\statea')}
            }{
                \execrel{(\cc_1,\statea)}{\macta}{\proba}{(\cc_1',\statea')}
            }
            \]
            \item Probabilistic Choice:
            \[
            \infer{
                \execrel{(\PCHOICE{\cc_1}{\proba}{\cc_2},\statea)}{\actn}{\proba}{(\cc_1,\statea)}
            }{
                \cc_1 \neq \cc_2
            }
            %	%
            \qquad
            %\\[1.1.5ex]
            %
            \infer{
                \execrel{(\PCHOICE{\cc_1}{\proba}{\cc_2},\statea)}{\actn}{1-\proba}{(\cc_2,\statea)}
            }{
                \cc_1 \neq \cc_2
            }
            \]
            \phantom{a}
            \[
            \infer{
                \execrel{(\PCHOICE{\cc}{\proba}{\cc},\statea)}{\actn}{1}{(\cc,\statea)}
            }{
            }
            \]
            \item Nondeterministic Choice:
            \[
            \infer{
                \execrel{(\NDCHOICE{\cc_1}{\cc_2},\statea)}{\actl}{1}{(\cc_1,\statea)}
            }{
            }
            \qquad
            %\\[1.1.5ex]
            %
            \infer{
                \execrel{(\NDCHOICE{\cc_1}{\cc_2},\statea)}{\actr}{1}{(\cc_2,\statea)}
            }{
            }
            \]
            \item Conditional Choice:
            \[
            \infer{
                \execrel{(\ITE{\Bxa}{\cc_1}{\cc_2},\statea)}{\actn}{1}{(\cc_1,\statea)}
            }{
                \statea \models \Bxa
            }
            %	%
            \qquad
            %\\[1.1.5ex]
            %
            \infer{
                \execrel{(\ITE{\Bxa}{\cc_1}{\cc_2},\statea)}{\actn}{1}{(\cc_2,\statea)}
            }{
                \statea \models \neg\Bxa
            }
            \]
            \item While Loop:
            %	%
            \[
            \infer{
                \execrel{(\WHILEDO{\Bxa}{\cc},\statea)}{\actn}{1}{
                    (\COMPOSE{\cc}{\WHILEDO{\Bxa}{\cc}},\statea)
                }
            }{
                \statea \models \Bxa
            }
            %	%
            \qquad
            %\\[1.1.5ex]
            %
            \infer{
                \execrel{(\WHILEDO{\Bxa}{\cc},\statea)}{\actn}{1}{(\term,\statea)}
            }{
                \statea \models \neg\Bxa
            }
            \]
        \end{enumerate}
    \caption{Rules defining the small-step execution relation $\xrightarrow{}$.}
    \label{fig:prelim:op_rules_tick}
\end{figure}

\noindent%
The transition probability function $\mprob$ of the operational MDP $\opmdp$ formalizes the behavior of $\pgcl$ programs informally explained at the end of \Cref{sec:pgcl-syntax}:
If 
\[
\mprob((\cc,\statea),\macta,\confa') \eeq \proba %\\
%\execrel{(\cc,\statea)}{\macta}{\proba}{\confa'}~,
\]
then executing $\cc$ for \emph{one} step on initial state $\statea$ when taking action $\macta$ yields the new configuration $\confa'$ with probability $\proba$. We have $\proba \in (0,1)$ only if executing $\cc$ for one step involves a probabilistic choice. For all other statements, we have $\proba \in \{0,1\}$. Moreover, we have $\size{\mact((\cc,\statea))} =  \{\actl,\actr\}$ iff executing $\cc$ for one step involves a nondeterministic choice, where $\actl$ is the action for the left branch and $\actr$ is the action for the right branch. If executing $\cc$ for one step involves one of the remaining statements, then $\mact((\cc,\statea)) = \{\actn\}$, indicating that no nondeterministic choice is possible.

In the above discussion, the $\TICK{\tickrew}$ statements were treated in exactly the same way as the effectless program $\SKIP$.
Indeed, the only influence of $\TICK{\tickrew}$ on the operational semantics of $\pgcl$ is that they induce the following reward function on $\opmdp$:%
%\tw{paragraph and def are new}%
%
\begin{definition}[Reward Function of Operational MDP]
	\label{def:oprew}
	The \emph{reward function $\oprew$} of the operational MDP $\opmdp = (\Confs, \, \mact, \, \mprob)$ for $\pgcl$ is the function $\oprew \colon \Confs \to \PosRealsInf$ with
	$\oprew(\opsink) = 0$ and
	\[
	\oprew\big((\cc,\statea)\big) = \begin{cases}
		\tickrew & \text{ if } \cc = \TICK{\tickrew} \\
        & \text{ or }\, \cc = \COMPOSE{\TICK{\tickrew}}{\cc'} \text{ for some } \cc' \in \pgcl\\
		0 & \text{ else.}
	\end{cases}
	\]
\end{definition}%

\begin{example}
  \Cref{fig:running-example-pgcl} shows a $\pgcl$ program $\cc$ that models our running example, i.e.\ the infinite-state MDP in \Cref{fig:running_example_mdp} on page~\pageref{fig:running_example_mdp}.
  A fragment of the operational MDP $\opmdp$ reachable from an initial configuration $(C,\statea)$ with $\statea(\vara) = \statea(\varb) = 0$ is sketched in \Cref{fig:opsem-example}.
  The reward that $\oprew$ (\Cref{def:oprew}) assigns to the configuration starting with $\langle \TICK{r};\ldots \rangle$ is $r$.
  All other states are assigned zero reward by $\oprew$.
  
  Note that $\opmdp$ has some additional \enquote{intermediate states} not present in the MDP from \Cref{fig:running_example_mdp}.
  Moreover, the latter MDP has several additional rewards attached to the states of the form $s_i^R$ for $i \in \Nats$.
  These rewards, which are special in the sense that they are collected right before reaching the terminal sink state $\opsink$, are not modeled (yet) by the reward function $\oprew$.
  In \Cref{sec:wp_tick} we extend $\oprew$ to account for those special rewards as well.
%  The intuition is that these rewards, which have to be specified externally, i.e.\ in addition to the program, are collected upon program \emph{termination}.
%  This generalizes the notion of a \emph{postcondition}
\end{example}

\section{Weakest Preexpectation Calculi}
\label{sec:wp_tick}
%
%
%!TEX root = ./main.tex

We now present two weakest preexpectation calculi that extend McIver \& Morgan's \cite{DBLP:series/mcs/McIverM05} weakest preexpectations by support for reasoning about expected rewards of $\pgcl$ programs with user-defined reward structures.

Those calculi only differ in their treatment of nondeterministic choice, which may be either demonic (i.e., minimizing) or angelic (i.e. maximizing).
The angelic calculus is similar to the expected runtime calculus~\cite{DBLP:conf/esop/KaminskiKMO16,DBLP:journals/jacm/KaminskiKMO18} if rewards are interpreted as required execution time. To specify a runtime model, which is fixed a-priori in~\cite{DBLP:conf/esop/KaminskiKMO16,DBLP:journals/jacm/KaminskiKMO18}, it then sufficies to place $\TICK{\cdot}$ commands.
%
%We now introduce an extension of Kozen's \cite{kozen83,kozen85} and McIver \& Morgan's \cite{DBLP:series/mcs/McIverM05} weakest preexpectation calculus for reasoning deductively about expected rewards modeled by \pgcl programs. This extension is a variant of Kaminski et al.'s \textsf{ert}-calculus \cite{DBLP:conf/esop/KaminskiKMO16,DBLP:journals/jacm/KaminskiKMO18}  supporting the $\TICK{\cdot}$ statement. 

In \Cref{sec:prelim:wp:expectations}, we define \emph{expectations} -- the objects our calculi operate on.
We present the calculi in \Cref{sec:prelim:wp:calculus}.
Finally, in \Cref{sec:prelim:wp:wp_vs_op}, we leverage \Cref{thm:mdp:bellman_correct} to prove both calculi sound with respect to the MDP semantics from \Cref{sec:pgcl-syntax}.

\subsection{Expectations}
\label{sec:prelim:wp:expectations}
Expectations (think: random variables) are for verification of probabilistic programs what predicates are for verification of ordinary programs.
Technically, they are similar to the complete lattice of value functions introduced in \Cref{def:value-functions}.
The main difference is that the domain of expectations consists of \emph{program} states, whereas the domain of value functions consists of \emph{MDP} states.
We briefly recall the definition of expectations and some notation for conveniently describing them.
\begin{definition}[Expectations]
    The complete lattice of \emph{\highlight{expectations}} is
    \begin{align*}
    \highlight{(\E, \, \eleq)}~, \qquad\textnormal{where:}
    \end{align*}%
    \normalsize%
    \begin{enumerate}
        \item $\highlight{\E} = \States \to \PosRealsInf$ is the \emph{\highlight{set of expectations}},
        \item $\highlight{\eleq}$ is the pointwise lifted order in $\PosRealsInf$, i.e., $\tforall{\FF,\FG\in \E}$
        \[
        \FF \eleq \FG \qiff \tforall{\statea\in\States} \colon \FF(\statea) \leq \FG(\statea)~.
        \]
    \end{enumerate}
\end{definition}
The least element of $(\E, \, \eleq)$ is the constant-zero-expectation $\mylambda{\statea} 0$. The greatest element is $\mylambda{\statea} \infty$. Moreover, suprema and infima are given as the pointwise liftings of suprema and infima in $(\PosRealsInf, \, \leq)$, i.e., for $\Eset \subseteq \E$, we have 
\[
%\tforall{\Eset \subseteq \E} 
%\quad
\supremum \Eset \eeq \mylambda{\statea} \sup \setcomp{\FF(\statea)}{\FF \in \Eset}
\qquad\text{and}\qquad
\infimum \Eset \eeq \mylambda{\statea} \inf \setcomp{\FF(\statea)}{\FF \in \Eset}~.
\]
Given two expectations $\FF,\FG \in \E$, we often write
\[
\FF \sqcup \FG ~\text{instead of}~\supremum\{\FF,\FG\}
\qquad\text{and similarly}\qquad
\FF \sqcap \FG ~\text{instead of}~\infimum\{\FF,\FG\}~.
\]
We define standard arithmetic operations on expectations by pointwise application, e.g.
\[
\FF\cdot \FG \eeq \mylambda{\statea} \FF(\statea)\cdot\FG(\statea)
\qquad \text{and}\qquad
\FF+ \FG \eeq \mylambda{\statea} \FF(\statea)+\FG(\statea)~,
\]
where we recall that $0\cdot \infty = \infty \cdot 0 = 0$. The order of precedence is standard, i.e., $\cdot$ binds stronger than $+$, and we use parentheses to resolve ambiguities. Moreover, we often identify constants $\reala \in \PosRealsInf$ and variables $\vara \in \Vars$ with expectations. That is, we write $\vara$ for the expectation $\mylambda{\statea} \statea(\vara)$ and $\reala$ for the expectation $\mylambda{\statea} \reala$.

We use \emph{\highlight{Iverson brackets}} $\highlight{\iverson{\preda}}$ to turn predicates $\preda \in \Preds$ into expectations, i.e. 
\[
\highlight{\iverson{\preda}} \eeq \mylambda{\statea}
\begin{cases}
    1 &\text{if $\statea \models \preda$} \\
    0 & \text{if $\statea \not\models \preda$}~.
\end{cases}
\]%
Finally, given $\FF \in \E$, a variable $\vara$, and an arithmetic expression $\Axa \colon \States \to \Vals$, the \emph{\highlight{expectation $\FF\expsubst{\vara}{\Axa}$ obtained from substituting $\vara$ in $\FF$ by $\Axa$}} is given by
\[
\highlight{\FF\expsubst{\vara}{\Axa}} \eeq \mylambda{\statea} \FF(\statea\statesubst{\vara}{\Axa(\statea)})~.
\]

\subsection{Weakest Preexpectations}
\label{sec:prelim:wp:calculus}
We now define two weakest preexpectation calculi by induction on the structure of $\pgcl$ programs.
Moreover, we discuss some properties of these calculi that are required for justifying that they are well-defined and sound.

%We now study how weakest preexpectations can be defined by induction on a program's structure, and how least fixed point constructions yield weakest preexpectations of loops. For that, consider the following definition:
%\newTable{prelim:wp}{prelim/wp2}{Inductive definition of $\dwptrans{\cc}$.}
\begin{table}[t]
    \begin{center}
        \begin{tabularx}{\textwidth}{X@{\quad}l@{\quad~~}lX}
            %\toprule
            $\boldsymbol{\cc}$ & $\mathsf{\mathbf{dwp}}\boldsymbol{\llbracket\cc \rrbracket(\FF)}$  \\[0.5ex]
            \hline 
            %\midrule
            $\SKIP$ & $\FF$  \rule{0pt}{3.5ex}\\[1.8ex]
            %		%
            $\ASSIGN{\vara}{\Axa}$ & $\FF\expsubst{\vara}{\Axa}$   \\[1.8ex]
            %		%
            $\TICK{\tickrew}$ & $\tickrew + \FF$   \\[1.8ex]
            %		%
            $\COMPOSE{\cc_1}{\cc_2}$ & $\dwp{\cc_1}{\dwp{\cc_2}{\FF}}$  \\[1.8ex]
            %		%
            $\NDCHOICE{\cc_1}{\cc_2}$ & $\dwp{\cc_1}{\FF} \sqcap \dwp{\cc_2}{\FF}$ \\[1.8ex]
            $\PCHOICE{\cc_1}{\proba}{\cc_2}$ & $\proba \cdot \dwp{\cc_1}{\FF} + (1-\proba )\cdot \dwp{\cc_2}{\FF}$ \\[1.8ex]
            $\ITE{\Bxa}{\cc_1}{\cc_2}$ & $\iverson{\Bxa}\cdot \dwp{\cc_1}{\FF} + \iverson{\neg\Bxa}\cdot \dwp{\cc_2}{\FF} $   \\[1.8ex]
            %		%
            %
            $\WHILEDO{\Bxa}{\cc'}$ & $\lfp \FG.\, \iverson{\Bxa}\cdot \dwp{\cc'}{\FG} + \iverson{\neg\Bxa}\cdot\FF$  \\
            %\bottomrule
        \end{tabularx}
    \end{center}%
    \caption{Inductive definition of $\dwptrans{\cc}$.}%
    \label{tab:prelim:wp}%
\end{table}%
\begin{definition}[Weakest Preexpectation Transformers]
    \label{def:prelim:wp:wp}
    Let $\cc \in \pgcl$ and $\FF \in \E$. 
    \begin{enumerate}
        \item The \emph{\highlight{demonic weakest preexpectation transformer}}  \emph{\highlight{of $\cc$}}
        \[
        \highlight{\dwptrans{\cc}} \colon \E \to \E
        \]
        is defined by induction on $\cc$ in \Cref{tab:prelim:wp}. We call
        $
        \highlight{\dwp{\cc}{\FF}}
        $
        the \emph{\highlight{demonic weakest preexpectation of $\cc$ w.r.t.\ postexpectation $\FF$}}.
        \item The \emph{\highlight{angelic weakest preexpectation transformer}} \emph{\highlight{of $\cc$}}
        \[
        \highlight{\awptrans{\cc}} \colon \E \to \E
        \]
        is defined by induction on $\cc$ in \Cref{tab:prelim:wp} after replacing every occurrence of $\dwpsymbol$ by $\awpsymbol$ and every occurrence of $\sqcap$ by $\sqcup$. We call
        $
        \highlight{\awp{\cc}{\FF}}
        $
        the \emph{\highlight{angelic weakest preexpectation of $\cc$ w.r.t.\ postexpectation $\FF$}}.
    \end{enumerate}
\end{definition}
\noindent
The transformers $\dwptrans{\cc}$ and $\awptrans{\cc}$ only differ in how they resolve nondeterminism: \emph{d}emonically (minimizing using $\sqcap$) versus \emph{a}ngelically (maximizing using $\sqcup$). 
%\kbin{baustelle start}
Our main goal for this section is to formalize (and prove in \Cref{sec:prelim:wp:wp_vs_op}) that
\begin{align*}
	& \dwpsymbol/\awp{\cc}{\FF}(\statea) \eeq
	\hspace{-7pt}\substack{\text{\normalsize{minimal/maximal expected reward obtained from} } \\
		 \text{\normalsize (i) executing $\cc$ on \emph{initial} state $\statea$ and} 
			\\
			\text{\normalsize (ii) collecting reward $\FF(\stateb)$ upon terminating in some $\stateb$.}}
%			of $\FF$ w.r.t.\ the (sub-)distribution of \emph{final} states} \\ \text{\normalsize reached after executing $\cc$ on \emph{initial} state $\statea$.}} 
\end{align*}
In particular, the postexpectation $\FF$ can be thought of as a continuation which models the expected reward obtained \emph{after} executing $\cc$. Notice that if $\cc$ does not contain a reward $\TICK{\cdot}$ statement, then both calculi coincide with McIver \& Morgan's classical weakest preexpectation calculi \cite{DBLP:series/mcs/McIverM05} for reasoning about expected outcomes. On the other hand, for programs $\cc$ (possibly) containing $\TICK{\cdot}$ and postexpectation $\FF = 0$, the two calculi yield the minimal/maximal expected reward obtained from executing $\cc$ --- similarly to Kaminski et al.'s expected runtime calculus \cite{DBLP:conf/esop/KaminskiKMO16,DBLP:journals/jacm/KaminskiKMO18}.
%\kbin{baustelle end}
%If $\cc$ is fully probabilistic, then $\dwptrans{\cc}$ and $\awptrans{\cc}$ coincide, i.e.,
%
%\[
%\tforall{\FF\in \E} \quad \dwp{\cc}{\FF} \eeq \awp{\cc}{\FF}~.
%\]
%
%For fully probabilistic programs $\cc$, we often write $\highlight{\wp{\cc}{\FF}}$ called the \emph{\highlight{weakest preexpectation of $\cc$ w.r.t.\ postexpectation $\FF$}}.

Apart from the $\TICK{r}$, which adds the just collected reward $r$ to the given postexpectation $\FF$, the weakest preexpectation of each $\pgcl$ command is standard (cf.~\cite{kaminski_diss}). 
For assignments, we substitute the assigned variable $\vara$ by expression $\Axa$ in postexpectation $\FF$.
The rule for sequential composition $\COMPOSE{\cc_1}{\cc_2}$ demonstrates that weakest preexpectation calculi perform backward reasoning: We first determine the weakest preexpectation of $\cc_2$ wrt. postexpectation $\FF$ and then use this intermediate result as the postexpectation of $\cc_1$. The weakest preexpectation wrt. that intermediate result then yields the weakest preexpectation of $\COMPOSE{\cc_1}{\cc_2}$ wrt. $\FF$.
For nondeterministic, probabilistic, and conditional choices between $\cc_1$ and $\cc_2$, we first determine the weakest preexpectations of the branches $\cc_1$ and $\cc_2$ wrt. postexpectation $\FF$. 
For the nondeterministic choice $\NDCHOICE{\cc_1}{\cc_2}$, we then take the minimum -- for $\dwpsymbol$ -- or the maximum -- for $\awpsymbol$ -- of the previously determined expectations. 
For the probabilistic choice $\PCHOICE{\cc_1}{\proba}{\cc_2}$, we sum those expectations after weighting each branch with its probability.
The conditional choice $\ITE{\Bxa}{\cc_1}{\cc_2}$ is similar except that the weights are always zero or one, depending on whether the guard $\Bxa$ holds or not.

Finally, the weakest preexpectation of loops is defined using a least fixed point construction. To get an intuition on this construction, we first observe that the weakest preexpectations of the two programs
\[
\WHILEDO{\Bxa}{\cc'} \quad \text{and}\quad
\ITE{\Bxa}{\COMPOSE{\cc'}{\WHILEDO{\Bxa}{\cc'}}}{\SKIP}
\]
must coincide, since these two programs are semantically equivalent, see also \Cref{fig:prelim:op_rules_tick} (9).
 Hence, by the rules for $\SKIP$, sequential composition, and conditional choice from \Cref{tab:prelim:wp}, $\dwp{\WHILEDO{\Bxa}{\cc'}}{\FF}$ (and analogously $\awpsymbol$) must satisfy the equation
\begin{align}
    &\dwp{\WHILEDO{\Bxa}{\cc'}}{\FF} \notag\\
    \eeq& \iverson{\Bxa} \cdot \dwp{\cc'}{\dwp{\WHILEDO{\Bxa}{\cc'}}{\FF}} + \iverson{\neg \Bxa} \cdot \FF~,
    \label{eqn:prelim:wp:recursive_loops}
\end{align}
and the \emph{least} solution of this equation in the complete lattice $(\E, \, \eleq)$ precisely captures (minimal or maximal) expected outcomes of loops. 
More formally, we associate each loop and each postexpectation with an endomap $\E \to \E$ that captures the effect of perform one (more) loop iteration.

\begin{definition}[Characteristic Functions]
    \label{def:prelim:wp:charfuns}
    Let $\cc =\WHILEDO{\Bxa}{\cc'} \in\pgcl$ and $\FF\in\E$. 
    \begin{enumerate}
        \item The \emph{\highlight{$\dwpsymbol$-characteristic function of $\cc$ w.r.t.\ $\FF$}} is defined as
        %
        %\begin{enumerate}
        %\item We call
        %
        \[
        \highlight{\dcharfun{\cc}{\FF}}\colon \E \to \E~,
        \qquad
        \dcharfun{\cc}{\FF}(\FG)\eeq \iverson{\Bxa}\cdot\dwp{\cc'}{\FG} + \iverson{\neg\Bxa}\cdot \FF~.
        \]
        \item The \emph{\highlight{$\awpsymbol$-characteristic function of $\cc$ w.r.t.\ $\FF$}} is defined as
        %
        %\begin{enumerate}
        %\item We call
        %
        \[
        \highlight{\acharfun{\cc}{\FF}}\colon \E \to \E~,
        \qquad
        \acharfun{\cc}{\FF}(\FG)\eeq \iverson{\Bxa}\cdot\awp{\cc'}{\FG} + \iverson{\neg\Bxa}\cdot \FF~.
        \]
    \end{enumerate}
\end{definition}
The least solution of \Cref{eqn:prelim:wp:recursive_loops} is then given by the least fixed point of the respective characteristic function, i.e.,
\[
\dwp{\cc}{\FF} \eeq \lfp \dcharfun{\cc}{\FF} \qquad \text{and}\qquad \awp{\cc}{\FF} \eeq \lfp \acharfun{\cc}{\FF}~.
\]
As discussed in \Cref{sec:prelim:domain_theory}, these least fixed points exist if characteristic functions are monotonic. Moreover, they can be approximated iteratively via fixed point iteration (cf. \Cref{thm:prelim:domain_theory:kleene}) if the characteristic function is also continuous.%
\begin{theorem}[Healthiness Properties~\cite{DBLP:journals/jacm/KaminskiKMO18}]
    \label{thm:prelim:wp:cont}
    Let $\cc \in \pgcl$ and $\FF\in\E$.
    \begin{enumerate}
        \item\label{thm:prelim:wp:cont:mono} Both $\dwptrans{\cc}$ and $\awptrans{\cc}$ are monotonic w.r.t.\ $(\E, \, \eleq)$. 
        \item\label{thm:prelim:wp:cont1} Both $\dwptrans{\cc}$ and $\awptrans{\cc}$ are continuous w.r.t.\ $(\E, \, \eleq)$.
        \item\label{thm:prelim:wp:cont2} For loop $\cc = \WHILEDO{\Bxa}{\cc'}$, both $\dcharfun{\cc}{\FF}$ and $\acharfun{\cc}{\FF}$ are continuous w.r.t.\ $(\E, \, \eleq)$. Hence, by Kleene's \Cref{thm:prelim:domain_theory:kleene}, we have
        \[
        \lfp \dcharfun{\cc}{\FF} \eeq \supremum_{\nata \in \Nats} \dcharfuniter{\cc}{\FF}{\nata}(0)
        \qquad\text{and}\qquad
        \lfp \acharfun{\cc}{\FF} \eeq  \supremum_{\nata \in \Nats} \acharfuniter{\cc}{\FF}{\nata}(0)~.
        \]%
    \end{enumerate}%
\end{theorem}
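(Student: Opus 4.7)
The plan is to prove all three claims by structural induction on $\cc$, treating the demonic calculus explicitly; the angelic one is identical after swapping $\sqcap$ and $\sqcup$. I would prove Part~\ref{thm:prelim:wp:cont:mono} first as a standalone induction, and then prove Parts~\ref{thm:prelim:wp:cont1} and~\ref{thm:prelim:wp:cont2} by a \emph{simultaneous} induction on $\cc$: at a loop $\cc = \WHILEDO{\Bxa}{\cc'}$, continuity of $\dcharfun{\cc}{\FF}$ in its argument uses continuity of $\dwptrans{\cc'}$ (the IH for Part~\ref{thm:prelim:wp:cont1}), while continuity of $\dwptrans{\cc}$ in $\FF$ then uses continuity of $\dcharfun{\cc}{\FF}$ for that very loop (Part~\ref{thm:prelim:wp:cont2}).

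For Part~\ref{thm:prelim:wp:cont:mono}, the atomic cases $\SKIP$, $\ASSIGN{\vara}{\Axa}$, and $\TICK{\tickrew}$ are immediate since $\dwptrans{\cc}$ is the identity, a substitution, or the addition of a constant, each of which preserves $\eleq$. The compositional cases $\COMPOSE{\cc_1}{\cc_2}$, $\NDCHOICE{\cc_1}{\cc_2}$, $\PCHOICE{\cc_1}{\proba}{\cc_2}$, and $\ITE{\Bxa}{\cc_1}{\cc_2}$ follow from the IH, since composition, pointwise $\sqcap$, convex combination, and Iverson-weighted sum all preserve monotonicity. For a loop, $\FF_1 \eeleq \FF_2$ implies $\dcharfun{\cc}{\FF_1}(\FG) \eeleq \dcharfun{\cc}{\FF_2}(\FG)$ for every $\FG$, and hence $\lfp \dcharfun{\cc}{\FF_1} \eeleq \lfp \dcharfun{\cc}{\FF_2}$ by a standard Park induction argument applied to $\dcharfun{\cc}{\FF_2}$ and the candidate $\lfp \dcharfun{\cc}{\FF_2}$.

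For Parts~\ref{thm:prelim:wp:cont1} and~\ref{thm:prelim:wp:cont2}, fix an $\omega$-chain $\{\FF_n\}_{n \in \Nats}$ with supremum $\FF^\ast$. I would handle the atomic and non-loop compositional cases for Part~\ref{thm:prelim:wp:cont1} exactly as in Part~\ref{thm:prelim:wp:cont:mono}, now invoking continuity (rather than mere monotonicity) of substitution, addition, positive-scalar multiplication, and of pointwise $\sqcap$ restricted to chains in $(\PosRealsInf,\leq)$; the latter is the only slightly delicate point, but reduces to the elementary fact that $\min(\sup_n a_n, \sup_n b_n) = \sup_n \min(a_n, b_n)$ for monotone sequences in $\PosRealsInf$. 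Part~\ref{thm:prelim:wp:cont2} for a loop $\cc = \WHILEDO{\Bxa}{\cc'}$ is then immediate from the IH for $\cc'$: the map $\FG \mapsto \iverson{\Bxa}\cdot\dwp{\cc'}{\FG} + \iverson{\neg\Bxa}\cdot\FF$ composes the continuous $\dwptrans{\cc'}$ with multiplication by a fixed Iverson bracket and addition of a fixed expectation.

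The only delicate step is Part~\ref{thm:prelim:wp:cont1} for a loop $\cc$. Using Part~\ref{thm:prelim:wp:cont2} for $\cc$ (just established) together with Kleene's \Cref{thm:prelim:domain_theory:kleene}, I would write $\dwp{\cc}{\FF} = \supremum_k \psi^k(\FF)$ where $\psi^k(\FF) := \dcharfuniter{\cc}{\FF}{k}(0)$, and then show by a secondary induction on $k$ that each $\psi^k$ is continuous in $\FF$: the base $\psi^0 \equiv 0$ is trivial, and the step $\psi^{k+1}(\FF) = \iverson{\Bxa}\cdot\dwp{\cc'}{\psi^k(\FF)} + \iverson{\neg\Bxa}\cdot\FF$ is continuous as a composition of continuous maps (the inner $\psi^k$ by IH, $\dwptrans{\cc'}$ by the outer IH, and multiplication/addition with fixed expectations being continuous). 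The conclusion follows from a double-supremum exchange
\begin{align*}
  \dwp{\cc}{\FF^\ast}
  \eeq \supremum_k \psi^k(\FF^\ast)
  \eeq \supremum_k \supremum_n \psi^k(\FF_n)
  \eeq \supremum_n \supremum_k \psi^k(\FF_n)
  \eeq \supremum_n \dwp{\cc}{\FF_n}~,
\end{align*}
where the second equality uses continuity of each $\psi^k$ and the third commutes iterated suprema in $\PosRealsInf$. The main obstacle is precisely this loop case: arranging the simultaneous induction so that continuity of $\dcharfun{\cc}{\FF}$ is available to license the Kleene expansion, and then executing the secondary induction together with the supremum swap without accidentally relying on continuity in the wrong variable.
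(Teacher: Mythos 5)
Your proposal is correct, but note that the paper does not actually prove this theorem: it is imported verbatim from the cited reference \cite{DBLP:journals/jacm/KaminskiKMO18} (see also \cite{kaminski_diss}), so there is no in-paper proof to compare against. What you have written is essentially the standard argument from that literature: structural induction, with the loop case handled by establishing continuity of the characteristic function first, expanding $\lfp \dcharfun{\cc}{\FF}$ via Kleene iteration, proving continuity of each iterate $\psi^k$ in the postexpectation by a secondary induction on $k$, and finishing with an exchange of iterated suprema (which is unconditionally valid for doubly-indexed families in a complete lattice). Your identification of the genuinely delicate points is accurate: the $\sqcap$ case needs $\min(\sup_n a_n, \sup_n b_n) = \sup_n \min(a_n,b_n)$ for \emph{chains} (it fails for arbitrary sets), and the loop case needs the simultaneous induction so that continuity of $\dcharfun{\cc}{\FF}$ in its argument is available before invoking Kleene. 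One small slip in the monotonicity part: to conclude $\lfp \dcharfun{\cc}{\FF_1} \eeleq \lfp \dcharfun{\cc}{\FF_2}$ you should apply Park induction (\Cref{lem:prelim:domain_theory:park}) to the operator $\dcharfun{\cc}{\FF_1}$ with candidate $\lfp \dcharfun{\cc}{\FF_2}$, using $\dcharfun{\cc}{\FF_1}(\lfp \dcharfun{\cc}{\FF_2}) \eeleq \dcharfun{\cc}{\FF_2}(\lfp \dcharfun{\cc}{\FF_2}) = \lfp \dcharfun{\cc}{\FF_2}$; applying it to $\dcharfun{\cc}{\FF_2}$ with that same candidate, as written, yields only a tautology. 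This is a phrasing error, not a conceptual one.
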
%
As we will see next, proving $\dwpsymbol$ and $\awpsymbol$ sound with respect to our operational semantics introduced in \Cref{sec:prelim:pgcl:opsemtick} involves showing that the above fixed points are equal to the expected rewards obtained from unfolding loops on demand as in \Cref{fig:prelim:op_rules_tick} (9).

\begin{example}
    \label{ex:lfpsOfExampleLoop}
    Recall our running example $\pgcl$ program $\cc$ from \Cref{fig:running-example-pgcl} on page~\pageref{fig:running-example-pgcl}.
    We fix postexpectation $\FF = y ~ \gray{ = \mylambda{\statea}\statea(y)}$.
    Then, for all concrete values of the constant $r \in \PosRatsInf$ in $\cc$, we have
    \begin{align*}
        &\dwp{\cc}{y} \eeq\lfp \dcharfun{\cc}{y}
        \qeq
        \iverson{x\neq0} \cdot y + \iverson{x=0} \cdot (y + 2) \qquad \text{and}\\[1ex]
        &\awp{\cc}{y} \eeq \lfp \acharfun{\cc}{y}
        \qeq
        \iverson{x\neq0} \cdot y + \iverson{x=0} \cdot \infty
        ~.
    \end{align*}%
    Intuitively, under \emph{demonic} nondeterminism, we try to avoid collecting additional reward as much as possible, i.e.\ we never execute $\TICK{r}$.
    Hence, $\dwp{\cc}{y}$ is equal to the expected value of $y$ after termination of $\cc$, which is the original value of $y$ if $x \neq 0$ holds initially as we never enter the loop. 
    Otherwise, i.e.\ for $x = 0$, the expected value  of $y$ is $y + 2$ as the loop body is, on average, executed twice and $y$ is incremented in every execution.
    Under \emph{|angelic} nondeterminism, on the other hand, we try to collect as much reward as possible. If we can enter the loop at all, i.e. if $x \neq 0$ holds initially, we thus collect an infinite amount of reward.
    Otherwise, i.e. if $x = 0$, we collect no additional reward beyond the already collected $y$.
    
    To confirm the above intuitive explanations of the two fixed points, we will connect them to expected rewards of operational MDPs in the next section.
\end{example}

\subsection{Soundness of the Weakest Preexpectation Calculus}
\label{sec:prelim:wp:wp_vs_op}

We now leverage our generalized fixed point characterization of expected rewards, i.e. \Cref{thm:mdp:bellman_correct}, to prove $\dwpsymbol$ and $\awpsymbol$ are sound with respect to our operational MDP semantics.
More precisely, we show that, for all $\cc\in\pgcl$ and all $\FF\in\E$,
$\dwp{\cc}{\FF}$ and $\awp{\cc}{\FF}$ evaluate to appropriate minimal and maximal expected rewards in $\pgcl$'s operational MDP $\opmdp$. 
%It is in this sense that $\dwpsymbol$ and $\awpsymbol$ indeed soundly determine minimal and maximal expected rewards obtained from executing $\pgcl$ programs. 
An analogous result has been proven in \cite{gretz_op} for \emph{bounded} expectations\footnote{An expectation $\FF$ is \emph{bounded}, if there exists $\reala \in \Reals$ such that $\FF(\statea)\leq \reala$ for all $\statea\in\States$.} and $\pgcl$ programs not containing $\TICK{\cdot}$. Our proofs very closely follow the lines of \cite[Theorem 4.5]{qsl_popl} (for $\dwpsymbol$) and \cite[Theorem 4.6]{aert} (for $\awpsymbol$), unifying these results with minor simplifications.

We first associate with each postexpectation a reward function for $\opmdp$:
\begin{definition}[Rew.\ Function induced by Postexpectation]
    \label{def:to_reach}
    Let $\FF \in \E$.
    The reward function $\highlight{\torew{\FF}} \colon \Confs \to \PosRealsInf$ for the operational MDP $\opmdp$ is \mbox{defined as}
    \label{def:torew}
    \[
    \torew{\FF}(\confa) \eeq \begin{cases}
        \FF(\stateb) & \text{ if } \confa = (\term,\stateb) \\
        \oprew(\confa) & \text{ else}~,
    \end{cases}
    \]
    where $\oprew$ is the reward function of \Cref{def:oprew} (which is \mbox{independent of $\FF$)}.
\end{definition}
\begin{example}
    Recall once again the running example $\pgcl$ program from \Cref{fig:running-example-pgcl} on page~\pageref{fig:running-example-pgcl} that models the MDP in \Cref{fig:running_example_mdp} on page~\pageref{fig:running_example_mdp}.
    With postexpectation $\FF = y$, the reward function $\torew{y}$ (\Cref{def:to_reach}) on the operational MDP $\opmdp$ faithfully models the rewards depicted in boxes in \Cref{fig:running_example_mdp}.\triangleqed
\end{example}%
%
%
%We now define \emph{operational weakest preexpectations} as minimal and maximal expected rewards in $\opmdp$, respectively.
It is convenient to introduce expectation transformers capturing the expected rewards obtained from executing $\pgcl$ programs as modeled by the MDP $\opmdp$:%
\begin{definition}[Operational Weakest Preexpectations]
    \label{def:prelim:wp:op}
    Let $\cc \in \pgcl$ and $\FF \in \E$. 
    \begin{enumerate}
        \item We define the \emph{\highlight{demonic operational weakest preexpectation of $\cc$ w.r.t.\ $\FF$}} as
        \[
        \highlight{\dop{\cc}{\FF}} \eeq \mylambda{\statea} \newmmminexprew{\torew{\FF}}{(\cc,\statea)}{\opmdp}~.
        \]
        \item We define the \emph{\highlight{angelic operational weakest preexpectation of $\cc$ w.r.t.\ $\FF$}} as
        \[
        \highlight{\aop{\cc}{\FF}} \eeq \mylambda{\statea} \newmmmaxexprew{\torew{\FF}}{(\cc,\statea)}{\opmdp}~.
        \]
    \end{enumerate}
\end{definition}
Proving soundness then amounts to showing that, for all $\cc\in\pgcl$ and $\FF\in\E$, we have
\[
\dwp{\cc}{\FF} \eeq \dop{\cc}{\FF} \quad\text{and}\quad \awp{\cc}{\FF} \eeq \aop{\cc}{\FF}~.
\]
To prove equality, we show the inequalities $\eleq$ and $\sqsupseteq$ seperately.
%We proceed by proving two inequalities for each of the respective transformers; and then exploit antisymmetry of $\eleq$. 
%We start with $\eleq$.%-direction. 
%
\begin{lemma}
    \label{lem:prelim:wp:sound_leq}
    We have for all $\cc \in \pgcl$ and all $\FF \in \E$:
    \begin{enumerate}
        \item $\dop{\cc}{\FF} \eleq \dwp{\cc}{\FF}$
        \item $\aop{\cc}{\FF} \eleq \awp{\cc}{\FF}$
    \end{enumerate}
\end{lemma}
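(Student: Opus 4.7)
The plan is to prove each inequality via Park induction on the operational Bellman operator of $\opmdp$. For the demonic case, define a value function $V \colon \Confs \to \PosRealsInf$ by $V(\opsink) = 0$, $V((\term,\statea)) = \FF(\statea)$, and $V((\cc,\statea)) = \dwp{\cc}{\FF}(\statea)$ for every $\cc \in \pgcl$. The heart of the proof is the pointwise prefixed-point inequality $\mopmin{\opmdp}{\torew{\FF}}(V) \mleq V$. Once this is established, Park induction (\Cref{lem:prelim:domain_theory:park}) yields $\lfp \mopmin{\opmdp}{\torew{\FF}} \mleq V$, and \Cref{thm:mdp:bellman_correct}.\ref{thm:mdp:bellman_correct1} identifies the left-hand side at $(\cc,\statea)$ with $\dop{\cc}{\FF}(\statea)$, delivering the claim. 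The angelic inequality is obtained by the symmetric argument, using $\mopmax$, $\awpsymbol$, and \Cref{thm:mdp:bellman_correct}.\ref{thm:mdp:bellman_correct2} in place of their demonic counterparts.

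The prefixed-point inequality is verified by case analysis on the configuration. At $\opsink$ and $(\term,\statea)$, the unique outgoing transition leads to $\opsink$ by rule~(1) of \Cref{fig:prelim:op_rules_tick}, so both sides are computed directly. At $(\cc,\statea)$ with $\cc \in \pgcl$, I proceed by structural induction on $\cc$, quantifying simultaneously over all postexpectations. The cases $\SKIP$, $\ASSIGN{\vara}{\Axa}$, and $\TICK{\tickrew}$ deterministically transition to $(\term,\cdot)$, and the reward collected plus the successor's $V$-value matches the $\dwpsymbol$-rule in \Cref{tab:prelim:wp} exactly. The cases $\PCHOICE{\cc_1}{\proba}{\cc_2}$, $\NDCHOICE{\cc_1}{\cc_2}$ (where the minimum over the two enabled actions $\actl,\actr$ matches the demonic $\sqcap$), and $\ITE{\Bxa}{\cc_1}{\cc_2}$ are similarly direct equalities. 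For the loop $\cc = \WHILEDO{\Bxa}{\cc'}$, rule~(9) transitions to $(\cc';\cc,\statea)$ when $\statea \models \Bxa$ and to $(\term,\statea)$ otherwise; since $V((\cc';\cc,\statea)) = \dwp{\cc'}{\dwp{\cc}{\FF}}(\statea)$, the right-hand side of the Bellman equation at $(\cc,\statea)$ equals $\dcharfun{\cc}{\FF}(\dwp{\cc}{\FF})(\statea)$, which in turn equals $V((\cc,\statea))$ because $\dwp{\cc}{\FF}$ is a fixed point of $\dcharfun{\cc}{\FF}$ by definition.

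The main obstacle is the sequential composition case $\cc = \cc_1 ;\cc_2$. By rule~(5) of \Cref{fig:prelim:op_rules_tick}, the transitions out of $(\cc_1;\cc_2,\statea)$ are in bijection with those out of $(\cc_1,\statea)$: a transition $(\cc_1,\statea) \xrightarrow{\macta,\proba} (\term,\statea')$ corresponds to $(\cc_1;\cc_2,\statea) \xrightarrow{\macta,\proba} (\cc_2,\statea')$, and a transition $(\cc_1,\statea) \xrightarrow{\macta,\proba} (\cc_1',\statea')$ with $\cc_1' \neq \term$ corresponds to $(\cc_1;\cc_2,\statea) \xrightarrow{\macta,\proba} (\cc_1';\cc_2,\statea')$. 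Setting $\FG = \dwp{\cc_2}{\FF}$ and using $\dwp{\cc_1';\cc_2}{\FF} = \dwp{\cc_1'}{\FG}$, one checks that the $V$-values at the successors of $(\cc_1;\cc_2,\statea)$ coincide with the values of the analogous value function $V'$ built from $\FG$ at the successors of $(\cc_1,\statea)$. The prefixed-point inequality at $(\cc_1;\cc_2,\statea)$ therefore reduces to the same inequality at $(\cc_1,\statea)$ but with postexpectation $\FG$ in place of $\FF$, which is delivered by the structural induction hypothesis on $\cc_1$. The remaining subtlety is that $\torew{\FF}((\cc_1;\cc_2,\statea))$ must agree with $\torew{\FG}((\cc_1,\statea))$; this follows from \Cref{def:oprew}, which fires only on the leftmost statement, once sequential composition is read right-associatively so that the leftmost statement of $\cc_1;\cc_2$ is the leftmost statement of $\cc_1$.
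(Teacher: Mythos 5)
Your proposal is correct and follows essentially the same route as the paper: the paper likewise defines the value function $\mval$ with $\mval((\cc,\statea)) = \dwp{\cc}{\FF}(\statea)$, establishes the one-step Bellman compliance of $\dwptrans{\cc}$ by structural induction on $\cc$ (the paper's \Cref{lem:app:prelim:wp:bellman_compliant_tick}, including the same case split for sequential composition and the same use of the fixed-point property for loops), and then concludes via Park induction and \Cref{thm:mdp:bellman_correct}. The only cosmetic difference is that the paper verifies $\mval$ is an exact fixed point rather than merely a prefixed point, which changes nothing.
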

\begin{proof}
    We prove the claim for $\dwpsymbol$. The proof for $\awpsymbol$ is completely analogous. The key idea is to apply Park induction (\Cref{lem:prelim:domain_theory:park}) to the min-Bellman operator of $\pgcl$'s operational MDP $\opmdp$ w.r.t.\ $\torew{\FF}$ (\Cref{def:mdp:bellman}.\ref{def:mdp:bellman1}). For that, we first observe that for all $\cc\in\pgcl$ and all $\FF \in \E$, we have
    \begin{align}
        &\dwp{\cc}{\FF} \eeq \mylambda{\statea} \torew{\FF}\big((\cc,\statea)\big) \notag\\
        &\qquad ~+~ \min_{\macta \in \mact(\cc,\statea)} \sum_{\execrel{(\cc,\statea)}{\macta}{\proba}{\confa'}} \proba \cdot
        \begin{cases}
            \FF(\stateb) & \text{if $\confa' = (\term,\stateb)$} \\
            \dwp{\cc'}{\FF}(\statea') & \text{if $\confa' = (\cc',\statea')$}~,
        \end{cases}
        \label{eqn:prelim:wp:opproof1}
    \end{align}
    which follows by induction on the structure of $\cc$ (cf.\ \Cref{lem:app:prelim:wp:bellman_compliant_tick}).
    Now define the value function $\mval$ for $\pgcl$'s operational MDP $\opmdp$ as
    \[
    \mval \eeq \mylambda{\confa} 
    \begin{cases}
        \dwp{\cc}{\FF}(\statea) & \text{if $\confa = (\cc,\statea)$} \\
        \FF(\stateb) &\text{if $\confa = (\term,\stateb)$} \\
        0 & \text{if $\confa = \opsink$}~.
    \end{cases}
    \]
    It is an immediate consequence of \Cref{eqn:prelim:wp:opproof1} that%
    \[
    \mopmin{\opmdp}{\torew{\FF}}(\mval) \eeq \mval \mmleq \mval~,
    \]%
    which, by Park induction (\Cref{lem:prelim:domain_theory:park}), yields%
    \[
    \lfp \mopmin{\opmdp}{\torew{\FF}} \mmleq \mval~.
    \]%
    By \Cref{thm:mdp:bellman_correct}.\ref{thm:mdp:bellman_correct1}, we thus get%
    \[
    \lfp \mopmin{\opmdp}{\torew{\FF}}  \eeq \mylambda{\confa} \newmmminexprew{\torew{\FF}}{\confa}{\opmdp} \mmleq \mval
    \]%
    so, in particular, we have for every $\cc\in\pgcl, \FF\in\E$, and $\statea\in\States$,%
    \begin{align*}
        \dop{\cc}{\FF}(\statea)
        \eeq& \newmmminexprew{\torew{\FF}}{(\cc,\statea)}{\opmdp} \\
        \lleq& \mval(\cc,\statea) \\
        \eeq& \dwp{\cc}{\FF}(\statea)~,
    \end{align*}%
    which is what we had to show. 
\end{proof}
Next, we prove the converse direction.
\begin{lemma}
    \label{lem:prelim:wp:sound_geq}
    We have for all $\cc \in \pgcl$ and all $\FF \in \E$:
    \begin{enumerate}
        \item $\dwp{\cc}{\FF} \eeleq \dop{\cc}{\FF}$
        \item $\awp{\cc}{\FF} \eeleq \aop{\cc}{\FF}$.
    \end{enumerate}
\end{lemma}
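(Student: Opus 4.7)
The plan is to mirror the strategy of \Cref{lem:prelim:wp:sound_leq}, proving both claims by structural induction on $\cc$. I focus on the demonic case; the angelic case is entirely analogous, replacing the min-Bellman operator by the max-Bellman operator, $\dcharfun{\cc}{\FF}$ by $\acharfun{\cc}{\FF}$, and $\sqcap$ by $\sqcup$ throughout. For every non-loop statement, $\dwp{\cc}{\FF}$ is defined directly in terms of $\dwp{\cc_i}{\cdot}$ on structurally smaller subprograms, and \Cref{thm:mdp:bellman_correct}.\ref{thm:mdp:bellman_correct1} combined with the transition rules of \Cref{fig:prelim:op_rules_tick} yields a recursive decomposition of $\dop{\cc}{\FF}$ of exactly the shape of \Cref{eqn:prelim:wp:opproof1}. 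Combining the structural inductive hypotheses with monotonicity of addition, multiplication by non-negative constants, and $\sqcap$ then yields the desired inequality after unwinding one layer of definitions.

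For the loop $\cc = \WHILEDO{\Bxa}{\cc'}$, I would invoke Park induction (\Cref{lem:prelim:domain_theory:park}) on the characteristic function $\dcharfun{\cc}{\FF}$, which is monotonic by \Cref{thm:prelim:wp:cont}.\ref{thm:prelim:wp:cont2}. Since $\dwp{\cc}{\FF} = \lfp \dcharfun{\cc}{\FF}$, it suffices to exhibit $\dop{\cc}{\FF}$ as a pre-fixed point of $\dcharfun{\cc}{\FF}$, namely
\[
\iverson{\Bxa} \cdot \dwp{\cc'}{\dop{\cc}{\FF}} + \iverson{\neg \Bxa} \cdot \FF \;\eleq\; \dop{\cc}{\FF}~.
\]
Applying the structural inductive hypothesis to $\cc'$ at postexpectation $\dop{\cc}{\FF}$ gives $\dwp{\cc'}{\dop{\cc}{\FF}} \eleq \dop{\cc'}{\dop{\cc}{\FF}}$, so it suffices to establish the purely operational inequality
\[
\iverson{\Bxa} \cdot \dop{\cc'}{\dop{\cc}{\FF}} + \iverson{\neg \Bxa} \cdot \FF \;\eleq\; \dop{\cc}{\FF}~.
\]

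Establishing this last inequality is the main obstacle. For $\statea \not\models \Bxa$, the unique transition $\execrel{(\cc, \statea)}{\actn}{1}{(\term, \statea)}$ together with $\torew{\FF}((\cc, \statea)) = 0$ and \Cref{thm:mdp:bellman_correct}.\ref{thm:mdp:bellman_correct1} yield $\dop{\cc}{\FF}(\statea) = \FF(\statea)$, matching the left-hand side. For $\statea \models \Bxa$, the unique transition $\execrel{(\cc, \statea)}{\actn}{1}{(\COMPOSE{\cc'}{\cc}, \statea)}$ (again with zero reward) gives, by the same Bellman characterisation, $\dop{\cc}{\FF}(\statea) = \dop{\COMPOSE{\cc'}{\cc}}{\FF}(\statea)$. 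The claim therefore reduces to the \emph{sequential composition identity} $\dop{\COMPOSE{\cc'}{\cc}}{\FF} = \dop{\cc'}{\dop{\cc}{\FF}}$, which I would prove as a separate auxiliary lemma. The key idea is a path decomposition in $\opmdp$: every execution starting from $(\COMPOSE{\cc'}{\cc}, \statea)$ either stays forever in configurations of shape $(\COMPOSE{\cc''}{\cc}, \cdot)$ (modelling nontermination of $\cc'$) or passes uniquely through some configuration $(\cc, \statea')$ once $\cc'$ has terminated. The Fubini-like reordering of countable path sums justified in \Cref{sec:expected_rewards_def} then rewrites the total expected reward of the composite execution as that of $\cc'$ with postexpectation $\dop{\cc}{\FF}$, where the existence of min-optimal schedulers (\Cref{thm:mdp:opt_min_strats}) lets nondeterministic choices be decomposed seamlessly on the demonic side; on the angelic side, the same decomposition is justified by taking $\epsilon$-optimal prefixes and letting $\epsilon \to 0$.
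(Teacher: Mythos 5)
Your overall architecture coincides with the paper's: structural induction on $\cc$, Park induction (\Cref{lem:prelim:domain_theory:park}) on $\dcharfun{\cc}{\FF}$ for loops, and a reduction of the loop case to a compositionality property of $\dopsymbol$ with respect to sequential composition. The genuine divergence is in how that auxiliary property is obtained. The paper states and proves only the inequality $\dop{\cc_1}{\dop{\cc_2}{\FF}} \eeleq \dop{\COMPOSE{\cc_1}{\cc_2}}{\FF}$ (\Cref{app:lem:op_decomp}), and does so without touching paths or schedulers: it writes $\dop{\cc_1}{\dop{\cc_2}{\FF}}$ as the supremum of the Kleene iterates of the Bellman operator of $\opmdp$ w.r.t.\ $\torew{\dop{\cc_2}{\FF}}$ (\Cref{thm:mdp:bellman_cont}) and shows by an inner induction on the iteration count that every iterate is dominated by $\dop{\COMPOSE{\cc_1}{\cc_2}}{\FF}$, using only the one-step Bellman equation. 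You instead aim at the full identity $\dop{\COMPOSE{\cc'}{\cc}}{\FF} = \dop{\cc'}{\dop{\cc}{\FF}}$ via a path decomposition, a Fubini-style reordering, and scheduler stitching. That is more than is needed (only the $\eeleq$ direction enters your Park-induction step), and it is exactly where the difficulty hides: you must translate histories between the composite fragment rooted at $(\COMPOSE{\cc'}{\cc},\statea)$ and the component fragments, stitch together countably many $\epsilon_i$-optimal continuation schedulers (one per intermediate configuration) with $\sum_i \epsilon_i \le \epsilon$, and, on the angelic side, the additive $\epsilon$-argument breaks down whenever the values involved are $\infty$ --- precisely the regime this paper is designed to cover (cf.\ the discussion in \Cref{sec:exrew_schedulers}) --- so it must be replaced by an ``achieve at least $N$ within a finite horizon'' argument plus a diagonalization. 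Two further minor points: for the non-loop case $\cc = \COMPOSE{\cc_1}{\cc_2}$ the one-step unfolding of shape \Cref{eqn:prelim:wp:opproof1} does \emph{not} suffice, because the successors $(\COMPOSE{\cc_1'}{\cc_2},\statea')$ are not structural subterms --- this case also needs the decomposition lemma (which you do prove, so this is not a gap); and your invocation of min-optimal schedulers is not actually needed for the demonic $\eeleq$ direction, which is a plain projection of composite schedulers onto component ones. In short: your route is workable and would yield the stronger big-step equality, but the paper's iteration-based proof buys the avoidance of all scheduler surgery at the price of proving only the inequality it needs.
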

\begin{proof}
    We prove the claim for $\dwpsymbol$. The proof for $\awpsymbol$ is completely analogous. The key observation is that $\dopsymbol$ satisfies the following big-step decomposition w.r.t.\ sequential composition: For all $\cc_1,\cc_2\in\pgcl$ and all $\FF\in \E$,
    \begin{align*}
        %\label{eqn:prelim:wp:opproof2}
        \dop{\cc_1}{\dop{\cc_2}{\FF}} \eeleq \dop{\COMPOSE{\cc_1}{\cc_2}}{\FF} ~.
        \tag{see \Cref{app:lem:op_decomp}}
    \end{align*}
    The claim then follows by induction on $\cc$. The most interesting case is $\cc = \WHILEDO{\Bxa}{\cc'}$: We prove that $\dcharfun{\cc}{\FF}(\dop{\cc}{\FF}) \eleq \dop{\cc}{\FF}$, which implies $\dwp{\cc}{\FF} \eeleq \dop{\cc}{\FF}$ by \Cref{lem:prelim:domain_theory:park}.
	%
%    We show that $\dop{\cc}{\FF}$ is a $\dwpsymbol$-superinvariant of $\cc$ w.r.t.\ $\FF$ (\Cref{def:prelim:wp:superinvs}.\ref{def:prelim:wp:superinvs1}).
    %
     For that, consider the following:%
    \begin{align*}
        \dcharfun{\cc}{\FF}(\dop{\cc}{\FF})
        \eeq& \iverson{\Bxa} \cdot \dwp{\cc'}{\dop{\cc}{\FF}} + \iverson{\neg\Bxa} \cdot \FF
        \tag{\Cref{def:prelim:wp:charfuns}} \\
        \eeleq & \iverson{\Bxa} \cdot \dop{\cc'}{\dop{\cc}{\FF}} + \iverson{\neg\Bxa} \cdot \FF
        \tag{I.H.} \\
        \eeleq & \iverson{\Bxa} \cdot \dop{\COMPOSE{\cc'}{\cc}}{\FF} + \iverson{\neg\Bxa} \cdot \FF
        \tag{\Cref{app:lem:op_decomp}} \\
        %
        %\mmminexprew{\torew{\FF}}{(\cc,\statea)}{\opmdp} 
        \eeleq & \mylambda{\statea}
        \begin{cases}
            \newmmminexprew{\torew{\FF}}{(\COMPOSE{\cc'}{\cc},\statea)}{\opmdp} &\text{if $\statea\models \Bxa$} \\
            \FF(\statea)&\text{if $\statea\not\models \Bxa$}
        \end{cases}
        \\
        \eeleq & \mylambda{\statea}
        \newmmminexprew{\torew{\FF}}{(\cc,\statea)}{\opmdp}  
        \tag{\Cref{def:mdp:exrew}.\ref{def:mdp:exrew3} and \Cref{fig:prelim:op_rules_tick}} \\
        \eeq& \dop{\cc}{\FF}~.
        \tag*{(\Cref{def:prelim:wp:op})\qquad\qed}
    \end{align*}%
\end{proof}%
\Cref{lem:prelim:wp:sound_leq,lem:prelim:wp:sound_geq}, and antisymmetry of $\eleq$ now yield the desired claim.%
\begin{theorem}[Soundness of Weakest Preexpectations]
    \label{thm:prelim:wp:sound}
    For all $\cc \in \pgcl$ and $\FF \in \E$:%
    \begin{enumerate}
        \item\label{thm:prelim:wp:sound1} $ \dwp{\cc}{\FF} \eeq \dop{\cc}{\FF}  $
        \item\label{thm:prelim:wp:sound2} $ \awp{\cc}{\FF} \eeq \aop{\cc}{\FF} $
    \end{enumerate}%
\end{theorem}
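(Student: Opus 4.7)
The plan is to derive the theorem directly from the two preceding lemmas by invoking antisymmetry of the order $\eleq$ on $\E$. Concretely, Lemma~\ref{lem:prelim:wp:sound_leq} supplies the inequalities $\dop{\cc}{\FF} \eleq \dwp{\cc}{\FF}$ and $\aop{\cc}{\FF} \eleq \awp{\cc}{\FF}$, while Lemma~\ref{lem:prelim:wp:sound_geq} supplies the reverse inequalities $\dwp{\cc}{\FF} \eeleq \dop{\cc}{\FF}$ and $\awp{\cc}{\FF} \eeleq \aop{\cc}{\FF}$. Because $\eleq$ is defined as the pointwise lift of the usual order on $\PosRealsInf$ and therefore inherits antisymmetry, combining the pair of inequalities for each of the demonic and angelic cases yields the two desired equalities. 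No additional case analysis on the structure of $\cc$ is required, since both lemmas already range uniformly over all $\cc \in \pgcl$ and $\FF \in \E$.

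In short, the theorem is a packaging result: the substantive work has already been carried out in the two lemmas, one using Park induction on the min-/max-Bellman operator of $\opmdp$ (which in turn relies critically on our generalized least-fixed-point characterization from Theorem~\ref{thm:mdp:bellman_correct}) and the other using structural induction on $\cc$ together with the big-step decomposition of $\dopsymbol$ and $\aopsymbol$ across sequential composition. The present proof will simply chain these two results with a single appeal to antisymmetry, treating parts~\ref{thm:prelim:wp:sound1} and~\ref{thm:prelim:wp:sound2} completely in parallel.

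Given this, there is essentially no obstacle to overcome in the proof itself; the only point worth making explicit, for the reader's benefit, is that the equalities obtained are equalities in $\E$, i.e.\ pointwise equalities of functions $\States \to \PosRealsInf$, which is exactly what antisymmetry of the pointwise-lifted order delivers.
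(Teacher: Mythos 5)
Your proposal is correct and matches the paper exactly: the theorem is obtained by combining \Cref{lem:prelim:wp:sound_leq} and \Cref{lem:prelim:wp:sound_geq} and appealing to antisymmetry of the pointwise-lifted order $\eleq$ on $\E$. No further argument is needed.
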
%
\begin{example}
    We consider the running example $\pgcl$ program $\cc$ from \Cref{fig:running-example-pgcl} on page~\pageref{fig:running-example-pgcl} one last time.
%    that models the MDP in \Cref{fig:running_example_mdp} on page~\pageref{fig:running_example_mdp}.
    From \Cref{ex:lfpsOfExampleLoop} we already know that%
    \begin{align*}
        &\dwp{\cc}{y}
        \eeq
        \iverson{x\neq0} \cdot y + \iverson{x=0} \cdot (y + 2) \qquad \text{and}\\[1ex]
        &\awp{\cc}{y}
        \eeq
        \iverson{x\neq0} \cdot y + \iverson{x=0} \cdot \infty
        ~.
    \end{align*}%
    Thus, the link between $\dwpsymbol$/$\awpsymbol$ and expected rewards in the operational MDP (see \Cref{thm:prelim:wp:sound}) shows that, for every initial program state $\statea$ with $\statea(x) \neq 0$,%
    \begin{align*}
        &\newmmminexprew{\torew{y}}{(\cc,\statea)}{\opmdp}
        \eeq
        \statea(y) + 2 \qquad \text{and}\\[1ex]
        &\newmmmaxexprew{\torew{y}}{(\cc,\statea)}{\opmdp}
        \eeq
        \infty
        ~.
    \end{align*}%
    Note that this is consistent with our observations from \Cref{ex:maxERminER}.
    
    Finally, we remark that we are now in a position to see the relevance of \emph{unbounded} reward functions and \emph{infinite} expected rewards:
    Establishing the link 
    \[
    \awp{\cc}{\vara} \eeq \mylambda{\statea} \newmmmaxexprew{\torew{y}}{(\cc,\statea)}{\opmdp}
    \]
    from \Cref{thm:prelim:wp:sound} works only if the right-hand side may evaluate to $\infty$. 
%    If not, such a link could not be established for general unbounded postexpectations such as $y$. 
\end{example}

\section{Conclusion}

We have generalized well-established least fixed point characterizations of expected rewards of possibly infinite-state MDPs. In particular, our characterizations apply to infinite (expected) rewards, which arise naturally when reasoning about expected outcomes or more general expected values modeled by probabilistic programs. We applied those characterizations to obtain cleaned-up soundness proofs for weakest preexpectation calculi that Joost-Pieter has studied over the past decade.

\bibliographystyle{splncs04}

%% for the full version
\bibliography{literature}

\begin{appendix}
\tw{I have enabled allowdisplaybreak for the appendix}
\allowdisplaybreaks
\section{Omitted Proofs and Results on MDPs (\Cref{sec:expected_rewards_def,sec:rewards_lfp,sec:exrew_schedulers})}
\label{app:mdps}

%\subsection{Proof of \Cref{lem:mdp:exrew_characterization}}
\subsection{Alternative Characterization of Expected Rewards}
We will use the following alternative characterization of the expected reward in some proofs.
\begin{restatable}{lemma}{exrewCharacterization}
    \label{lem:mdp:exrew_characterization}
    Let $\mdp = \mdptuple$ be an MDP, let $\mrew \colon \ms \to \PosRealsInf$ be a reward function, and let $\msched \in \mscheds$.
    For all $\nata \in \Nats$, it holds that
    \[
    \newmexprewb{\msched}{\nata}{\mrew}{\msa} \qeq \sum_{\msa_0\ldots\msa_\natb \in \mpathsb{\nata}(\msa)} \mpathprob{\msched}(\msa_0\ldots\msa_\natb) \cdot  \mrew(\msa_\natb)~.
    \]
    As a consequence,
    \[
    \newmexprew{\msched}{\mrew}{\msa} \qeq \sum_{\msa_0\ldots\msa_\natb \in \mpaths(\msa)} \mpathprob{\msched}(\msa_0\ldots\msa_\natb) \cdot \mrew(\msa_\natb)~.
    \]
\end{restatable}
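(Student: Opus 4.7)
The plan is to unfold the definition of $\newmexprewbs{\msched}{\nata}{\mrew}{\msa}$, exchange the order of summation, and then recognize that summing the path-probability of a full length-$\nata$ path against the reward at its $i$-th state is the same as summing over all length-$i$ prefixes. Since everything lives in $\PosRealsInf$ and the countable-sum notion from \cite{countable_sums} recalled in the paper is rearrangement-invariant for non-negative summands, this swap is unconditionally justified.

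Concretely, I would first write
\begin{align*}
\newmexprewbs{\msched}{\nata}{\mrew}{\msa}
 \eeq \sum_{\msa_0\ldots\msa_\nata \iin \mpathseq{\nata}(\msa)} \mpathprob{\msched}(\msa_0\ldots\msa_\nata) \cdot \sum_{\nati=0}^{\nata} \mrew(\msa_\nati)
 \eeq \sum_{\nati=0}^{\nata} \sum_{\msa_0\ldots\msa_\nata \iin \mpathseq{\nata}(\msa)} \mpathprob{\msched}(\msa_0\ldots\msa_\nata) \cdot \mrew(\msa_\nati)~,
\end{align*}
using non-negativity to pull $\mrew(\msa_\nati)$ out of the outer sum and to reorder. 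Then for each fixed $\nati \leq \nata$, I would factor a length-$\nata$ path as a prefix $\msa_0\ldots\msa_\nati$ followed by a suffix $\msa_{\nati+1}\ldots\msa_\nata$; \Cref{def:prelims:inducedprob} gives
\[
\mpathprob{\msched}(\msa_0\ldots\msa_\nata) \eeq \mpathprob{\msched}(\msa_0\ldots\msa_\nati) \cdot \prod_{\natj=\nati}^{\nata-1} \mprob(\msa_\natj, \msched(\msa_0\ldots\msa_\natj), \msa_{\natj+1})~,
\]
and summing the rightmost product over all admissible suffixes yields $1$, since at every step $\macta \eeq \msched(\msa_0\ldots\msa_\natj)$ is enabled in $\msa_\natj$ and thus $\sum_{\msa'} \mprob(\msa_\natj,\macta,\msa') \eeq 1$. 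Hence
\[
\sum_{\msa_0\ldots\msa_\nata \iin \mpathseq{\nata}(\msa)} \mpathprob{\msched}(\msa_0\ldots\msa_\nata) \cdot \mrew(\msa_\nati) \eeq \sum_{\msa_0\ldots\msa_\nati \iin \mpathseq{\nati}(\msa)} \mpathprob{\msched}(\msa_0\ldots\msa_\nati) \cdot \mrew(\msa_\nati)~.
\]
Combining the last two displays and using that $\mpathsb{\nata}(\msa)$ is the \emph{disjoint} union $\biguplus_{\nati=0}^{\nata} \mpathseq{\nati}(\msa)$ proves the first claim.

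For the consequence, I would take the supremum over $\nata \iin \Nats$ on both sides. The left-hand side becomes $\newmexprews{\msched}{\mrew}{\msa}$ by \Cref{def:mdp:exrew}.\ref{def:mdp:exrew2}. On the right-hand side, the family of partial sums indexed by $\nata$ is monotone increasing (all summands are in $\PosRealsInf$), and $\mpaths(\msa) \eeq \bigcup_\nata \mpathsb{\nata}(\msa)$. The countable-sum definition recalled in \Cref{sec:expected_rewards_def} immediately gives
\[
\sup_{\nata \iin \Nats} \sum_{\mpath \iin \mpathsb{\nata}(\msa)} \mpathprob{\msched}(\mpath) \cdot \mrew(\mathrm{last}(\mpath)) \eeq \sum_{\mpath \iin \mpaths(\msa)} \mpathprob{\msched}(\mpath) \cdot \mrew(\mathrm{last}(\mpath))~,
\]
which finishes the proof. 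The only mildly subtle step is the interchange of sums with possibly infinite summands; this is the main thing to be careful about, but it is routine because the paper already commits to the order-independent notion of countable summation over $\PosRealsInf$, so no measure-theoretic Tonelli argument is actually needed.\qed
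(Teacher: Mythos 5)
Your proposal is correct and follows essentially the same route as the paper's proof: unfold the definition, swap the two finite sums, split each length-$\nata$ path into a length-$\nati$ prefix and a suffix whose probabilities sum to $1$, and recombine via the disjoint union $\mpathsb{\nata}(\msa) = \biguplus_{\nati=0}^{\nata}\mpathseq{\nati}(\msa)$; the only cosmetic difference is that you keep the full-history product for the suffix where the paper introduces a shifted scheduler $\msched'$. The passage from the step-bounded to the unbounded statement by taking suprema is likewise identical.
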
%
\Cref{lem:mdp:exrew_characterization} may strike as rather unintuitive.
A good approach is perhaps to draw an analogy to expected values of discrete random variables.
There one can characterize the expected value $\textsf{E}(X)$ of a discrete random variable $X$ as%
\begin{align*}
	\textsf{E}(X) \eeq \sum_{i = 0}^\omega \textsf{Pr}(X \geq i)~.
\end{align*}%
The (admittedly vague) analogy is that both for paths as for discrete random variables, it suffices to sum over all \enquote{initial segments} (the $i$'s and $\msa_0\ldots \msa_m$'s, respectively) and look at what these contribute to the whole expectation mass.%

\begin{proof}
    Let $\nata \in \Nats$ be arbitrary.
    \begin{align*}
        & \newmexprewb{\msched}{\nata}{\mrew}{\msa}
        \\
        \eeq & \sum_{\msa_0\ldots\msa_\nata \in \mpathseq{\nata}(\msa)} \mpathprob{\msched}(\msa_0\ldots\msa_\nata) \cdot \mrew(\msa_0\ldots\msa_\nata)
        \tag{\Cref{def:mdp:exrew}.\ref{def:mdp:exrew1}}
        \\
        \eeq & \sum_{\msa_0\ldots\msa_\nata \in \mpathseq{\nata}(\msa)} \mpathprob{\msched}(\msa_0\ldots\msa_\nata) \cdot \sum_{\nati=0}^\nata \mrew(\msa_\nati)
        \tag{Definition of $\mrew(\msa_0\ldots\msa_\nata)$}
        \\
        \eeq & \sum_{\nati=0}^\nata \sum_{\msa_0\ldots\msa_\nata \in \mpathseq{\nata}(\msa)} \mpathprob{\msched}(\msa_0\ldots\msa_\nata) \cdot  \mrew(\msa_\nati)
        \tag{Arithmetic}
        \\
        \eeq & \sum_{\nati=0}^\nata \sum_{\msa_0\ldots\msa_\nati \in \mpathseq{\nati}(\msa)}
        \\
        &\qquad \sum_{\msa_\nati\ldots\msa_n \in \mpathseq{\nata-\nati}(\msa_\nati)} \mpathprob{\msched}(\msa_0\ldots\msa_\nati) \cdot  \mrew(\msa_\nati) \cdot \mpathprob{\msched'}(\msa_\nati\ldots\msa_\nata)
        \tag{Split paths; $\msched'$ behaves like $\msched$ after seeing $\msa_0\ldots\msa_\nati$}
        \\
        \eeq & \sum_{\nati=0}^\nata \sum_{\msa_0\ldots\msa_\nati \in \mpathseq{\nati}(\msa)} \mpathprob{\msched}(\msa_0\ldots\msa_\nati) \cdot  \mrew(\msa_\nati) 
        \\
        & \qquad ~\cdot~ \sum_{\msa_\nati\ldots\msa_n \in \mpathseq{\nata-\nati}(\msa_\nati)}  \mpathprob{\msched'}(\msa_\nati\ldots\msa_\nata)
        \tag{Arithmetic}
        \\
        \eeq & \sum_{\nati=0}^\nata \sum_{\msa_0\ldots\msa_\nati \in \mpathseq{\nati}(\msa)} \mpathprob{\msched}(\msa_0\ldots\msa_\nati) \cdot  \mrew(\msa_\nati)
        \tag{Sum of probabilities of paths with fixed length is $1$}
        \\
        \eeq & \sum_{\msa_0\ldots\msa_\natb \in \mpathsb{\nata}(\msa)} \mpathprob{\msched}(\msa_0\ldots\msa_\natb) \cdot  \mrew(\msa_\natb)
        \tag{$\mpathsb{\nata}(\msa) = \biguplus_{\nati=0}^\nata \mpathseq{\nati}(\msa)$}
    \end{align*}
    The second equation follows by taking suprema on both sides.
\end{proof}

\subsection{Induced Markov Chains}\label{app:induced_mc}

Since memoryless schedulers resolve all nondeterministic choices and their choices depend only on the current state, each memoryless scheduler for an MDP induces a \emph{Markov chain} where all actions are determined.\footnote{Restricting to memoryless schedulers suffices for our purposes. One can more generally define Markov chains induced by history-dependent schedulers \cite[Definition 10.92]{BK08}.}%

\begin{definition}[Induced Markov Chains]%
	Let $\mdp = \mdptuple$ be an MDP and let $\msched \in \mmscheds$ be a memoryless scheduler. The \emph{\highlight{MC of $\mdp$ induced by $\msched$}} is given by%
	\[
		\highlight{\mdp(\msched)} \qeq (\ms, \, \mact, \, \mprob_\msched)~,
	\]%
	where the transition probability function $\mprob_\msched$ is given by%
	\[
		%\tforall{\msa,\msa'\in\ms, \macta \in \mact} \quad 
		\mprob_\msched(\msa,\macta,\msa')  \eeq 
		\begin{cases}
			\mprob(\msa,\macta,\msa'), &\text{if $\macta=  \msched(\msa)$}, \\
			0, & \text{otherwise}~.
		\end{cases}
	\]%
\end{definition}%
\begin{example}[Induced Markov Chains]
    \label{ex:memoryless_scheduler_induced_mc}
    Reconsider the MDP from \Cref{fig:running_example_mdp}.
    Let $\msched$ be a memoryless scheduler satisfying $\msched(s_i) = \actl$ for all $i \in \Nats$.
    Then in the induced MC, only the topmost row is reachable from $s_0$, i.e., the states $s_0,s_0^L,s_1,s_1^L$, etc.%
    \triangleqed%
\end{example}
By construction (cf.~\cite[p. 845]{BK08}), the expected reward of $\mdp$ for scheduler $\msched$ is the same as the (unique) expected reward of $\mdp(\msched)$:
\begin{lemma}
	\label{def:mdp:induced_mc}
	For every MDP $\mdp$, scheduler $\msched$, $n \in \Nats$, initial state $s$, and reward function $\mrew$, we have
	\[
	   \newmexprewb{\msched}{\nata}{\mrew}{\msa}
	   \eeq 
	   \newmmmaxexprewb{\nata}{\mrew}{\msa}{\mdp(\msched)}
	   \eeq 
	   \newmmminexprewb{\nata}{\mrew}{\msa}{\mdp(\msched)} 
	\]
\end{lemma}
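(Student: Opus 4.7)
The plan is a direct unfolding of definitions with one key structural observation. First, I would note that in the induced Markov chain $\mdp(\msched)$, every state $\msa$ has \emph{exactly one} enabled action, namely $\msched(\msa)$: by construction of $\mprob_\msched$, for every $\macta \neq \msched(\msa)$ the distribution $\mprob_\msched(\msa,\macta,\cdot)$ is identically zero, i.e.\ a null distribution. Consequently, any scheduler $\msched'$ for $\mdp(\msched)$ is forced to satisfy $\msched'(\msa_0\ldots\msa_k) = \msched(\msa_k)$ (on every history that can actually arise with positive probability). All schedulers for $\mdp(\msched)$ therefore induce identical path probabilities, and so the infimum and supremum in $\newmmminexprewbs{\nata}{\mrew}{\msa}{\mdp(\msched)}$ and $\newmmmaxexprewbs{\nata}{\mrew}{\msa}{\mdp(\msched)}$ are both attained by the same value. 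This establishes the right-hand equality.

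For the remaining equality $\newmexprewbs{\msched}{\nata}{\mrew}{\msa}{\mdp} = \newmmmaxexprewbs{\nata}{\mrew}{\msa}{\mdp(\msched)}$ I would compare the per-path summands appearing in \Cref{def:mdp:exrew}.\ref{def:mdp:exrew1} on both sides. Fix any path $\msa_0\ldots\msa_n \in \mpathseq{\nata}(\msa)$ and let $\msched'$ be \emph{any} scheduler of $\mdp(\msched)$. Then
\begin{align*}
\mpathprob{\msched}(\msa_0\ldots\msa_n)
&\eeq \prod_{i=0}^{n-1} \mprob\bigl(\msa_i,\, \msched(\msa_0\ldots\msa_i),\, \msa_{i+1}\bigr) \\
&\eeq \prod_{i=0}^{n-1} \mprob\bigl(\msa_i,\, \msched(\msa_i),\, \msa_{i+1}\bigr) \\
&\eeq \prod_{i=0}^{n-1} \mprob_\msched\bigl(\msa_i,\, \msched(\msa_i),\, \msa_{i+1}\bigr) \\
&\eeq \mpathprob{\msched'}(\msa_0\ldots\msa_n),
\end{align*}
using memorylessness of $\msched$ in the second step, the defining case of $\mprob_\msched$ on the chosen action in the third, and the forced choice $\msched'(\msa_0\ldots\msa_i) = \msched(\msa_i)$ in the last. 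Since the reward function $\mrew$ and hence $\mrew(\msa_0\ldots\msa_n)$ are unchanged, the weighted sums defining the two expected rewards coincide term-by-term.

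A minor point of care is that the index set $\mpathseq{\nata}(\msa)$ is formally a property of the underlying MDP and so might a priori differ between $\mdp$ and $\mdp(\msched)$: paths in $\mdp$ that use an action $\macta \neq \msched(\msa_i)$ in some state $\msa_i$ are not paths in $\mdp(\msched)$. However, such paths have $\mpathprob{\msched}(\msa_0\ldots\msa_n) = 0$ (one of the factors in the product above is then $\mprob_\msched(\msa_i, \msched(\msa_i), \msa_{i+1}) = 0$ on the $\mdp(\msched)$ side, and this probability already vanishes on the $\mdp$ side whenever the transition is impossible), so they contribute nothing to either sum. There is no real obstacle; the entire argument is definitional bookkeeping resting on the single structural fact that $\msched$ has been \enquote{baked into} $\mdp(\msched)$ as the unique enabled action in each state.
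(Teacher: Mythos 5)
Your proof is correct. The paper itself gives no proof of this lemma --- it is asserted to hold \enquote{by construction} with a pointer to Baier and Katoen --- so your argument just spells out the construction that the paper elides: the unique enabled action $\msched(\msa)$ in each state of $\mdp(\msched)$ forces a unique scheduler (whence the $\mathsf{MinER}$/$\mathsf{MaxER}$ coincidence), and the term-by-term identification of path probabilities, together with your observation that paths of $\mdp$ absent from $\mdp(\msched)$ carry probability zero under $\msched$ (and hence contribute nothing even when the path reward is $\infty$, by the paper's convention $0 \cdot \infty = 0$), settles the remaining equality. The only implicit assumption worth making explicit is that $\msched$ is memoryless, which is required for $\mdp(\msched)$ to be defined at all and which you use when replacing $\msched(\msa_0\ldots\msa_i)$ by $\msched(\msa_i)$.
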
 
The above lemma justifies that we simply write 
$\newmmcexprewb{\nata}{\mrew}{\msa}{\mdp(\msched)}$ for induced Markov chains $\mdp(\msched)$ instead of 
$\newmmmaxexprewb{\nata}{\mrew}{\msa}{\mdp(\msched)}$ or 
$\newmmminexprewb{\nata}{\mrew}{\msa}{\mdp(\msched)}$.

\subsection{Auxiliary Results on the Existence of Optimal Schedulers}
We require the following auxiliary result: \kb{cite popl24?} The least fixpoint of the \emph{min-}Bellman operator $\mopmins$ gives rise to a memoryless scheduler $\msched$ such that the least fixpoint of $\mopmins$ and the least fixpoint of the Bellman operator $ \mcop{\mdp(\msched)}{\mrew}$ w.r.t.\ the Markov chain $\mdp(\msched)$ induced by $\msched$ (see \Cref{app:induced_mc}) coincide:
\begin{lemma}
    \label{lem:mdp:min_opt_sched}
    Let $\mdp = \mdptuple$ be an MDP, let $\mrew \colon \ms \to \PosRealsInf$ be a reward function, and let $\preccurlyeq$ be some total order on $\mact$. Moreover, define the memoryless scheduler $\msched$ as
    \[
    \msched(\msa) \eeq 
    \mrew(\msa) + \displaystyle \argmin_{\alpha \in \mact(\msa)} 
    \sum_{\msa'\in\msuccs{\macta}(\msa)} \mprob(\msa, \macta, \msa') \cdot \left( \lfp \mopmins\right)(\msa')
    \]
    where in case the above definition does not yield a unique action, we choose the least action w.r.t.\ $\preccurlyeq$. Then, for the Markov chain $\mdp(\msched)$ induced by $\msched$, 
    \[
    \lfp \mopmins \qeq \lfp \mcop{\mdp(\msched)}{\mrew}~.
    \]
\end{lemma}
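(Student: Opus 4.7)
The plan is to prove the two inequalities $\lfp \mcop{\mdp(\msched)}{\mrew} \mmleq \lfp \mopmins$ and $\lfp \mopmins \mmleq \lfp \mcop{\mdp(\msched)}{\mrew}$ separately, using exclusively the Bellman-flavoured fixed point machinery from \Cref{sec:prelim:domain_theory}. To see that $\msched$ is even well-defined, I would first note that $\mact(\msa)$ is finite (since $\mact$ itself is), so the $\argmin$ in the definition of $\msched$ is always a non-empty finite set, and tie-breaking by the total order $\preccurlyeq$ yields a unique action; hence $\msched \in \mmscheds$.

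For the first inequality I would argue that $\lfp \mopmins$ is itself a fixed point of $\mcop{\mdp(\msched)}{\mrew}$. Concretely, fixing an arbitrary state $\msa \in \ms$, we have by \Cref{def:mdp:bellman}.\ref{def:mdp:bellman1}
\begin{align*}
    (\lfp \mopmins)(\msa) \eeq \mrew(\msa) + \min_{\macta \in \mact(\msa)} \sum_{\msa' \in \msuccs{\macta}(\msa)} \mprob(\msa, \macta, \msa') \cdot (\lfp \mopmins)(\msa')~,
\end{align*}
and by the very definition of $\msched$ the minimum on the right-hand side is attained at $\macta = \msched(\msa)$. Substituting this witness turns the right-hand side into $\mcop{\mdp(\msched)}{\mrew}(\lfp \mopmins)(\msa)$, so $\lfp \mopmins$ is a fixed point of $\mcop{\mdp(\msched)}{\mrew}$, which by minimality of $\lfp \mcop{\mdp(\msched)}{\mrew}$ gives $\lfp \mcop{\mdp(\msched)}{\mrew} \mmleq \lfp \mopmins$.

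For the converse inequality I would use Park induction (\Cref{lem:prelim:domain_theory:park}). The key observation is that $\mopmins(\mval) \mmleq \mcop{\mdp(\msched)}{\mrew}(\mval)$ holds for \emph{every} value function $\mval \in \mvals$, simply because the minimum over $\mact(\msa)$ can only be smaller than (or equal to) the specific summand indexed by $\msched(\msa)$. Instantiating this with $\mval = \lfp \mcop{\mdp(\msched)}{\mrew}$ gives
\begin{align*}
    \mopmins(\lfp \mcop{\mdp(\msched)}{\mrew}) \mmleq \mcop{\mdp(\msched)}{\mrew}(\lfp \mcop{\mdp(\msched)}{\mrew}) \eeq \lfp \mcop{\mdp(\msched)}{\mrew}~,
\end{align*}
so Park induction (applied to the monotone map $\mopmins$, which is monotone because it is continuous by \Cref{thm:mdp:bellman_cont}.\ref{thm:mdp:bellman_cont1}) yields $\lfp \mopmins \mmleq \lfp \mcop{\mdp(\msched)}{\mrew}$. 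Antisymmetry of $\mleq$ then gives the desired equality.

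The proof is thus essentially a short two-step chase of inequalities, and I do not anticipate a genuine obstacle; the only care-requiring point is the well-definedness of $\msched$ (finite $\mact(\msa)$ plus $\preccurlyeq$-based tie-breaking) and the subtle but crucial fact that $\msched$ is defined \emph{precisely} so that the minimum in the Bellman equation at $\lfp \mopmins$ is attained — without this, the first inclusion would fail.
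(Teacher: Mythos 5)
Your proposal is correct and follows essentially the same route as the paper's own proof: both directions are established exactly as you describe, with the pointwise domination $\mopmins(\mval) \mmleq \mcop{\mdp(\msched)}{\mrew}(\mval)$ combined with Park induction for one inequality, and the observation that the $\argmin$ defining $\msched$ makes $\lfp \mopmins$ a (pre-)fixed point of $\mcop{\mdp(\msched)}{\mrew}$ for the other. The only cosmetic difference is that the paper phrases the latter step via Park induction on a pre-fixed point rather than via minimality among fixed points, which is interchangeable here.
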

\begin{proof}
    We prove the two inequalities
    \begin{align*}
        \lfp \mopmins \mmleq \lfp \mcop{\mdp(\msched)}{\mrew}
        \quad\text{and}\quad
        \lfp \mcop{\mdp(\msched)}{\mrew} \mmleq\lfp \mopmins~. 
    \end{align*}
    The claim then follows from antisymmetry of $\mleq$. The first inequality is straightforward since
    \[
    \tforall{\mval \in \mvals}\quad \mopmins(\mval) \mmleq \mcop{\mdp(\msched)}{\mrew}(\mval)~.
    \]
    For the latter inequality, by Park induction (\Cref{lem:prelim:domain_theory:park}), it suffices to show
    \[
    \mcop{\mdp(\msched)}{\mrew} \left( \lfp \mopmins \right) \mmleq \lfp \mopmins ~.
    \]
    For that, we reason as follows:
    \begin{align*}
        &\mcop{\mdp(\msched)}{\mrew} \left( \lfp \mopmins \right) \\
        \eeq &  \mylambda{\msa}
        \mrew(\msa) + \displaystyle \sum_{\msa'\in\msuccs{\msched(\msa)}(\msa)} \mprob(\msa, \msched(\msa), \msa') \cdot \left(\lfp \mopmins\right)(\msa')
        \tag{\Cref{def:mdp:bellman}.\ref{def:mdp:bellman1}} \\
        \eeq & \mylambda{\msa}
        \mrew(\msa) + \displaystyle\min_{\macta \in \mact(\msa)} \sum_{\msa'\in\msuccs{\macta}(\msa)} \mprob(\msa, \macta, \msa') \cdot \left(\lfp \mopmins\right)(\msa')
        \tag{by construction of $\msched$}\\
        \eeq & \lfp \mopmins
        \tag{\Cref{def:mdp:bellman}.\ref{def:mdp:bellman1} and fixpoint property} \\
        \mmleq & \lfp \mopmins~.
        \tag{reflexivity of $\mleq$}
    \end{align*}
    
\end{proof}

\subsection{Proof of \Cref{lem:mdp:bellman-step}}
\label{proof:mdp:bellman-step}
\bellmanStep*
\begin{proof}
We prove the claim for the min-Bellman operator by induction on $\nata$. The proof for the max-Bellman operator is completely analogous. \\

\noindent
\noindent{\emph{Base case} $\nata = 0$.} We have
\begin{align*}
    & \mopminsiter{1}(0) \\
    \eeq & \mylambda{\msa} \mrew(\msa) + \min_{\macta \in \mact(\msa)} \sum_{\msa'\in\msuccs{\macta}(\msa)} \mprob(\msa, \macta, \msa') \cdot 0
    \tag{\Cref{def:mdp:bellman}.\ref{def:mdp:bellman1}}\\
    \eeq & \mylambda{\msa} \mrew(\msa)
    \\
    \eeq & \mylambda{\msa} \inf_{\msched \in \mscheds}  \sum_{\msa_0 \in \mpathseq{0}(\msa)}
    \underbrace{\mpathprob{\msched}(\msa_0)}_{{}= 1}\cdot \mrew(\msa_0)
    \\
    \eeq & \mylambda{\msa}  \inf_{\msched \in \mscheds}  \newmexprewb{\msched}{0}{\mrew}{\msa} ~.
    \tag{\Cref{def:mdp:exrew}.\ref{def:mdp:exrew3}}
\end{align*}
\noindent
\emph{Induction step.} Let $\msa \in \ms$. Given a scheduler $\msched$ and some action $\macta \in \mact(\msa)$, define the scheduler $(\msa,\macta) \cdot \msched$ as
\[
\left((\msa,\macta) \cdot \msched\right)(\msa_0\ldots \msa_\natb)
\eeq
\begin{cases}
    \macta & \text{if $\msa_0\ldots \msa_\natb = \msa$} \\
    \msched(\msa_0\ldots \msa_\natb) & \text{otherwise}~.
\end{cases}
\]
\begin{align*}
    & \mopminsiter{\nata + 2}(0)(\msa) \\
    \eeq &
    \mrew(\msa) + \displaystyle\min_{\macta \in \mact(\msa)} \sum_{\msa'\in\msuccs{\macta}(\msa)} \mprob(\msa, \macta, \msa') \cdot \mopminsiter{\nata + 1}(0)(\msa') 
    \tag{\Cref{def:mdp:bellman}.\ref{def:mdp:bellman1}}\\
    \eeq & \mrew(\msa) + \displaystyle\min_{\macta \in \mact(\msa)} \sum_{\msa'\in\msuccs{\macta}(\msa)} \mprob(\msa, \macta, \msa')
    \cdot \newmminexprewb{\nata}{\mrew}{\msa'}
    \tag{I.H.} \\
    \eeq& \mrew(\msa) + \displaystyle\min_{\macta \in \mact(\msa)} \sum_{\msa'\in\msuccs{\macta}(\msa)} \mprob(\msa, \macta, \msa') \\
    &\quad {}\cdot \left(
    \inf_{\msched \in \mscheds} 
    \sum_{\msa_0\ldots\msa_\natb \in \mpathsb{\nata}(\msa)}
    \mpathprob{\msched}(\msa_0\ldots\msa_\natb )\cdot \mrew(\msa_\natb )\right)
    \tag{\Cref{lem:mdp:exrew_characterization}} \\
    \eeq& \mrew(\msa) + \displaystyle\min_{\macta \in \mact(\msa)} \sum_{\msa'\in\msuccs{\macta}(\msa)} \\
    &\quad {} 
    \inf_{\msched \in \mscheds} 
    \sum_{\msa_0\ldots\msa_\natb \in \mpathsb{\nata}(\msa)}
    \mprob(\msa, \macta, \msa') \cdot
    \mpathprob{\msched}(\msa_0\ldots\msa_\natb )\cdot \mrew(\msa_\natb ) 
    \tag{$\cdot$ distributes over $\inf$ and finite sums} \\
    \eeq& \mrew(\msa) + \displaystyle\min_{\macta \in \mact(\msa)} \inf_{\msched \in \mscheds}  \sum_{\msa'\in\msuccs{\macta}(\msa)} \\
    &\quad {} 
    \sum_{\msa_0\ldots\msa_\natb \in \mpathsb{\nata}(\msa')}
    \mprob(\msa, \macta, \msa') \cdot
    \mpathprob{\msched}(\msa_0\ldots\msa_\natb )\cdot \mrew(\msa_\natb) 
    \tag{the sets $\mpathsb{\nata}(\msa')$ are pairwise disjoint} \\
    \eeq& \displaystyle\min_{\macta \in \mact(\msa)} \inf_{\msched \in \mscheds} 
    \sum_{\msa_0\ldots\msa_\natb \in \mpathsb{\nata+1}(\msa)}
    \mpathprob{(\msa,\macta) \cdot \msched}(\msa_0\ldots\msa_\natb )\cdot \mrew(\msa_\natb) 
    \tag{de-factorize $\macta$-successors; note that this ``consumes'' the $+\mrew(\msa)$} \\
    \eeq&  \inf_{\msched \in \mscheds} 
    \sum_{\msa_0\ldots\msa_\natb \in \mpathsb{\nata+1}(\msa)}
    \mpathprob{\msched}(\msa_0\ldots\msa_\natb )\cdot \mrew(\msa_\natb) 
    \tag{schedulers cover minimum over $\macta$-successors} \\
    \eeq&\newmminexprewb{\nata+1}{\mrew}{\msa}~.
\end{align*}
\tw{do we need a better argument for the ``consumes +$\mrew(\msa)$'' step?}
\end{proof}

\subsection{Proof of \Cref{thm:mdp:bellman_correct}}
\label{proof:mdp:bellman_correct}

We are now ready to prove the following theorem: 

\bellmanCorrect*
\begin{proof}
%\Cref{thm:mdp:bellman_correct}.\ref{thm:mdp:bellman_correct3} follows from \Cref{thm:mdp:bellman_correct}.\ref{thm:mdp:bellman_correct1} (or \Cref{thm:mdp:bellman_correct}.\ref{thm:mdp:bellman_correct2}).
%
Since the proof of \Cref{thm:mdp:bellman_correct}.\ref{thm:mdp:bellman_correct1} is more involved, we first prove \Cref{thm:mdp:bellman_correct}.\ref{thm:mdp:bellman_correct2}.
For that, consider the following:
\begin{align*}
    &\lfp \mopmaxs \\
    %z
    \eeq & \supremum_{\nata \in \Nats} \mopmaxsiter{\nata}(0)
    \tag{\Cref{thm:mdp:bellman_cont}.\ref{thm:mdp:bellman_cont2}}  \\
    \eeq&\supremum_{\nata \in \Nats} \mylambda{\msa} \newmmaxexprewb{\nata}{\mrew}{\msa} 
    \tag{\Cref{lem:mdp:bellman-step}.\ref{lem:mdp:bellman-step2}} \\
    \eeq&\supremum_{\nata \in \Nats} \mylambda{\msa}  \sup_{\msched \in \mscheds}  \newmexprewb{\msched}{\nata}{\mrew}{\msa}
    \tag{\Cref{def:mdp:exrew}.\ref{def:mdp:exrew5}} \\
    \eeq& \mylambda{\msa} \sup_{\nata \in \Nats}  \sup_{\msched \in \mscheds}  \newmexprewb{\msched}{\nata}{\mrew}{\msa}
    \tag{suprema defined pointwise}\\
    \eeq& \mylambda{\msa}  \sup_{\msched \in \mscheds}  \sup_{\nata \in \Nats}   \newmexprewb{\msched}{\nata}{\mrew}{\msa}
    \tag{suprema commute} \\
    \eeq& \mylambda{\msa}  \sup_{\msched \in \mscheds}  \newmexprew{\msched}{\mrew}{\msa}
    \tag{\Cref{def:mdp:exrew}.\ref{def:mdp:exrew2}} \\
    \eeq& \mylambda{\msa}  \newmmaxexprew{\mrew}{\msa} ~.
    \tag{\Cref{def:mdp:exrew}.\ref{def:mdp:exrew6}}
\end{align*}
%
%
%Recall that this already implies \Cref{thm:mdp:bellman_correct}.\ref{thm:mdp:bellman_correct3}.
In the above reasoning, we have exploited that suprema commute.
This step is what yields the proof of \Cref{thm:mdp:bellman_correct}.\ref{thm:mdp:bellman_correct1} to be more involved.
In this proof, we encounter an analogous situation but instead of swapping two suprema we have to swap a supremum with an infimum.
Suprema and infima do generally not commute.
We will, however, employ \Cref{lem:mdp:min_opt_sched} to see that the suprema and infima considered in this proof \emph{do} commute.
Now consider the following:
\begin{align*}
    &\lfp \mopmins \\
    \eeq & \supremum_{\nata \in \Nats} \mopminsiter{\nata}(0)
    \tag{\Cref{thm:mdp:bellman_cont}.\ref{thm:mdp:bellman_cont2}} \\
    \eeq&\supremum_{\nata \in \Nats} \mylambda{\msa} \newmminexprewb{\nata}{\mrew}{\msa} 
    \tag{\Cref{lem:mdp:bellman-step}.\ref{lem:mdp:bellman-step1}} \\
    \eeq&\supremum_{\nata \in \Nats} \mylambda{\msa}  \inf_{\msched \in \mscheds}  \newmexprewb{\msched}{\nata}{\mrew}{\msa}
    \tag{\Cref{def:mdp:exrew}.\ref{def:mdp:exrew3}} \\
    \eeq& \mylambda{\msa} \sup_{\nata \in \Nats}  \inf_{\msched \in \mscheds}  \newmexprewb{\msched}{\nata}{\mrew}{\msa}
    \tag{suprema defined pointwise}
    %
    %		\eeq& \mylambda{\msa}  \sup_{\msched \in \mscheds}  \sup_{\nata \in \Nats}   \mexprewb{\msched}{\nata}{\mrew}{\msa}
    %		\tag{suprema commute} \\
    %
\end{align*}
Assume for the moment that for all $\msa \in \ms$, we have
\begin{align}
    \label{eqn:thm:mdp:bellman_correct}
    \sup_{\nata \in \Nats}\inf_{\msched \in \mscheds}  \newmexprewb{\msched}{\nata}{\mrew}{\msa} \eeq \inf_{\msched \in \mscheds} \sup_{\nata \in \Nats} \newmexprewb{\msched}{\nata}{\mrew}{\msa} 
\end{align}
With this assumption and the above reasoning, we obtain the desired claim:
\begin{align*}
    &\lfp \mopmins \\
    \eeq & \mylambda{\msa} \inf_{\msched \in \mscheds} \sup_{\nata \in \Nats} \newmexprewb{\msched}{\nata}{\mrew}{\msa} 
    \tag{above reasoning and assumption} \\
    \eeq& \mylambda{\msa}  \newmminexprew{\mrew}{\msa} ~.
    \tag{\Cref{def:mdp:exrew}.\ref{def:mdp:exrew2} and \Cref{def:mdp:exrew}.\ref{def:mdp:exrew4}}
\end{align*}
It hence remains to prove \Cref{eqn:thm:mdp:bellman_correct}. For that, we prove that both
\begin{align}
    \label{eqn:thm:mdp:bellman_correct2}
    \sup_{\nata \in \Nats}\inf_{\msched \in \mscheds}  \newmexprewb{\msched}{\nata}{\mrew}{\msa} \lleq \inf_{\msched \in \mscheds} \sup_{\nata \in \Nats} \newmexprewb{\msched}{\nata}{\mrew}{\msa} 
\end{align}
and
\begin{align}
    \label{eqn:thm:mdp:bellman_correct3}
    \sup_{\nata \in \Nats}\inf_{\msched \in \mscheds}  \newmexprewb{\msched}{\nata}{\mrew}{\msa} \ggeq \inf_{\msched \in \mscheds} \sup_{\nata \in \Nats} \newmexprewb{\msched}{\nata}{\mrew}{\msa}~.
\end{align}
The claim then follows from antisymmetry of $\leq$. Inequality \ref{eqn:thm:mdp:bellman_correct2} holds since $\sup\inf \ldots \leq \inf \sup \ldots$ holds in every complete lattice. For Inequality \ref{eqn:thm:mdp:bellman_correct3}, observe that for every scheduler $\msched'$,
\[
\sup_{\nata \in \Nats} \newmexprewb{\msched'}{\nata}{\mrew}{\msa}
\ggeq
\inf_{\msched \in \mscheds} \sup_{\nata \in \Nats} \newmexprewb{\msched}{\nata}{\mrew}{\msa}~.
\]
Hence, Inequality \ref{eqn:thm:mdp:bellman_correct3} follows if there is a scheduler $\msched'$ with
\[
\sup_{\nata \in \Nats} \newmexprewb{\msched'}{\nata}{\mrew}{\msa} \eeq \sup_{\nata \in \Nats}\inf_{\msched \in \mscheds}  \newmexprewb{\msched}{\nata}{\mrew}{\msa}~.
\]
We claim that $\msched'$ is given by the scheduler constructed in  \Cref{lem:mdp:min_opt_sched}, since
\begin{align*}
    &\sup_{\nata \in \Nats} \newmexprewb{\msched'}{\nata}{\mrew}{\msa}  \\
    \eeq & \sup_{\nata \in \Nats} \newmmcexprewb{\nata}{\mrew}{\msa}{\mdp(\msched')} 
    \tag{\Cref{def:mdp:induced_mc}} \\
    \eeq & \newmmcexprew{\mrew}{\msa}{\mdp(\msched')}
    \tag{\Cref{def:mdp:exrew}.\ref{def:mdp:exrew2}} \\
    \eeq & \lfp \mcop{\mdp(\msched')}{\mrew}
    \tag{\Cref{thm:mdp:bellman_correct}} \\
    \eeq& \lfp \mopmins
    \tag{\Cref{lem:mdp:min_opt_sched}} \\
    \eeq & \sup_{\nata \in \Nats} \mopminsiter{\nata}(0)
    \tag{\Cref{thm:prelim:domain_theory:kleene} (Kleene)} \\
    \eeq & \sup_{\nata \in \Nats}\inf_{\msched \in \mscheds}  \newmexprewb{\msched}{\nata}{\mrew}{\msa}~.
    \tag{\Cref{lem:mdp:bellman-step}.\ref{lem:mdp:bellman-step1}}
\end{align*}
This completes the proof.
\end{proof}

\section{Omitted Results on the Operational MDP Semantics of $\pgcl$}
\label{app:prelim:wp}

\begin{lemma}
    \label{lem:app:prelim:wp:bellman_compliant_tick}
    Let $\cc\in\pgcl$ and $\FF\in\E$. We have:
    \begin{enumerate}
        \item 	
        \begin{align*}
            &\dwp{\cc}{\FF} \eeq \mylambda{\statea}
            \torew{\FF}\big((\cc,\statea)\big) \\
            &\qquad ~+~ \min_{\macta \in \mact(\cc,\statea)} \sum_{\execrel{(\cc,\statea)}{\macta}{\proba}{\confa'}} \proba \cdot
            \begin{cases}
                \FF(\stateb) & \text{if $\confa' = (\term,\stateb)$} \\
                \dwp{\cc'}{\FF}(\statea') & \text{if $\confa' = (\cc',\statea')$}
            \end{cases}
        \end{align*}
        \item 
        \begin{align*}
            \displaystyle
            &\awp{\cc}{\FF} \eeq
            \mylambda{\statea} \torew{\FF}\big((\cc,\statea)\big) \\
            &\qquad ~+~
            \max_{\macta \in \mact(\cc,\statea)} \sum_{\execrel{(\cc,\statea)}{\macta}{\proba}{\confa'}} \proba \cdot
            \begin{cases}
                \FF(\stateb) & \text{if $\confa' = (\term,\stateb)$} \\
                \awp{\cc'}{\FF}(\statea') & \text{if $\confa' = (\cc',\statea')$}
            \end{cases}
        \end{align*}
    \end{enumerate}
    Note that in the above case distinctions, we do not need to consider the case $\confa' = \opsink$ because $\opsink$ is not reachable from a configuration of the form $(\cc,\statea)$ in one step.
\end{lemma}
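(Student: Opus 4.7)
The plan is to proceed by structural induction on $\cc$, simultaneously for $\dwpsymbol$ and $\awpsymbol$ (the proofs differ only in $\min/\sqcap$ vs.\ $\max/\sqcup$). For each construct, I would compute the left-hand side using \Cref{tab:prelim:wp}, then compute the right-hand side by enumerating the outgoing transitions of $(\cc,\statea)$ from \Cref{fig:prelim:op_rules_tick}, reading off $\oprew$ per \Cref{def:oprew}, and applying $\torew{\FF}$ from \Cref{def:to_reach}. For the base cases $\SKIP$, $\ASSIGN{\vara}{\Axa}$, and $\TICK{\tickrew}$, the configuration $(\cc,\statea)$ has a single outgoing transition (action $\actn$, probability $1$) to a final configuration $(\term,\stateb)$; substituting yields $\FF$, $\FF\expsubst{\vara}{\Axa}$, and $\tickrew + \FF$ respectively, which matches the table directly. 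The branching constructs are essentially unfoldings of the appropriate sum: for $\NDCHOICE{\cc_1}{\cc_2}$ the two enabled actions $\actl,\actr$ make the $\min$/$\max$ in the right-hand side collapse to $\dwp{\cc_1}{\FF}\sqcap\dwp{\cc_2}{\FF}$ (resp.\ $\sqcup$); for $\PCHOICE{\cc_1}{p}{\cc_2}$ and $\ITE{\Bxa}{\cc_1}{\cc_2}$ the only enabled action is $\actn$, so $\min$/$\max$ is vacuous and the weighted sum matches directly (including the degenerate case $\cc_1=\cc_2$ of probabilistic choice, which requires a short separate check).

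The two interesting cases are sequential composition and loops. For $\cc = \COMPOSE{\cc_1}{\cc_2}$, the operational rules lift the transitions of $\cc_1$ in two ways: a step $\execrel{(\cc_1,\statea)}{\macta}{\proba}{(\term,\statea')}$ becomes $\execrel{(\COMPOSE{\cc_1}{\cc_2},\statea)}{\macta}{\proba}{(\cc_2,\statea')}$, and a step $\execrel{(\cc_1,\statea)}{\macta}{\proba}{(\cc_1',\statea')}$ becomes $\execrel{(\COMPOSE{\cc_1}{\cc_2},\statea)}{\macta}{\proba}{(\COMPOSE{\cc_1'}{\cc_2},\statea')}$. I would apply the induction hypothesis to $\cc_1$ with postexpectation $\dwp{\cc_2}{\FF}$ (not $\FF$) and then match termwise: the terminating branch contributes $\dwp{\cc_2}{\FF}(\statea')$ on both sides, while the non-terminating branch contributes $\dwp{\cc_1'}{\dwp{\cc_2}{\FF}}(\statea') = \dwp{\COMPOSE{\cc_1'}{\cc_2}}{\FF}(\statea')$ by the definition of $\dwpsymbol$ for sequential composition. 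For the first-step reward, I need the side identity $\oprew((\COMPOSE{\cc_1}{\cc_2},\statea)) = \oprew((\cc_1,\statea))$, which I would verify by case analysis on whether $\cc_1$ is (or begins with) a $\TICK{\cdot}$.

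For $\cc = \WHILEDO{\Bxa}{\cc'}$ no induction on a subterm is needed: the operational rules give a unique enabled action $\actn$ with probability $1$ leading either to $(\COMPOSE{\cc'}{\cc},\statea)$ when $\statea \models \Bxa$ or to $(\term,\statea)$ when $\statea \not\models \Bxa$, and $\oprew((\cc,\statea)) = 0$. Thus the right-hand side equals $\iverson{\Bxa}\cdot \dwp{\COMPOSE{\cc'}{\cc}}{\FF} + \iverson{\neg\Bxa}\cdot \FF$, which by the already-established sequential composition case equals $\iverson{\Bxa}\cdot \dwp{\cc'}{\dwp{\cc}{\FF}} + \iverson{\neg\Bxa}\cdot \FF = \dcharfun{\cc}{\FF}(\dwp{\cc}{\FF})$. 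Since $\dwp{\cc}{\FF} = \lfp\,\dcharfun{\cc}{\FF}$ is in particular a fixed point of $\dcharfun{\cc}{\FF}$, the right-hand side simplifies to $\dwp{\cc}{\FF}$, closing the case. The main obstacle will be the careful bookkeeping in sequential composition, specifically keeping track of how $\torew{\FF}$ on a $\COMPOSE{\cc_1}{\cc_2}$-configuration relates to $\torew{\dwp{\cc_2}{\FF}}$ on the $\cc_1$-configuration, since the two reward functions differ both in their \enquote{continuation} (on final configurations) and potentially in their value on the current configuration when $\cc_1$ begins with a $\TICK$.
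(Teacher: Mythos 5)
Your proposal is correct and follows essentially the same route as the paper's proof: structural induction on $\cc$ with direct computations for the atomic and branching constructs, the induction hypothesis applied to $\cc_1$ with postexpectation $\dwp{\cc_2}{\FF}$ in the sequential-composition case (together with the bookkeeping identity relating $\torew{\FF}$ on $(\COMPOSE{\cc_1}{\cc_2},\statea)$ to $\torew{\dwp{\cc_2}{\FF}}$ on $(\cc_1,\statea)$), and the loop case discharged via the fixed-point property of $\dcharfun{\cc}{\FF}$ without invoking the induction hypothesis on the loop body. Your explicit attention to the degenerate case $\cc_1 = \cc_2$ of probabilistic choice and to the $\oprew$ bookkeeping for programs beginning with $\TICK{\cdot}$ are points the paper's proof passes over more quickly, but they do not change the argument.
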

\begin{proof}
    By induction on $\cc$. We prove the claim for $\dwpsymbol$. The reasoning for $\awpsymbol$ is completely analogous. Now let $\statea\in\States$. \\

    \noindent
    \emph{The case $\cc = \SKIP$.} We have
    \begin{align*}
        & \dwp{\SKIP}{\FF}(\statea) \\
        \eeq & \FF(\statea)
        \tag{\Cref{tab:prelim:wp}} \\
        \eeq& \min_{\macta \in \mact(\cc,\statea)} \sum_{\execrel{(\cc,\statea)}{\macta}{\proba}{\confa'}} \proba \cdot
        \begin{cases}
            \FF(\stateb) & \text{if $\confa' = (\term,\stateb)$} \\
            \dwp{\cc'}{\FF}(\statea') & \text{if $\confa' = (\cc',\statea')$}~.
        \end{cases}
        \tag{$\mact(\SKIP,\statea) = \{\actn\}$ and $\execrel{(\SKIP,\statea)}{\actn}{1}{(\term,\statea)}$ by \Cref{fig:prelim:op_rules_tick}} \\
        \eeq& \torew{\FF}((\cc,\statea)) + \min_{\macta \in \mact(\cc,\statea)} \sum_{\execrel{(\cc,\statea)}{\macta}{\proba}{\confa'}} \proba \cdot
        \begin{cases}
            \FF(\stateb) & \text{if $\confa' = (\term,\stateb)$} \\
            \dwp{\cc'}{\FF}(\statea') & \text{if $\confa' = (\cc',\statea')$}~.
        \end{cases}
        \tag{Because $\torew{\FF}((\SKIP,\statea)) = 0$ by \Cref{def:oprew,def:torew}}
    \end{align*}
    \noindent
    \emph{The case $\cc = \ASSIGN{\vara}{\Axa}.$} We have 
    \begin{align*}
        & \dwp{\ASSIGN{\vara}{\Axa}}{\FF}(\statea) \\
        \eeq & \FF(\statea\statesubst{\vara}{\Axa(\statea)})
        \tag{\Cref{tab:prelim:wp}} \\
        \eeq& \min_{\macta \in \mact(\cc,\statea)} \sum_{\execrel{(\cc,\statea)}{\macta}{\proba}{\confa'}} \proba \cdot
        \begin{cases}
            \FF(\stateb) & \text{if $\confa' = (\term,\stateb)$} \\
            \dwp{\cc'}{\FF}(\statea') & \text{if $\confa' = (\cc',\statea')$}~.
        \end{cases}
        \tag{$\mact(\ASSIGN{\vara}{\Axa},\statea) = \{\actn\}$ and $	\execrel{(\ASSIGN{x}{\Axa},\statea)}{\actn}{1}{(\term,\statea\statesubst{x}{\Axa(\statea)})}$ by \Cref{fig:prelim:op_rules_tick}} \\
        \eeq& \torew{\FF}((\cc,\statea)) +  \min_{\macta \in \mact(\cc,\statea)} \sum_{\execrel{(\cc,\statea)}{\macta}{\proba}{\confa'}} \proba \cdot
        \begin{cases}
            \FF(\stateb) & \text{if $\confa' = (\term,\stateb)$} \\
            \dwp{\cc'}{\FF}(\statea') & \text{if $\confa' = (\cc',\statea')$}~.
        \end{cases}
        \tag{Because $\torew{\FF}((\SKIP,\statea)) = 0$ by \Cref{def:oprew,def:torew}}
    \end{align*}
    \noindent
    \emph{The case $\cc = \TICK{\tickrew}.$} We have 
    \begin{align*}
        & \dwp{\TICK{\tickrew}}{\FF}(\statea) \\
        \eeq & \tickrew + \FF(\statea) 
        \tag{\Cref{tab:prelim:wp}} \\
        \eeq& \tickrew + \min_{\macta \in \mact(\cc,\statea)} \sum_{\execrel{(\cc,\statea)}{\macta}{\proba}{\confa'}} \proba \cdot
        \begin{cases}
            \FF(\stateb) & \text{if $\confa' = (\term,\stateb)$} \\
            \dwp{\cc'}{\FF}(\statea') & \text{if $\confa' = (\cc',\statea')$}~.
        \end{cases}
        \tag{$\mact(\TICK{\tickrew},\statea) = \{\actn\}$ and $	\execrel{(\TICK{\tickrew},\statea)}{\actn}{1}{(\term,\statea)}$ by \Cref{fig:prelim:op_rules_tick}} \\
        \eeq& \torew{\FF}((\cc,\statea)) +  \min_{\macta \in \mact(\cc,\statea)} \sum_{\execrel{(\cc,\statea)}{\macta}{\proba}{\confa'}} \proba \cdot
        \begin{cases}
            \FF(\stateb) & \text{if $\confa' = (\term,\stateb)$} \\
            \dwp{\cc'}{\FF}(\statea') & \text{if $\confa' = (\cc',\statea')$}~.
        \end{cases}
        \tag{Because $\torew{\FF}((\TICK{\tickrew},\statea)) = \tickrew$ by \Cref{def:oprew,def:torew}}
    \end{align*}
    \noindent
    \emph{The case $\cc = \COMPOSE{\cc_1}{\cc_2}$.} First observe that by the rules in \Cref{fig:prelim:op_rules_tick} we encounter exactly one of the following cases:
    {\normalsize
        \begin{enumerate}
            \item We have $\mact(\cc_1, \statea) = \{\actn\}$ and $\execrel{\cc_1,\statea}{\actn}{1}{\term,\stateb}$. Hence, also $\mact(\COMPOSE{\cc_1}{\cc_2}, \statea) = \{\actn\}$ and $\execrel{(\COMPOSE{\cc_1}{\cc_2},\statea)}{\actn}{1}{(\cc_2,\stateb)}$.
            \item For every $\macta \in \mact(\cc_1, \statea)$, all configurations in $\msuccs{\macta}(\cc_1, \statea)$ are of the form $(\cc_1',\statea')$. Hence, for every $\macta \in \mact(\COMPOSE{\cc_1}{\cc_2}, \statea)$, all configurations in $\msuccs{\macta}(\COMPOSE{\cc_1}{\cc_2}, \statea)$ are of the form $(\COMPOSE{\cc_1'}{\cc_2}, \statea')$.\color{white}
        \end{enumerate}
    }
    We proceed by distinguishing these two cases. For the first case, we have 
    \begin{align*}
        &\dwp{ \COMPOSE{\cc_1}{\cc_2}}{\FF} (\statea) \\
        \eeq & \dwp{\cc_1}{\dwp{\cc_2}{\FF}}(\statea)
        \tag{\Cref{tab:prelim:wp}} \\
        \eeq& \torew{\dwp{\cc_2}{\FF}}\big((\cc_1, \statea)\big) \\
        & {+} \min_{\macta \in \mact(\cc_1,\statea)} \sum_{\execrel{(\cc_1,\statea)}{\macta}{\proba}{\confa'}} \proba \cdot
        \begin{cases}
            \dwp{\cc_2}{\FF}(\stateb) & \text{if $\confa' = (\term,\stateb)$} \\
            \dwp{\cc'}{\dwp{\cc_2}{\FF}}(\statea') & \text{if $\confa' = (\cc',\statea')$}~.
        \end{cases}
        \tag{I.H.} \\
        \eeq& \torew{\dwp{\cc_2}{\FF}}\big((\cc_1, \statea)\big) + \dwp{\cc_2}{\FF}(\stateb) 
        \tag{assumption that $\mact(\cc_1, \statea) = \{\actn\}$ and $\execrel{\cc_1,\statea}{\actn}{1}{\term,\stateb}$} \\
        \eeq& \torew{\FF}\big((\COMPOSE{\cc_1}{\cc_2}, \statea)\big) + \dwp{\cc_2}{\FF}(\stateb) 
        \tag{\Cref{def:oprew,def:torew}} \\
        \eeq& \torew{\FF}\big((\COMPOSE{\cc_1}{\cc_2}, \statea)\big) \\
        & + \min_{\macta \in \mact(\COMPOSE{\cc_1}{\cc_2},\statea)} \sum_{\execrel{(\COMPOSE{\cc_1}{\cc_2},\statea)}{\macta}{\proba}{\confa'}} \proba \cdot
        \begin{cases}
            \FF(\stateb)& \text{if $\confa' = (\term,\stateb)$} \\
            \dwp{\cc'}{\FF}(\statea') & \text{if $\confa' = (\cc',\statea')$}~.
        \end{cases}
        \tag{assumption that $\mact(\COMPOSE{\cc_1}{\cc_2}, \statea) = \{\actn\}$ and $\execrel{(\COMPOSE{\cc_1}{\cc_2},\statea)}{\actn}{1}{(\cc_2,\stateb)}$}
    \end{align*}
    For the second case, we have 
    \begin{align*}
        &\dwp{ \COMPOSE{\cc_1}{\cc_2}}{\FF} (\statea) \\
        \eeq & \dwp{\cc_1}{\dwp{\cc_2}{\FF}}(\statea)
        \tag{\Cref{tab:prelim:wp}} \\
        \eeq& \torew{\dwp{\cc_2}{\FF}}\big((\cc_1, \statea)\big) \\
        &+ \min_{\macta \in \mact(\cc_1,\statea)} \sum_{\execrel{(\cc_1,\statea)}{\macta}{\proba}{\confa'}} \proba \cdot
        \begin{cases}
            \dwp{\cc_2}{\FF}(\stateb) & \text{if $\confa' = (\term,\stateb)$} \\
            \dwp{\cc'}{\FF}(\statea') & \text{if $\confa' = (\cc',\statea')$}~.
        \end{cases}
        \tag{I.H.} \\
        \eeq& \torew{\dwp{\cc_2}{\FF}}\big((\cc_1, \statea)\big) \\
        &+  \min_{\macta \in \mact(\cc_1,\statea)} \sum_{\execrel{(\cc_1,\statea)}{\macta}{\proba}{(\cc_1',\statea')}} \proba \cdot
        \dwp{\cc_1'}{\dwp{\cc_2}{\FF}}(\statea') 
        \tag{assumption} \\
        \eeq & \torew{\FF}\big((\COMPOSE{\cc_1}{\cc_2}, \statea)\big) \\
        &+  \min_{\macta \in \mact(\cc_1,\statea)} \sum_{\execrel{(\cc_1,\statea)}{\macta}{\proba}{(\cc_1',\statea')}} \proba \cdot
        \dwp{\COMPOSE{\cc_1'}{\cc_2}}{\FF}(\statea') 
        \tag{\Cref{def:oprew,def:torew} and \Cref{tab:prelim:wp}}\\
        \eeq &\torew{\FF}\big((\COMPOSE{\cc_1}{\cc_2}, \statea)\big) \\
        &+ \min_{\macta \in \mact(\COMPOSE{\cc_1}{\cc_2},\statea)} \sum_{\execrel{(\COMPOSE{\cc_1}{\cc_2},\statea)}{\macta}{\proba}{\confa'}} \proba \cdot
        \begin{cases}
            \FF(\stateb)& \text{if $\confa' = (\term,\stateb)$} \\
            \dwp{\cc'}{\FF}(\statea') & \text{if $\confa' = (\cc',\statea')$}~.
        \end{cases}
        \tag{assumption}
    \end{align*}
    The cases $\cc = \PCHOICE{\cc_1}{\proba}{\cc_2}$, $\cc = \NDCHOICE{\cc_1}{\cc_2}$, and $\cc = \IFSYMBOL \ldots$ follow immediately from the I.H. \\
    \tw{still ok? Also note that $\torew{\FF}\big((\cc, \statea)\big) = 0$ for these programs.}

    \noindent
    \emph{The case $\cc = \WHILEDO{\Bxa}{\cc'}$.} Since $\dwp{ \WHILEDO{\Bxa}{\cc'}}{\FF}$ is a fixpoint of $\dcharfun{\cc}{\FF}$, we have 
    \begin{align}
        \label{eqn:lem:app:prelim:wp:bellman_compliant1}
        \dwp{\WHILEDO{\Bxa}{\cc'}}{\FF} \eeq \iverson{\Bxa}\cdot \dwp{\COMPOSE{\cc'}{\WHILEDO{\Bxa}{\cc'}}}{\FF} + \iverson{\neg\Bxa}\cdot \FF~.
    \end{align}
    Now distinguish the cases $\statea\models\Bxa$ and $\statea\models\neg\Bxa$. If $\statea\models\neg\Bxa$, then 
    \begin{align*}
        &\dwp{\WHILEDO{\Bxa}{\cc'}}{\FF}(\statea) \\
        \eeq &\FF(\statea)
        \tag{\Cref{eqn:lem:app:prelim:wp:bellman_compliant1}} \\
        \eeq& \min_{\macta \in \mact(\WHILEDO{\Bxa}{\cc'},\statea)} \torew{\FF}\big((\cc,\statea)\big) \\
        &\qquad ~+~ \sum_{\execrel{(\WHILEDO{\Bxa}{\cc'},\statea)}{\macta}{\proba}{\confa'}} \proba \cdot
        \begin{cases}
            \FF(\stateb) & \text{if $\confa' = (\term,\stateb)$} \\
            \dwp{\cc''}{\FF}(\statea') & \text{if $\confa' = (\cc'',\statea')$}~,
        \end{cases}
    \end{align*}
    where the last equation holds since $\torew{\FF}\big((\WHILEDO{\Bxa}{\cc'},\statea)\big) = 0$ and $\statea\not\models\Bxa$ implies $\mact(\WHILEDO{\Bxa}{\cc'},\statea) = \{\actn\}$ and $\execrel{(\WHILEDO{\Bxa}{\cc'},\statea)}{\actn}{1}{(\term,\statea)}$ by \Cref{fig:prelim:op_rules_tick}.
    
    If $\statea\models\Bxa$, then 
    \begin{align*}
        %		&\dwp{\WHILEDO{\Bxa}{\cc'}}{\FF}(\statea) \\
        %		%
        %		\eeq &\dwp{ \WHILEDO{\Bxa}{\cc'}}{\FF}(\statea)
        %		\tag{\Cref{eqn:lem:app:prelim:wp:bellman_compliant1}} \\
        %		%
        &\min_{\macta \in \mact(\WHILEDO{\Bxa}{\cc'},\statea)} \torew{\FF}((\cc,\statea)) \\
        &\qquad~+~ \sum_{\execrel{(\WHILEDO{\Bxa}{\cc'},\statea)}{\macta}{\proba}{\confa'}} \proba \cdot
        \begin{cases}
            \FF(\stateb) & \text{if $\confa' = (\term,\stateb)$} \\
            \dwp{\cc''}{\FF}(\statea') & \text{if $\confa' = (\cc'',\statea')$}
        \end{cases} \\
        \eeq & \dwp{\COMPOSE{\cc'}{\WHILEDO{\Bxa}{\cc'}}}{\FF}(\statea)
        \tag{see below}\\
        \eeq & \dwp{\WHILEDO{\Bxa}{\cc'}}{\FF}(\statea)~,
        \tag{\Cref{eqn:lem:app:prelim:wp:bellman_compliant1}}
    \end{align*}
    where the one but last step holds since $\torew{\FF}((\WHILEDO{\Bxa}{\cc'},\statea)) = 0$ and $\statea\models\Bxa$ implies $\mact(\WHILEDO{\Bxa}{\cc'},\statea) = \{\actn\}$ $\execrel{(\WHILEDO{\Bxa}{\cc},\statea)}{\actn}{1}{(\COMPOSE{\cc'}{\WHILEDO{\Bxa}{\cc'}},\statea)}$ by \Cref{fig:prelim:op_rules_tick}.
    
    Now distinguish the cases $\statea\models\Bxa$ and $\statea\models\neg\Bxa$. If $\statea\models\neg\Bxa$, then 
    \begin{align*}
        &\dwp{\WHILEDO{\Bxa}{\cc'}}{\FF}(\statea) \\
        \eeq &\FF(\statea)
        \tag{\Cref{eqn:lem:app:prelim:wp:bellman_compliant1}} \\
        \eeq &\torew{\FF}(\WHILEDO{\Bxa}{\cc'}) + \FF(\statea)
        \tag{$\torew{\FF}(\WHILEDO{\Bxa}{\cc'}) = 0$ by \Cref{def:oprew,def:torew}} \\
        \eeq& \torew{\FF}(\WHILEDO{\Bxa}{\cc'}) + 
        \min_{\macta \in \mact(\WHILEDO{\Bxa}{\cc'},\statea)}\\ &\quad \sum_{\execrel{(\WHILEDO{\Bxa}{\cc'},\statea)}{\macta}{\proba}{\confa'}} \proba \cdot
        \begin{cases}
            \FF(\stateb) & \text{if $\confa' = (\term,\stateb)$} \\
            \dwp{\cc'}{\FF}(\statea') & \text{if $\confa' = (\cc',\statea')$}~.
        \end{cases}
        \tag{$\mact(\WHILEDO{\Bxa}{\cc'},\statea) = \{\actn\}$, $\execrel{(\WHILEDO{\Bxa}{\cc},\statea)}{\actn}{1}{(\term,\statea)}$}
    \end{align*}
\end{proof}

\begin{lemma}
	\label{app:lem:op_decomp}
	Let $\someopsymbol \in \{\dopsymbol,\aopsymbol\}$. For all $\cc_1,\cc_2\in\pgcl$ and all $\FF\in \E$, we have
	\begin{align*}
		\someop{\cc_1}{\someop{\cc_2}{\FF}} \eeleq \someop{\COMPOSE{\cc_1}{\cc_2}}{\FF}~.
	\end{align*}
\end{lemma}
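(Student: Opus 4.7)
The plan is to carry out the proof for the demonic case $\someopsymbol = \dopsymbol$ and then argue that the angelic case follows by a dual construction (with an extra approximation step to handle non-attained suprema). Fixing an initial program state $\statea$ and an arbitrary scheduler $\msched \in \mscheds$, it will suffice to establish $\dop{\cc_1}{\dop{\cc_2}{\FF}}(\statea) \leq \newmexprews{\msched}{\torew{\FF}}{(\COMPOSE{\cc_1}{\cc_2}, \statea)}$ and then take the infimum over $\msched$ to obtain the claim.

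The crux will be a path decomposition based on the sequential-composition rules from \Cref{fig:prelim:op_rules_tick}.(5): every finite path of $\opmdp$ starting at $(\COMPOSE{\cc_1}{\cc_2}, \statea)$ uniquely factors as either a sequence of \enquote{still running} configurations of the form $(\COMPOSE{\cc_1^{(i)}}{\cc_2}, \statea_i)$, or such a prefix followed by a hand-off step into some $(\cc_2, \stateb)$ and a subsequent $\cc_2$-path. Stripping the trailing \enquote{$\COMPOSE{\cdot}{\cc_2}$} identifies the prefixes with $\cc_1$-paths from $(\cc_1, \statea)$, with matching transition probabilities and per-configuration rewards (indeed $\torew{\FF}$ on $(\COMPOSE{\cc_1^{(i)}}{\cc_2}, \cdot)$ equals $\oprew$ on $(\cc_1^{(i)}, \cdot)$, since the reward function only inspects the head command). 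The scheduler $\msched$ thus induces a scheduler $\msched_1$ for the $(\cc_1, \statea)$-rooted sub-MDP and, for each hand-off history ending in a state $\stateb$, a residual scheduler $\msched_2^{\stateb}$ for the $(\cc_2, \stateb)$-rooted sub-MDP.

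Using \Cref{lem:mdp:exrew_characterization} to express expected rewards as countable sums over finite paths and applying the factorization above (with care for the non-terminating-$\cc_1$ paths, which contribute only $\cc_1$-rewards), I would derive the identity
\[
\newmexprews{\msched}{\torew{\FF}}{(\COMPOSE{\cc_1}{\cc_2}, \statea)} \qeq \newmexprews{\msched_1}{\torew{\FG}}{(\cc_1, \statea)}~,
\]
where $\FG$ is the expectation with $\FG(\stateb) = \newmexprews{\msched_2^{\stateb}}{\torew{\FF}}{(\cc_2, \stateb)}$. Since $\dop{\cc_2}{\FF} \eleq \FG$ (by definition of the infimum) and the expected-reward functional is monotone in the postexpectation, the right-hand side dominates $\newmexprews{\msched_1}{\torew{\dop{\cc_2}{\FF}}}{(\cc_1, \statea)}$, which in turn dominates $\dop{\cc_1}{\dop{\cc_2}{\FF}}(\statea) = \inf_{\msched_1'} \newmexprews{\msched_1'}{\torew{\dop{\cc_2}{\FF}}}{(\cc_1, \statea)}$, concluding the $\dopsymbol$ case.

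The main obstacle will be the rigorous factorization step: carefully matching the countable sum over composition-paths with nested sums over $\cc_1$-prefixes and $\cc_2$-suffixes while verifying that the residual schedulers $\msched_2^{\stateb}$ inherit well-defined behavior from $\msched$ for every hand-off history. For the $\aopsymbol$ case, the same factorization goes through in reverse: given any $\msched_1$ and, for each $\stateb$, an approximating sequence $\msched_2^{\stateb, n}$ with $\newmexprews{\msched_2^{\stateb, n}}{\torew{\FF}}{(\cc_2, \stateb)} \nearrow \aop{\cc_2}{\FF}(\stateb)$, one assembles composite schedulers $\msched^n$ whose expected rewards converge, by monotone convergence of the countable sums, to $\newmexprews{\msched_1}{\torew{\aop{\cc_2}{\FF}}}{(\cc_1, \statea)}$. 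Since suprema over schedulers need not be attained (cf.\ \Cref{fig:no-opt-sched}), this uniform-in-$\stateb$ approximation step is genuinely the delicate part of the angelic direction, but can be cleanly discharged using \Cref{thm:mdp:bellman_correct}.\ref{thm:mdp:bellman_correct2} to reduce to the bounded-step rewards of \Cref{lem:mdp:bellman-step} and then invoking Kleene's theorem (\Cref{thm:prelim:domain_theory:kleene}).
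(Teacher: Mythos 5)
Your proposal is sound in outline but takes a genuinely different route from the paper. The paper never decomposes paths or schedulers: it observes that $\dop{\cc_1}{\dop{\cc_2}{\FF}}(\statea)$ equals $\bigl(\supremum\setcomp{\Phi^n(0)}{n\in\Nats}\bigr)\bigl((\cc_1,\statea)\bigr)$ for the min-Bellman operator $\Phi = \mopmin{\opmdp}{\torew{\dop{\cc_2}{\FF}}}$ (via \Cref{def:prelim:wp:op}, \Cref{thm:mdp:bellman_cont} and \Cref{thm:mdp:bellman_correct}), and then proves $\Phi^n(0)\bigl((\cc_1,\statea)\bigr) \leq \dop{\COMPOSE{\cc_1}{\cc_2}}{\FF}(\statea)$ by induction on $n$, unfolding a single Bellman step, invoking the induction hypothesis on non-terminal successors and the bound $\Phi^n(0)\bigl((\term,\stateb)\bigr) \leq \dop{\cc_2}{\FF}(\stateb)$ on terminal ones, and finally recognizing the resulting expression as one Bellman step of the composed program's value. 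That argument sidesteps entirely the residual-scheduler construction and the countable-sum rearrangements that form the technical core of your plan. What your approach buys in exchange is a more operational picture, and --- for the demonic direction --- independence from the least-fixed-point characterization: you only need the definition of $\dopsymbol$ as an infimum over schedulers, \Cref{lem:mdp:exrew_characterization}, and monotonicity of expected rewards in the reward function.

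Two caveats. First, your displayed identity with a single per-state continuation $\FG(\stateb)$ is not literally correct: the residual scheduler after a hand-off depends on the entire history, so distinct histories reaching the same $(\cc_2,\stateb)$ may induce different continuation values. This does not break the proof --- every such value dominates $\dop{\cc_2}{\FF}(\stateb)$, so the inequality you actually need survives --- but you should either index the continuations by histories or state the step as an inequality from the outset. Second, your angelic case ultimately retreats to the step-bounded/Kleene machinery of \Cref{thm:mdp:bellman_correct} and \Cref{lem:mdp:bellman-step}, i.e.\ essentially to the paper's method; at that point one may as well run the paper's induction on Bellman iterates for both cases uniformly, since for $\aopsymbol$ the operator is the max-Bellman operator and the induction is verbatim the same.
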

\begin{proof}
    \newcommand{\phiabbr}{\Phi}%
    \newcommand{\rewabbr}{\torew{}}%
    To enhance readability, we write $\phiabbr = \mopmin{\opmdp}{\torew{\dop{\cc_2}{\FF}}}$ and $\rewabbr = \torew{\dop{\cc_2}{\FF}}$ throughout this proof.
    We prove the claim for $\dopsymbol$. The reasoning for $\aopsymbol$ is completely analogous.
    First notice that by \Cref{def:prelim:wp:op} on page \pageref{def:prelim:wp:op} and \Cref{thm:mdp:bellman_cont} on page \pageref{thm:mdp:bellman_cont}, we have for all $\statea\in\States$,
    \begin{align*}
        &\dop{\cc_1}{\dop{\cc_2}{\FF}}(\statea)\\
        \eeq & \big( \bigsqcup \setcomp{\phiabbr^n(0)}{n\in\Nats}\big)\big((\cc_1,\statea)\big)~.
    \end{align*}
    It hence suffices to prove that for all $n \in \Nats$, all $\cc_1\in\pgcl$, all $\FF \in \E$, and all $\statea\in\States$,
    \[
    \phiabbr^n(0)\big((\cc_1,\statea)\big) \lleq \dop{\COMPOSE{\cc_1}{\cc_2}}{\FF}(\statea)~.
    \]
    We proceed by induction on $n$. The base case $n=0$ is trivial. For the induction step, consider the following:
    \begin{align*}
        & \phiabbr^{n+1}(0)\big((\cc_1,\statea)\big) \\
        \eeq &\phiabbr \big( \phiabbr^n(0) \big) \big((\cc_1,\statea)\big) \\
        \eeq & \rewabbr\big((\cc_1,\statea)\big) + \min_{\macta \in \mact(\cc_1,\statea)} \Big[ \\ & \sum_{\substack{\execrel{(\cc_1,\statea)}{\macta}{\proba}{(\cc', \statea')} \\ \cc' \neq \term}} \proba \cdot \phiabbr^n(0)\big((\cc',\statea')\big)
        +
        \sum_{\substack{\execrel{(\cc_1,\statea)}{\macta}{\proba}{(\term, \stateb)}}}  \proba \cdot \phiabbr^n(0)\big((\term,\stateb)\big) \Big]
        \tag{\Cref{def:prelim:opmdp} and \Cref{def:mdp:bellman}} \\
        \lleq & \rewabbr\big((\cc_1,\statea)\big) + \min_{\macta \in \mact(\cc_1,\statea)} \Big[
        \\
        &\qquad \sum_{\substack{\execrel{(\cc_1,\statea)}{\macta}{\proba}{(\cc', \statea')}
        \\
        \cc' \neq \term}} \proba \cdot \dop{\COMPOSE{\cc'}{\cc_2}}{\FF}(\statea')
        \\
        & \qquad\qquad + \sum_{\substack{\execrel{(\cc_1,\statea)}{\macta}{\proba}{(\term, \stateb)}}}  \proba \cdot \dop{\cc_2}{\FF}(\stateb) \Big]
        \tag{I.H.\ and using that $\phiabbr^n(0)\big((\term,\stateb)\big) \leq \rewabbr\big((\term,\stateb)\big) = \dop{\cc_2}{\FF}(\stateb)$} \\
        \eeq & \torew{\FF}\big((\COMPOSE{\cc_1}{\cc_2},\statea)\big) + \min_{\macta \in \mact(\COMPOSE{\cc_1}{\cc_2},\statea)} \Big[ \\
        &\qquad \sum_{\substack{\execrel{(\COMPOSE{\cc_1}{\cc_2},\statea)}{\macta}{\proba}{(\COMPOSE{\cc'}{\cc_2}, \statea')} \\ \cc' \neq \term}} \proba \cdot \dop{\COMPOSE{\cc'}{\cc_2}}{\FF}(\statea')
        \\
        &\qquad\qquad +\sum_{\substack{\execrel{(\COMPOSE{\cc_1}{\cc_2},\statea)}{\macta}{\proba}{(\cc_2, \stateb)}}}  \proba \cdot \dop{\cc_2}{\FF}(\stateb) \Big]
        \tag{\Cref{def:prelim:opmdp}} \\
        \eeq & \dop{\COMPOSE{\cc_1}{\cc_2}}{\FF} \tag{\Cref{def:mdp:bellman}}
        ~.
    \end{align*}
\end{proof}

\end{appendix}

\end{document}